\def\senbun#1(#2)#3({\@senbun(#2)(}
\def\@senbun(#1,#2)(#3,#4){%
   \@tempdima#1\p@ \advance\@tempdima#3\p@
   \divide\@tempdima\tw@
   \@tempdimb#2\p@ \advance\@tempdimb#4\p@
   \divide\@tempdimb\tw@
   \edef\@senbun@temp{\noexpand\qbezier(#1,#2)%
      (\strip@pt\@tempdima,\strip@pt\@tempdimb)(#3,#4)}%
   \@senbun@temp}
\newcommand{\ASY}{{\sc Asynch}}
\newcommand{\FSY}{{\sc Fsynch}}
\newcommand{\SSY}{{\sc Ssynch}}
\newcommand{\RSY}{{\sc Rsynch}}
\newcommand{\LU}{{\mathcal{LUMI}}} 
\newcommand{\LUM}{{\mathcal{LUMI}}} 
\newcommand{\FS}{{\mathcal{FSTA}}} 
\newcommand{\FC}{{\mathcal{FCOM}}} 
\newcommand{\OB}{{\mathcal{OBLOT}}} 
\newcommand{\N}{{\rm I\kern-.22em N}} 
\newcommand{\Z}{{\sf Z\kern-.42em Z}} 
\newcommand{\R}{{\rm I\kern-.22em R}}
\newcommand{\LK}{{\mathit{Look}}}
\newcommand{\CP}{{\mathit{Comp}}}
\newcommand{\M}{{\mathit{Move}}}
\newcommand{\T}{t}
\newcounter{Codeline}
\newtheorem{theoremduplicate}{Theorem}
\begin{document}
%\maketitle
%
\title{Efficient Self-stabilizing Simulations of Energy-Restricted 
Mobile Robots by Asynchronous Luminous Mobile Robots\thanks{This work was supported in part by JSPS KAKENHI Grant Number~20K11685, and~21K11748.}}
\titlerunning{Efficient Self-stabilizing Simulations}
% If the paper title is too long for the running head, you can set
% an abbreviated paper title here
%
\author{Keita Nakajima\inst{1}\orcidID{0009-0003-2535-653X} %\and Naomasa Kohara\inst{2}\orcidID{}
 \and
Kaito Takase\inst{2}\orcidID{0009-0002-9757-5371}
\\ \and
Koichi Wada\inst{2}\orcidID{0000-0002-5351-1459}
}
\authorrunning{K. Nakajima et al.}
% First names are abbreviated in the running head.
% If there are more than two authors, 'et al.' is used.
%
\institute{Tokyo Institute of Technology, Tokyo, Japan\\
\email{nakajima.k.au@m.titech.ac.jp} \and
Hosei University, Tokyo, Japan \\
\email{%naomasa.kohara.
%4k@stu.hosei.ac.jp, 
kaito.takase.6z@stu.hosei.ac.jp, wada@hosei.ac.jp}
}
\maketitle              % typeset the header of the contribution
\begin{abstract}
In this study, we explore efficient simulation implementations to demonstrate computational equivalence across various models of autonomous mobile robot swarms. Our focus is on \RSY, a scheduler designed for energy-restricted robots, which falls between \FSY\ and \SSY. We propose efficient protocols for simulating $n(\geq 2)$ luminous ($\LU$) robots operating in \RSY\ using $\LU$ robots in \SSY\ or \ASY. Our contributions are twofold:
\begin{enumerate}
\item[(1)] We introduce protocols that simulate $\LU$ robots in \RSY\ using $4k$ colors in \SSY\ and $5k$ colors in \ASY, for algorithms that employ $k$ colors. This approach notably reduces the number of colors needed for \SSY\ simulations of \RSY, compared to previous efforts. Meanwhile, the color requirement for \ASY\ simulations remains consistent with previous \ASY\ simulations of \SSY, facilitating the simulation of \RSY\ in \ASY.
\item[(2)] We establish that for $n=2$, \RSY\ can be optimally simulated in \ASY\ using a minimal number of colors.
\end{enumerate}
Additionally, we confirm that all our proposed simulation protocols are self-stabilizing, ensuring functionality from any initial configuration.
%In this paper, we discuss efficient implementations for simulations aimed at demonstrating the computational equivalence among different models of autonomous mobile robot swarms. In particular, we focus on \RSY, a scheduler introduced to define energy-restricted robots and located between \FSY\ and \SSY, and propose efficient protocols to simulate $n(\geq2)$ $\LU$ robots(luminous robots) operating in \RSY\ using  $\LU$ robots operating in \SSY\ or \ASY. Specifically, (1) we provide protocols to simulate algorithms that use $k$ colors on $\LU$ operating in \RSY\ with $4k$ colors (or $5k$ colors) in \SSY (or \ASY). The number of colors used in \SSY\ simulations of \RSY\ significantly reduces the colors required in previous results. The number of colors used in \ASY\ simulations of \RSY\ is the same as that used in previous \ASY\ simulations of \SSY, making it possible to simulate \RSY\ in \ASY. (2) We demonstrate that for n=2, \RSY\ can be simulated in \ASY\ with an optimal number of colors. Furthermore, we show that all simulation protocols are self-stabilizing, that is,  they can work from any initial configuration. 
\keywords{Autonomous mobile robots \and luminous robots \and simulation \and energy-restricted robots \and self-stabilization}
\end{abstract}
\newpage
\section{Introduction}

%\paragraph*{Background and Motivation}
\subsection{Background and Motivation}
The computational issues of autonomous mobile entities operating  in a Euclidean space in $\mathit{Look}$-$\mathit{Compute}$-$\mathit{Move}$ ($\mathit{LCM}$) cycles
have been the subject of extensive research in  distributed computing. 
 In the $\mathit{Look}$ phase, an entity, viewed as a point and usually called {\em robot},  obtains a snapshot of the space; in  the $\mathit{Compute}$ phase
it  executes its algorithm (the same for all robots) using the snapshot as input; it then moves towards the computed destination in the $\mathit{Move}$ phase.
Repeating these cycles, the robots can collectively perform some tasks and solve some problems. 
The research interest has been on determining  the impact that
{\em internal}  capabilities (e.g., memory, communication) and {\em external}
 conditions (e.g. synchrony, activation scheduler) have on the solvability of a problem.

 In the most common model, $\OB$, in addition to the standard assumptions of {\em anonymity} and {\em uniformity} (robots have no IDs and run identical algorithms),  the  robots are 
 {\em oblivious} (no persistent memory to record information of previous cycles)
 and  {\em silent} (without explicit means of communication). 
 Computability in this model has been the object of intensive research since its
  introduction in  \cite{SY}. 
 Extensive investigations have been carried out to clarify the computational limitations and powers of these robots for basic coordination tasks such as Gathering (e.g., \cite{AP,AOSY,BDT,CDN,CFPS,CP,FPSW05,ISKIDWY,SY}), Pattern Formation (e.g., \cite{FPSW08,FYOKY,SY,YS,YUKY}), Flocking (e.g., \cite{CG,GP,SIW}).%; for a recent account of the state of the art on some of these problems,  see \cite{FPS19} and the chapters therein.
% The restrictions created by the absence of  persistent memory and the incapacity of
% explicit communication severely limits what the robots can do and renders complex
% and difficult  for them to perform the tasks they can do.
%\color{black}

A model which provides robots with  persistent memory, albeit limited,  and
communication means is the $\LU$ model, formally  defined and analyzed  in \cite{DFPSY}, following a suggestion in \cite{PD}.
In this model, each robot is equipped with a constant-sized memory (called {\em light}),
 whose value (called {\em color}) can be set during the {\em Compute} phase. The light  is  visible to all the robots and is persistent in the sense that it is not automatically reset at the end of a cycle. Hence,  these luminous robots  are capable  of both remembering and communicating a constant number of bits in each cycle. %\color{blue}There is a lot of research work on the design of algorithms and the feasibility of problems for luminous  robots  (e.g., \cite{sssbhagat17,DFPSY,LFCPSV,FSVY,HDT,OWD,OWK,SABM18,TWK,V}); for a recent survey, see \cite{DLV19}.

 An important result is that,  despite these limitations, the simultaneous presence  of
 both persistent memory and communication renders
luminous robots strictly more powerful than oblivious robots \cite{DFPSY}.
This, in turns, has opened the question about the individual computational power of the two internal capabilities, memory and communication,
and motivated the investigations on 
two sub-models of $\LU$: $\FS$ where 
the robots have a constant-size persistent memory but are silent, 
and $\FC$,  where robots can communicate a constant number 
of bits but are oblivious (e.g., see \cite{BFKPSW22,apdcm,FSVY,FSW19,OWD,OWK}).  

%Indeed, by exploring and determining the computational power of the robots
% within each of these four models $\OB$, $\FS$, $\FC$  and $\LU$,  and
% with respect to each other, it has been possible to clarify the individual computational 
% power of the two internal capabilities, memory, and communication
% (e.g., \cite{apdcm,BFKPSW22,FSW19}). 
 
%------------
%\subsubsection{A/synchrony}
%------------
\ \\
%\noindent {\bf A/Synchrony.}\ \
%\color{blue}
All these studies across various models have highlighted the crucial role played by two interrelated {\em external} factors:
 the level of synchronization and the activation schedule provided by the system.
 As in other types of distributed computing systems, there are two different settings; the 
synchronous and the asynchronous settings.
%\color{black}
In the {\em synchronous} (also called  {\em semi-synchronous})   (\SSY) setting, introduced in \cite{SY}, time   is divided into discrete 
intervals, called {\em rounds}. In each round, an arbitrary but nonempty subset 
of the robots is activated, and they simultaneously perform
exactly one atomic $\LK$-$\CP$-$\M$ cycle. The selection of which robots are 
activated at a given round is made by an adversarial scheduler, constrained only 
to be fair, i.e., every robot is activated infinitely often. 
Weaker form of synchronous adversaries have also been introduced and investigated. The most important and 
extensively studied is the {\em fully-synchronous} (\FSY) scheduler, which activates all the robots  in every round.
%\footnote{In some other areas of distributed computing, the term ``synchronous" is used  to mean ``fully-synchronous".}.
Other  interesting synchronous schedulers are
 \RSY, studied for its use to model energy-restricted robots \cite{BFKPSW22},  as well as
 the family of {\em Sequential} schedulers  (e.g., {\em Round Robin}), 
where in each round only one robot is activated. %\RSY\ is interesting because \RSY\ includes \FSY\ and is included in \SSY\ and its computational power has the same structure as that of \FSY~\cite{BFKPSW22}.

In the {\em asynchronous} setting (\ASY), introduced in  \cite{FPSW99}, 
 there is no common notion of time and each robot is activated 
 independently of the others. it allows for  arbitrary but finite delays between the $\LK$, $\CP$ and $\M$ phases, and each movement may take an arbitrary but finite amount of time. The duration of each robot's cycle, as well as the timing of robot's activation,  are
controlled by an adversarial scheduler, constrained only to be fair,
i.e., every robot must be activated infinitely often.

%Weaker adversaries are easily identified 
%considering the atomicity of the combination of the $\LK$, $\CP$ and $\M$ stages.
%In particular,  
%if  in every cycle the three operations are executed as a single atomic  instantaneous operation, this scheduler we shall call $LCM$-atomic-\ASY\ coincides with \SSY. 
%On the other hand, by combining fewer operations, two
%asynchronous schedulers are identified  \cite{OWD}: $LC$-atomic-\ASY, where the $\LK$ and $\CP$ operations are  a single atomic operation;  and  $CM$-atomic-\ASY, where the $\CP$ and $\M$ operations are a single atomic operation.
%
%Of independent interest is the  asynchronous adversary
%forbidden from scheduling the $\LK$  operation of a robot during the $\M$ operation of another.
%The particular theoretical relevance of this scheduler,  called 
%$M$-atomic-\ASY\ \cite{OWD} derives from the fact that one of the strongest debilitating effects of asynchrony is precisely that fact that a robot, when looking,  cannot detect if  another robot is still or moving.
%
%
%------------
%\subsection{Problems and Known Results: Separation and Landscapes}
%------------

%\paragraph*{Contributions}
\subsection{Contributions}
%\noindent {\bf Separators.}\ \
Like in other types of distributed systems, understanding the computational difference between various levels of synchrony and asynchrony has been a primary research focus.
In the robot model, to "separate" between the computational power of robots in two settings, we demonstrate that certain problems are solvable in one model but unsolvable in another. For example, 
 in $\OB$, {\em Rendezvous} is unsolvable in \SSY, but solvable in \FSY \cite{SY}, indicating a separation between \FSY\ and \SSY\ in $\OB$. Conversely,  
%the separation of comp. power
%equivalence 
to show that a weaker model is equivalent to a stronger model, we devise a {\em simulation protocol} that allows the correct execution of any protocol from the stronger model in the weaker model. The first attempt in the robot model
 involves constructing a simulation protocol for any $\LU$ protocol in \SSY\ using $\LU$ robots in \ASY~\cite{DFPSY}. %This protocol uses $\LU$ light with $5k$ colors to simulate any protocol of $\LU$ with $k$ colors operating in \SSY\
 This protocol employs $5k$ colored lights in $\LU$ robots to simulate \SSY\ protocols using $k$ colors\footnote{In the case of $k=1$, this protocol simulates $\OB$ robots working in \SSY\ with $\LU$ robots with $5$ colors in \ASY.}. 
 
 In this paper, we focus on making such simulations more efficient, specifically, considering simulations that involve $\LU$ 
 robots operating in \RSY\ and $\LU$
robots operating in \ASY\, under the most unrestricted adversary.
%the importance of RSYNCH
Though \RSY\ is introduced for modeling energy-restricted robots~\cite{BFKPSW22}, it is interesting in its own right because $\LU$ robots in \ASY\ have the same power as those in \RSY~\cite{DFPSY,BFKPSW22}, and $\FC$ robots in \RSY\ have the same power as $\LU$ robots in \RSY~\cite{BFKPSW22}.

The simulator of $\LU$ robots with $k$ colors in \SSY\ uses $\LU$ robots in \ASY\ and utilizes $5k$ colors~\cite{DFPSY}, including $5$ colors to control the simulation. Therefore, we aim to reduce the number of colors used to control the simulation.

Table~\ref{tab:Prev-results} presents both the previous simulation results and our new results. Here for any model, $M \in\{\FC, \LU \}$ and 
any adversarial scheduler  $A\in\{$ \RSY, \SSY, \ASY $\}$,  $M^A$ denotes the robot model $M$ working in $A$. 

The first simulation protocol was designed for $\LU$ robots in \SSY\ using $\LU$ robots under the strongest adversary, \ASY. This simulation ensures that in every round, the only selection of which robots are 
activated is made by an adversarial scheduler, constrained only 
to be fair~\cite{DFPSY}. The simulation employs $5$ light colors to control the process.\ %if the simulated protocol uses $k$ colors.
A unique property of this simulator is that not only does the simulated protocol function in \SSY, but it also ensures that any robot is activated exactly once during a certain duration, maintaining fairness in the scheduler. We will leverage this property to decrease the number of colors used when simulating $\LU$ robots in \RSY. When simulating a protocol involving $k$-color $\LU$ robots in an unfair \SSY, the simulator will use $3k$ colors for $\LU$ robots in \ASY~\cite{NSW-2021}.

%This simulator has a nice property that not only the simulated protocol works in \SSY, but also  satisfies that any robot is activated exactly once in some duration to make the scheduler fair. We will utilize this property to reduce the number of colors for simulating $\LU$ robots in \RSY. 
%If the simulated protocol of $k$-color $\LU$ robots in unfair \SSY,
%the simulator use $3k$ colors by $\LU$ robots in \ASY~\cite{NSW}.

In this paper, we first establish that
by utilizing the property of the simulator detailed in~\cite{DFPSY}, 
\begin{enumerate}
        \item[(1)] $k$-color $\LU$ robots in \RSY\ can be simulated by $4k$-color  $\LU$ robots in \SSY.
        \item[(2)] $k$-color $\LU$ robots in \RSY\ can be simulated by $5k$-color $\LU$ robots in \ASY.
\end{enumerate}
Previously, case (1) required $36k$ colors~\cite{BFKPSW22}. In contrast, our simulator for case (2) also uses $5k$ colors, effectively simulating $\LU$ robots in \RSY\ using $\LU$ robots in \ASY.
%While the previous result for the case (1) required $36k$ colors~\cite{BFKPSW22},
%the simulator of the case (2) uses the same number of colors ($5k$ colors) and simulates $\LU$ robots in \RSY\ by $\LU$ robots in \ASY.
Additionally,
we demonstrate that when the number of robots is limited to $2$ ($n=2$), the simulator in the case (1) can be implemented more efficiently, Specifically, we show that:
\begin{enumerate}
    \item[(3)] $k$-color $\LU$ robots in \RSY\ can be simulated by $3k$-color $\LU$ robots in \ASY. 
    \item[(4)] In the case of $k=1$, $\OB$ robots in \RSY\ cannot be simulated by $2$-color $\LU$ robots in \ASY. This demonstrates that the number of colors used in the simulator shown in (3) is optimal.
\end{enumerate}
We also confirm that all our proposed simulation protocols are self-stabilizing, ensuring functionality from any initial configuration. These self-stabilization can be done without increasing the number of colors.

The remainder of the paper is organized as follows.
In Section~\ref{sec:preliminaries}, we present robot models, schedulers, and the preliminaries used in this paper. In Sections~\ref{sec:sim-RSY-on-LUMI},  we show simulation protocols of \RSY\ by \SSY\ and \ASY\ on $\LU$, respectively.  Section~\ref{sec:sim-2-RSY-by-ASY} provides optimal simulation of \RSY\ by \ASY\ with two robots. Finally, in Section~\ref{sec:conclusion}, we conclude the paper.

%to 
%simulate $\LU$ robots in \RSY\ efficiently.
%If the simulated protocol of $k$-color $\LU$ robots in unfair \SSY,
%the simulator use $3k$ colors by $\LU$ robots in \ASY~\cite{NSW}.
%true contribution
%S->R 36 >>>4
%A->R 5*36>>>5
%n=2: A->R 3(optimal)
%n=2: FCOM^A ->LUMI^R 65>>>3 (optimal)
%Furthermore, we show that
%\begin{enumerate}
%    \item[(5)] $k$-color $\LU$ robots in \RSY\ can be simulated by $3k^2$-color $\FC$ robots in \RSY, if the number of robots is restricted to $2$. 
%\end{enumerate}
%In order to simulate $k$-color $\LU$ robots by $\FC$ robots, at least $k^2$ colors are necessary because  $\FC$ robots must have another light with $k$ colors to be used for showing the color of the other robot, say $a$ to $a$. Since the previous simulation use $65$ colors to control the simulation, the number of colors can be reduced to $3$ if the number of robots is $2$.
%Also we show that the number of colors used in (5) is optimal.

{\small
\begin{table}[ht]
  \caption{The previous results and this paper's results.}
% \begin{minipage}[t]{.45\textwidth}
\hspace{2cm} $n \geq 2$
   \begin{center}  
   
        \begin{tabular}{|c|c|c|c|}
\hline
%$n \geq 2$ & & & \\ \hline
simulating model        & simulated model  & $\#$ colors  & Ref.   
      \\ \hline
% FSYNC
 $\mathcal{\LU}^{A}$   & $\mathcal{\LU}^{S}$ 
    & $5k$ & \cite{DFPSY} \\ \hline
  $\mathcal{\LUM}^{S}$  & $\mathcal{\LUM}^{RS}$ & $36k$ 
   &  \cite{BFKPSW22}\\ \hline
   $\mathcal{\LUM}^{A}$  & $\mathcal{\LUM}^{S}$ $^*$ & $3k$ 
   &  \cite{NSW-2021}\\ \hline  
   $\mathcal{\FC}^{F}$ & $\mathcal{\LUM}^{F}$ & $2k^2$& \cite{BFKPSW22}  \\ \hline
      $\mathcal{\FC}^{RS}$ & $\mathcal{\LUM}^{RS}$ & $64k2^k$&  \cite{BFKPSW22} \\ \hline
      $\mathcal{\LUM}^{S}$ & $\mathcal{\LUM}^{RS}$ & $4k$& This paper  \\ \hline
            $\mathcal{\LUM}^{A}$ & $\mathcal{\LUM}^{RS}$ & $5k$& This paper  \\ \hline
%    $\mathcal{\OB}^{F}$ &  &  &  &  \\ \hline

\end{tabular}
    \end{center}
\hspace{2cm} $*$ unfair \SSY
\vspace{3mm}
       
        \hspace{2cm} $n = 2$
         \begin{center}
        \begin{tabular}{|c|c|c|c|}
\hline
%$n = 2$ & & & \\ \hline
simulating model        & simulated model  & $\#$ colors  & Ref.   
      \\ \hline
% FSYNC
 $\mathcal{\LU}^{A}$   & $\mathcal{\LU}^{RS}$ 
    & $3k$ & This paper \\ \hline
%   $\mathcal{\FC}^{RS}$ & $\mathcal{\LUM}^{RS}$ & $3k^2$& This paper  \\ \hline
%      $\mathcal{\LUM}^{S}$ & $\mathcal{\LUM}^{RS}$ & $4k$& This work  \\ \hline
%            $\mathcal{\LUM}^{A}$ & $\mathcal{\LUM}^{RS}$ & $5k$& This work  \\ \hline
%    $\mathcal{\OB}^{F}$ &  &  &  &  \\ \hline

\end{tabular}
    \end{center}

  \label{tab:Prev-results}
% \end{minipage}
 %
% \hfill
 %

\end{table}
}

\section{Preliminaries}
\label{sec:preliminaries}
\subsection{Robots}

The systems considered in this paper consist of a team  $R = \{ r_0 ,\cdots,
r_{n-1}\}$ of  computational entities moving and operating
 in the Euclidean plane $\mathbb R^2$. Viewed as points and called {\em robots},
the entities can move freely and continuously in the plane.
Each robot has its own local coordinate system and it always perceives itself at its origin;
there might not be consistency between the coordinate systems of the robots.
A robot is equipped with sensorial devices that allow it to observe the positions of the other robots in its local coordinate system.

Robots are {\em identical}: they are indistinguishable by their appearance, and they execute the same protocol. Robots are {\em autonomous}, without central control.  

At any time, a robot is {\em active} or {\em inactive}. Upon becoming active, a robot $r$ executes a $\mathit{ Look}$-$\mathit{Compute}$-$\mathit{Move}$ ($\mathit{LCM}$) cycle performing the following three operations:
\begin{enumerate}
\item {\em Look:} The robot activates its sensors to obtain a snapshot of the positions occupied by the robots with respect to its own coordinate
system\footnote{This is called the {\em full visibility} (or unlimited visibility)  setting; restricted forms of visibility have also been considered for these systems~\cite{FPSW05}}.
\item {\em Compute:} The robot executes its algorithm using the snapshot as input. The result of the computation is a destination point.
\item {\em Move:} The robot moves in a straight line towards the computed destination;
%\footnote{This is called the {\em rigid mobility} setting;  restricted forms of mobility 
%(e.g., when movement may be interrupted by an adversary)   have also been considered 
%for these systems}. 
if the destination is the current location, the robot stays still.
 \end{enumerate}
 
\noindent When inactive, a robot is idle. All robots are initially idle. The time it takes to complete a cycle is assumed to be finite and the operations $\mathit{Look}$
 and $\mathit{Compute}$  are  assumed to be instantaneous.

In the standard model, $\OB$, the robots are {\em silent}: 
they have no explicit means of communication; 
furthermore, they are {\em oblivious}: at the start of a cycle, a robot has no
memory of observations and computations performed in previous cycles.

In the other common model, $\LUM$, 
each robot $r$ is equipped with a persistent variable of visible 
state $Light[r]$, called {\em light}, whose values are taken from a finite 
set $C$ of states called {\em colors} (including the color that 
represents the initial state when the light is off). 
The colors of the lights can be set in each cycle by $r$ at 
its {\em Compute} operation. 
A light is {\em persistent} from one computational cycle to the next: 
the color is not automatically reset at the end of a cycle;  
the robot is otherwise oblivious, forgetting all other information from previous cycles. 
If any color is not set to some light, the color of the light remains unchanged.
In $\LUM$, the {\em Look} operation produces a colored snapshot; 
i.e., it returns the set of pairs 
 $(position,color)$ 
of the other robots\footnote{If  (strong) multiplicity detection is assumed, 
the snapshot is a multi-set.}.
Note that if $|C|=1$, then the light is not used; thus, this case corresponds to the $\OB$ model. 

%It is sometimes convenient to describe a robot $r$ as having $k\geq 1$ lights, denoted
% $r.light_1,\ldots,r.light_k$, where the values of $r.light_i$ are from a finite set of colors
% $C_i$, 
% and to consider $Light[r]$ as a $k$-tuple of variables; clearly, this corresponds to $r$ having a
% single light that uses $\Pi^{k}_{i=1}|C_i|$ colors.
 
%\color{blue}
%The lights provide simultaneously persistent memory and 
%direct means of communication, although both are limited to a constant number of bits per cycle.
% Two submodels of $\LU$ have been defined and investigated, each offering only one
% of these two capabilities.
% 
% In the first model, $\FS$,  
%  a robot can only see the color of its own light; that is, the light is
% an {\em internal} one and its color merely encodes an internal state. 
% Thus, a snapshot in $\FS$ is the same as in $\OB$.
% Therefore, robots are {\em silent}, as in $\OB$, but they  
%  are {\em finite-state}. %Observe that a snapshot in $\FS$ is the same as in $\OB$.
%
% In the second model, $\FC$,  the lights are {\em external}: a snapshot in $\FC$ is like in $\LUM$, except that, for the position $x$ where the robot
% $r$ performing the $Look$ is located,  $Light[r]$ is omitted from the set 
% of colors present at $x$. 
% Thus,  a robot
% can communicate to the other robots through its colored light but cannot see the color of its own light by the next cycle, that is,  robots are
% {\em  finite-communication} but {\em oblivious}.  
% 
%\color{black} 
 In all the above models,
 a {\em configuration} ${\cal C}(\T)$ at time $\T$ is the multiset of the $n$ pairs 
 $(x_i(\T), c_i(\T))$, where $c_i(\T)$ is the color of robot $r_i$ at time $\T$.

\subsection{Schedulers, Events}

With respect to the activation schedule of the robots, and the duration of their $\mathit{LCM}$  cycles, the fundamental distinction is between the {\em synchronous} and {\em asynchronous} settings.

In the {\em synchronous} setting (\SSY), also called {\em semi-synchronous} and first studied in \cite{SY},  time is divided into discrete
intervals, called {\em rounds}; in each round, a non-empty set of robots is activated  and  they simultaneously perform
a single  $\LK$-$\CP$-$\M$ cycle in perfect synchronization. 
The selection of which robots are 
activated at a given round is made by an adversarial scheduler, constrained only 
to be fair
(i.e., every robot is activated infinitely often). 
The particular  synchronous setting, where every robot is  activated in every round
is called {\em fully-synchronous} (\FSY).
In a synchronous setting, without loss of generality,  the expressions ``$i$-th round" and ``time $t=i$" are used as synonyms.

In the {\em asynchronous} setting (\ASY), first studied in \cite{FPSW99}, 
 there is no common notion of time,  the duration of each phase is finite 
 but unpredictable and might be different in different cycles, 
 and each robot is activated 
 independently of the others.
The duration of the phases of each cycle as well as the decision of when 
 a robot is activated is
controlled by an adversarial scheduler, constrained only to be fair,
i.e., every robot must be activated infinitely often.

In the asynchronous settings, the execution  by a robot of any of the operations  $\mathit{Look}$, $\mathit{Compute}$ and $\mathit{Move}$  is called an {\em event}.
We associate relevant time information to events:
for the  $\mathit{Look}$  (resp., $\mathit{Compute}$) operation,
which is instantaneous, the relevant time is $\T_L$  (resp., $\T_C$) when the event occurs;
 for the    $\mathit{Move}$ operation,   these are 
the times $\T_B$  and $\T_E$ when the event begins and ends, respectively.
Let ${\cal T} =\{\T_1, \T_2, ...\}$ denote the infinite ordered set of all relevant times; i.e.,
$\T_i < \T_{i+1}, i\in \N$.  In the following, to simplify the presentation and
without any loss of generality, we will refer to $\T_i\in {\cal T}$  simply
by its index $i$; i.e.,  the expression ``time $t$'' will be used to mean
``time $\T_t$''.

\color{black}

  Consider now  the synchronous  scheduler, we shall call \RSY,  obtained from \SSY\ by
adding the following {\em restricted-repetition
  condition} to its activation sequences:
%  Given a synchronous scheduler $\cal S$ and a  set of robots 
%$R$, an {\em activation  sequence}  of $R$ by (or, under) $\cal S$
%  is  an infinite sequence  $E= \langle e_1, e_2, \ldots, e_i, \ldots \rangle$,
%where   $e_i\subseteq R$ denotes the set of robots
%activated in round $i$,
  {\small
\begin{multline*}
\bigg[\forall i\geq 1, e_i=R\bigg] \textbf{ or }\\ \bigg[\exists p\geq 0 : \bigg(\big[\forall  i\leq p,  (e_i=R)\big] \textbf{ and } \big[\forall  i > p,( \emptyset\neq e_i\neq R \textbf{ and }  e_i \cap e_{i+1} = \emptyset)\big]\bigg)\bigg],
\end{multline*}
}%
where an {\em activation  sequence}  of $R$ %by (or, under) $\cal S$
  is  an infinite sequence  $E= \langle e_1, e_2, \ldots, e_i, \ldots \rangle$, and
$e_i\subseteq R$ denotes the set of robots
activated in round $i$.
That is,  this scheduler is composed of sequences where
 the prefix is a  (possibly empty) sequence of $R$ and, if the prefix is finite, the rest are non-empty sets
  satisfying the constraint $(e_i \cap e_{i+1} = \emptyset)$.

\subsection{Computational Relationships}

 Let ${\cal M} = \{\LU, \OB \}$  be the set of  models under investigation and 
${\cal S}= \{$ \RSY, \SSY, \ASY $\}$ be the set of  schedulers 
under consideration.

We denote by $\mathcal{R}$ the set of all robot teams that satisfy the core assumptions
(i.e. they are identical, autonomous and operate in $\mathit{LCM}$ cycles), and 
%$\mathit{R} \in \mathcal{R}$ a team of robots that 
operate under rigidity of movements,  chirality, and  variable disorientation. 
By $\mathcal{R}_n \subset \mathcal{R}$ we denote the set of all teams of size $n$.

Given a model $M \in {\cal M}$, a scheduler $S\in  {\cal S}$,
and a team of robots $R\in\cal{R}$
we denote by $Task(M,k,S;R)$ 
 the set of problems solvable 
by $\mathit{R}$ in $M$ with $k$ colors 
under adversarial scheduler $S$.

%Let $M_1, M_2\in{\cal M}$ and $S_1, S_2\in{\cal S}$.
%\begin{itemize} 
%\item We say that  model $M_1$  under scheduler  $S_1$
%is {\em computationally not less powerful than} 
%model $M_2$ under $S_2$, denoted by
%$M_{1}^{S_1} \geq M_{2}^{S_2}$
% if $\forall \mathit{R} \in %\mathcal{R}$
%  we have $Task(M_1,S_1;\mathit{R}) \supseteq Task(M_2,S_2;\mathit{R})$.
%
%\item We say that 
% $M_1$  under  $S_1$
%is {\em computationally more powerful than} 
%$M_2$ under $S_2$,  
%denoted by
%$M_{1}^{S_1} >  M_{2}^{S_2}$,
%if 
%$M_{1}^{S_1} \geq M_{2}^{S_2}$ 
%and
%$\exists \mathit{R} \in %\mathcal{R}$ 
%$Task(M_1,S_1;\mathit{R}) %\setminus Task(M_2,S_2;
%\mathit{R})  \neq \emptyset$.
%
%\item We say that  $M_1$  %under  $S_1$
%and  $M_2$  under  $S_2$, are %{\em computationally %equivalent to}, denoted by  
%$M_{1}^{S_1} \equiv  %M_{2}^{S_2}$,
%if $M_{1}^{S_1} \geq %M_{2}^{S_2}$ and $M_{2}^{S_2} %\geq M_{1}^{S_1}$.
%
%
%\item Finally, we say that 
% $M_1$  under  $S_1$
%and  $M_2$  under  $S_2$, are %{\em computationally orthogonal} (or {\em incomparable}), denoted by  
%$M_{1}^{S_1} \bot  %M_{2}^{S_2}$,  if $\exists %\mathit{R_1, R_2} \in %\mathcal{R}$
% $Task(M_1,S_1;\mathit{R_1}) \setminus Task(M_2,S_2;\mathit{R_1}) \neq \emptyset$ and 
% $Task(M_2,S_2;\mathit{R_2}) \setminus Task(M_1,S_1;\mathit{R_2}) \neq \emptyset$.
%
% \end{itemize}

 For simplicity of notation, 
 let  $M_k^{RS}(R), M_k^{S}(R)$, and $M_k^A(R)$ %, and $M^{A_{\alpha}}(R)$ 
 denote\\ $Task(M,k,$\RSY$;\mathit{R})$,
 $Task(M,k,$\SSY$;\mathit{R})$, and $Task(M,k,$\ASY$;\mathit{R})$, \\ 
 respectively\footnote{Since $\OB$ robots have no light (one color), the suffix $k=1$ is omitted.}.% where $\alpha \in \{ LC, move, CM\}$.

%\color{red} 
% Trivially, 
% 
% \begin{lemma}
% For any $M\in{\cal M}$ and any  $P\in{\cal S}$:
% \begin{itemize}
% \item $M^{F} \geq M^{S} \geq M^{A_{LC}} \geq %M^{A}$  
% \item $M^{F} \geq M^{S} \geq M^{A_{CM}} \geq M^{A_{M}} \geq M^{A}$
%  \item $\LUM^{P} \geq \FS^{P} \geq \OB^{P}$
%\item $\LUM^{P} \geq \FC^{P} \geq \OB^{P}$
%\end{itemize}
%\end{lemma}

%Let us also recall the following equivalence established in \cite{DFPSY}:
% \begin{lemma} [\cite{DFPSY}] $\LU^{A} \equiv \LU^{S}$
%\end{lemma}
%\noindent that is, in the $\LU$ model,  there is no computational difference  
%between \ASY\ and \SSY.
%\color{black}

%\section{Previous Results}
\section{Simulation of \RSY\ on $\LU$}\label{sec:sim-RSY-on-LUMI}
%\textcolor{red}{protocol vs algorithm: we should use protocol.}
\subsection{4-Color Simulation of \RSY\ by \SSY}
%\textcolor{red}{Only results are stated. }

%\textcolor{green}{The proof uses the ideas in section 3.1. The protocol $\textup{SIM}^{RS}_{A}$ was designed to simulate \RSY\ by \ASY, and we will consider whether it can be extended to \ASY.
%While \ASY can activate robots at arbitrary times, and other robots may activate and see when some robot is moving, in \SSY\ robots can activate in a distinguished synchronous round.
%Therefore, correctly simulate \RSY\ in \SSY, we can achieve by simply cutting the state of Waiting from $\textup{SIM}^{RS}_{A}$.
%See Appendix B for more details on proofs.
%}

%\textcolor{red}{below is add to appendix }

In this section, we show that semi-synchronous systems equipped with a light with $4$ colors are at least as powerful as \textit{restricted-repetition} system (\RSY) without lights. More precisely, we have: 

\begin{theorem}\label{4col-RS-SS}
$\forall R \in {\cal{R}}, \OB^{RS}(R) \subseteq \LU_4^{S}(R).$
\end{theorem}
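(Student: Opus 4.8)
The plan is to give a constructive proof. From an oblivious protocol $\cal A$ that solves a problem $P$ in \RSY, we build a $4$-color luminous protocol $\cal B$ (here $k=1$, so $4k=4$ colors) that solves $P$ in \SSY\ by forcing the sequence of $\cal A$-moves it executes to be (up to stuttering) a legal \RSY\ execution of $\cal A$. The two bits carried by the light are used purely for scheduling: a robot's color advertises whether it performed an $\cal A$-move in the ``current simulated round'' together with a small amount of phase/counter information that delimits simulated rounds. When the (arbitrary, fair) \SSY\ scheduler activates a robot, it reads the snapshot of positions and colors, decides from the color pattern whether a new simulated round may start and whether it is itself \emph{eligible} to move --- in particular, a robot that moved in the previous simulated round is declared ineligible, which is exactly the constraint $e_i\cap e_{i+1}=\emptyset$; if eligible it computes $\cal A$ on the observed positions, moves there, and updates its light to record the move, and otherwise it updates only its light so as to advance the phase.

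The proof then establishes three invariants of any fair \SSY\ execution of $\cal B$. (I1) \emph{Safety}: reading off the sets $e'_1,e'_2,\dots$ of robots that actually perform an $\cal A$-move in successive ``active'' rounds yields a \RSY-valid activation sequence, i.e.\ after a finite prefix of full-activation rounds each $e'_j$ is a nonempty proper subset with $e'_j\cap e'_{j+1}=\emptyset$. (I2) \emph{Faithfulness}: every $\cal A$-move attributed to simulated round $j$ is computed on exactly the configuration reached after simulated round $j-1$; because a \SSY\ cycle is atomic, this only requires the phase logic to serialize simulated rounds (no robot is still ``finishing'' round $j-1$ once round $j$ has started), which is the easy part. (I3) \emph{Liveness/fairness}: simulated rounds keep advancing (no deadlock in which every activated robot is ineligible) and each robot performs an $\cal A$-move infinitely often. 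Together, (I1)--(I3) imply that the sequence of robot positions visited by $\cal B$ under the given \SSY\ schedule coincides, possibly with repetitions (the rounds in which only colors change), with that of a fair \RSY\ execution of $\cal A$; since $\cal A$ solves $P$ in \RSY, $\cal B$ solves $P$ in \SSY. One then checks self-stabilization: from an arbitrary initial color assignment the lights reach a legal pattern within finitely many rounds (a robot seeing an inconsistent pattern triggers a reset), after which (I1)--(I3) hold, so no additional colors are needed.

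The main obstacle is the design and analysis of this $4$-color light ``scheduler'' so that safety (I1) and liveness (I3) hold \emph{simultaneously} against a fully adversarial \SSY\ scheduler that may repeatedly activate the ``wrong'' robots and may preserve symmetry. With only two bits per robot one cannot afford to distinguish ``moved in the current simulated round'' from ``moved long ago'', so the phase information must be reused very carefully; the tightest case is $n=2$, where restricted-repetition forces the simulated execution to alternate $\{r_0\},\{r_1\},\{r_0\},\dots$ after the prefix, leaving essentially no slack. The crux of the correctness argument is a potential-function argument: whenever the scheduler activates a set whose color pattern does not permit progress, some activated robot makes a color change that strictly advances the potential toward a state in which progress is possible, so that fairness guarantees every robot eventually moves while (I1) is never violated. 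The snapshot-consistency requirement (I2), which in the asynchronous setting would cost an extra color (this is why the \ASY\ simulations in this paper use $5k$), is almost free here precisely because \SSY\ cycles are atomic, and that is what lets the \SSY\ simulation get down to $4k$ colors.
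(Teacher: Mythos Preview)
Your plan matches the paper's approach: a constructive simulation in which four colors carry only scheduling information, and one verifies that the induced activation sequence is \RSY-legal, snapshot-consistent, and fair. The gap is that you never exhibit the protocol, and the missing structural idea is nontrivial. The paper's protocol $\textup{SIM}^{RS}_{S}$ is organized around \emph{mega-cycles} in which every robot executes $\cal P$ exactly once; a mega-cycle is broken into \emph{stages}, each stage being a single \SSY\ round in which the activated $T$-robots execute $\cal P$ (on the same snapshot) and become $M$. Because the stages of one mega-cycle partition $R$, disjointness \emph{within} a mega-cycle is automatic; the hard constraint is $R^i_{k_i}\cap R^{i+1}_1=\emptyset$ across the mega-cycle boundary, and this is what the fourth color buys. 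The colors are $T,M,S,S'$, and the key distinction is $S$ (``executed in a non-final stage'') versus $S'$ (``executed in the \emph{final} stage of this mega-cycle''): $S'$-robots are prevented from executing in the first stage of the next mega-cycle, and only revert to $T$ after that stage has committed. Your phrase ``a robot that moved in the previous simulated round is declared ineligible'' is the right target, but with two bits you cannot remember ``moved last round'' across arbitrary rounds --- you can only afford to flag the last stage of each mega-cycle, and that is precisely the $S$/$S'$ trick. Without this idea the four-color budget does not close, and your potential-function sketch does not supply it.

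Two minor corrections. The $n=2$ case is not the tight one for this theorem; the paper's protocol is uniform in $n$, and the $3$-color $n=2$ refinement is a separate \ASY\ result. And self-stabilization, while costing no extra colors, is not free: the paper has to add extra transition rules (the protocol ss-$\textup{SIM}^{RS}_{S}$) to drain illegal color configurations that never arise from the legitimate initial state $\forall T$; a robot ``seeing an inconsistent pattern triggers a reset'' is the right intuition but needs those rules spelled out and shown not to interfere with the legal runs.
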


We show that
every problem solvable by a set of $\OB$ robots  under \RSY\ can be also solved 
 by a set of $\LU$ robots with $4$ colors under \SSY.  We do so constructively: we present a {\em simulation} protocol 
for $\OB$ robots   that allows them to correctly execute in \RSY\ 
any protocol ${\cal P}$ given in input.% (i.e., all its executions under   \SSY\ are
%equivalent to  some executions under \RSY). 

%The idea of our simulation of \RSY\ by \SSY\ is as follows; 
We first simulate  a restricted semi-synchronous scheduler called \textit{multiple-slicing}, in which any robot is activated exactly once in some duration, where the duration is called \textit{mega-cycle}. Then we modify the simulator working to attain the condition of \RSY.

A scheduler that a group of $n$ robots, starting from time $t$ = 0, after $n$ successive activation rounds (slices), all robots in the system will have been activated exactly once. This is called centralized slicing \SSY. We extend the centralized slicing \SSY\ to  $R_1^1, R_2^1, \ldots ,R_{k_1}^1;R_1^2, R_2^2, \ldots, R_{k_2}^2; \ldots ;R_1^i, R_2^i, \ldots, R_{k_i}^i; \ldots (1 \leq k_i \leq n)$, where for each $i \geq 1$, $R^i_1, R^i_2,\ldots, R^i_{k_i}$ are a partition of $R$. This scheduler is called {\em multiple-slicing}\footnote{It is centralized slicing if $|R^i_j|=1$ for every $i(\geq 1)$ and $j(1 \leq j \leq k_i$).} \SSY. At this time, $R^i_j$ is called the $j$-th {\em stage} in the $i$-th {\em mega-cycle}.

If the multiple slicing \SSY\ satisfies that $R^i_{k_i} \cap R^{i+1}_1 = \emptyset$ for every $i \geq 1$, this scheduler works to satisfy the {\em disjoint} condition of \RSY. 

Specifically, the robot should have one of the following colors:

\begin{itemize}
    \item \textbf{T(rying)}: denotes not having executed $\cal{P}$ in a current mega-cycle yet.
    \item \textbf{M(oving)}: denotes having already executed $\cal{P}$ once.
    \item \textbf{S(topped)}: denotes after executing $\cal{P}$ except in the last stage of a mega-cycle.
    \item \textbf{S'(topped)}: denotes after executing $\cal{P}$ in the last stage of a mega-cycle.
\end{itemize}

When a robot with $\alpha$ state, it is called an $\alpha$-robot. We also denote state set as a global configuration (e.g. $\forall T,S$ means each robot's state  either T or S, and there is at least one robot with such state).

Considering the states in this way, the rule of protocol can be considered as follows.

\paragraph*{Protocol Description}

Fig.~\ref{fig:alg4-rs-ss-tran} (a) shows the transition diagram representation of $\textup{SIM}^{RS}_{S}$. 
The protocol $\textup{SIM}^{RS}_{S}$ uses four colors: T, M, S, S'. Initially, all lights are set to T. 

The protocol simulates a sequence of mega-cycles, each of which starts with some robots trying to execute protocol ${\cal{P}}$ %(color T, S')
and ends with all robots finishing the mega-cycle having executed ${\cal{P}}$ once. %(color S, S'). 
After this end configuration, it transits to start one, and a new mega-cycle starts.

\begin{figure}[h]
\begin{center}
\includegraphics[width=120mm]{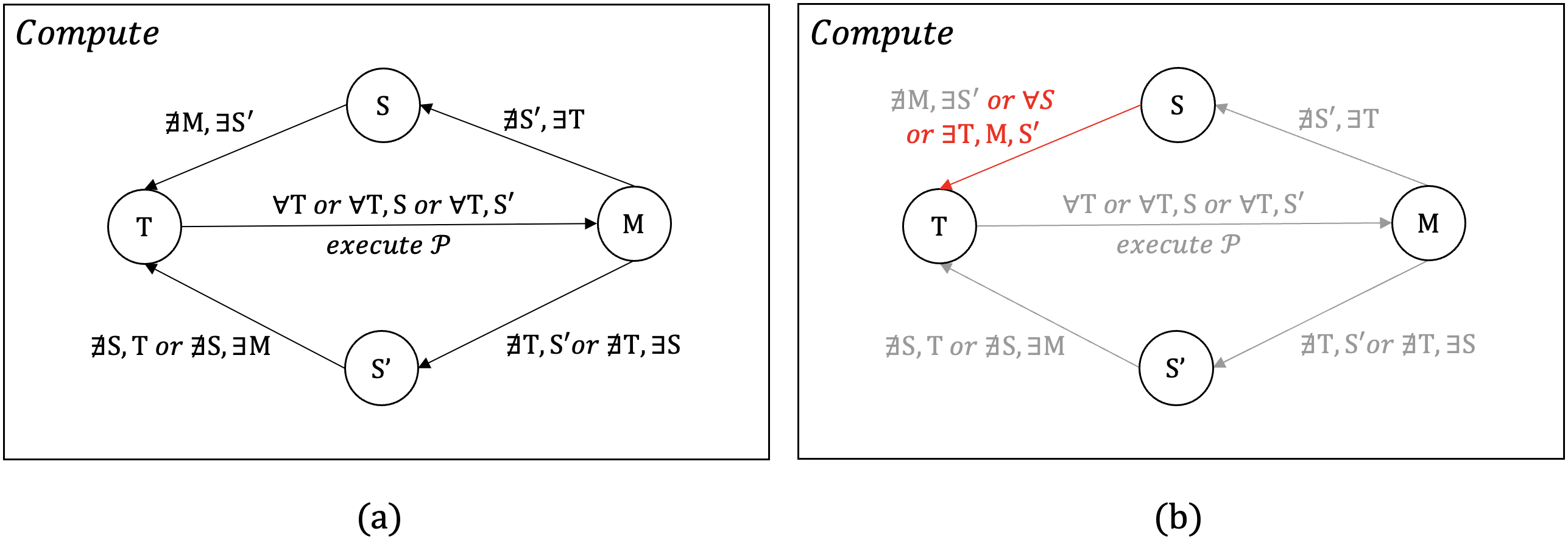}
\caption{Transition diagram representations of protocol $\textup{SIM}^{RS}_{S}$((a)), and self-stabilizing protocol ss-$\textup{SIM}^{RS}_{S}$((b)). The label in the nodes represents the color of the light of the executing robot. The label of an edge expresses the condition on the lights of all the other robots that must be satisfied for the transition to occur. The notation "$\forall A,B$" means: "$\{\textup{Light\lbrack{r}\rbrack}\ |\ \forall r \in R\} = \{A,B\}$", "$\exists A$" means: "$\exists r \in R, \textup{Light\lbrack{r}\rbrack} \in \{\textup{A}\}$", "$\nexists A$" means: "$\{\textup{Light\lbrack{r}\rbrack}\ |\ \forall r \in R\} \cap \{A\} = \emptyset$". Conditions, colored red in (b) are newly added.}\label{fig:alg4-rs-ss-tran}
\end{center}
\end{figure}

\begin{algorithm}[H]
    \caption{$\textup{SIM}^{RS}_{S}$}
    State \textit{Look}
    \begin{algorithmic}[0]
        \STATE $Pos\lbrack{r}\rbrack:$ the position on the plane of robot $r$ (according to my coordinate system);
        \STATE $Light\lbrack{r}\rbrack:$ the color of the light of robot.
    \end{algorithmic}
    (Note: I am robot $x$) \\ \\
    State \textit{Compute}
    \begin{algorithmic}[1]
        \STATE $p \leftarrow Pos\lbrack{x}\rbrack$.
        \STATE $c \leftarrow \{Light\lbrack{r}\rbrack\ |\ \forall r \in R\}$.
        \IF {$Light\lbrack{x}\rbrack$ = T}
            \IF {$c = \{\textup{T}\} \vee c = \{\textup{T,S}\} \vee c = \{\textup{T,S'}\}$}
                \STATE Execute ${\cal{P}}.$
                \STATE $p \leftarrow computed\ destination$.
                \STATE $Light\lbrack{x}\rbrack \leftarrow$ M.
            \ENDIF
        \ELSIF {$Light\lbrack{x}\rbrack$ = M}
            \IF {$c\cap\{\textup{S'}\} = \emptyset \wedge \{\textup{T}\} \subseteq c$}
                \STATE $Light\lbrack{x}\rbrack \leftarrow$ S.
            \ENDIF
            \IF {$c\cap\{\textup{T,S'}\} = \emptyset \vee (c\cap\{\textup{T}\} = \emptyset \wedge \{\textup{S}\} \subseteq c)$}
                \STATE $Light\lbrack{x}\rbrack \leftarrow$ S'.
            \ENDIF
        \ELSIF {$Light\lbrack{x}\rbrack$ = S} 
            \IF {$c\cap\{\textup{M}\} = \emptyset \wedge \{\textup{S'}\} \subseteq c$}
                \STATE $Light\lbrack{x}\rbrack \leftarrow$ T.
            \ENDIF
        \ELSIF {$Light\lbrack{x}\rbrack$ = S'}
            \IF {$c\cap\{\textup{S,T}\} = \emptyset \vee (c\cap\{\textup{S}\} = \emptyset \wedge \{\textup{M}\} \subseteq c)$}
                \STATE $Light\lbrack{x}\rbrack \leftarrow$ T.
            \ENDIF
        \ENDIF
    \vspace{3mm}
    \end{algorithmic} 
    
    State \textit{Move} 
    \begin{algorithmic}[0]
        \STATE Move($p$).
    \end{algorithmic}
    \label{alg4-rs-ss}
\end{algorithm}

%When a robot with $\alpha$ color, it is called an $\alpha$-robot.
%\textcolor{red}{notation declare}

During each mega-cycle, each robot gets the opportunity to execute one step of the protocol ${\cal{P}}$. A $T$-robot $r$, tries to execute protocol ${\cal{P}}$. However, the robot is allowed to execute ${\cal{P}}$ only if %during its \textit{Look} operation
there are no $M$-robots (i.e. robots that executed  ${\cal{P}}$ before this round). If that is the case, $r$ changes its color to M. %during the \textit{Compute} part of this cycle. 
On the other hand, %if robot $r$ sees any robot colored M during its Look,
if there exist an $M$-robot, %then it is not allowed to move and waits for the next turn 
it does nothing (i.e. it waits until no $M$-robots exist). %\color{green}
$M$-robots, after executing ${\cal{P}}$, will turn their own lights to $S$ only when %they see that 
no $S'$-robot exists and $T$-robots exist (which is in a stage except the last one), or turn to $S'$ only when %they see that 
no $c(\in\{T, S'\})$-robot exists (where all robots execute $\cal{P}$), or no $T$-robot exists and $S$-robots exist (which is in the last stage). If the robots turn to $S$, after some time, each robot will be colored either S (i.e. executed ${\cal{P}}$) or T (i.e. not executed ${\cal{P}}$), else all robot will be colored S' (i.e. this happens when all robots execute ${\cal{P}}$ at the same stage). 
\color{black}
At this time, $T$-robots are given another opportunity to execute ${\cal{P}}$. Thus, a cycle of protocol $\textup{SIM}^{RS}_{S}$ consists of several stages such that, in each stage, at least one robot executes ${\cal{P}}$ while other robots wait. Eventually all robots will colored either S or S' (i.e. each robot has executed ${\cal{P}}$ once), and the cycle ends when $S$-robots change to color T. At this point, the $S'$-robots do nothing; when this process is completed, a new cycle starts. When a new mega-cycle begins, only $T$-robots and $S'$-robots exist, and some of the $T$-robots perform ${\cal{P}}$ in the first stage. Here, since the $S'$-robots have executed ${\cal{P}}$ in the last stage of the previous mega-cycle, the robots performing ${\cal{P}}$ in the first stage and the $S'$-robots are mutually disjoint and the condition of \RSY\ is satisfied.
$S'$-robots will turn their own lights to T only when %they see 
no $c(\in\{T, S\})$-robot exists, or no $S$-robot exists and $M$-robots exist, i.e. just before the first stage of the next mega-cycle. 

\begin{figure}[h]
\label{fig:alg-self-stable-rs-ss-state-tran}
\begin{center}
\includegraphics[width=110mm]{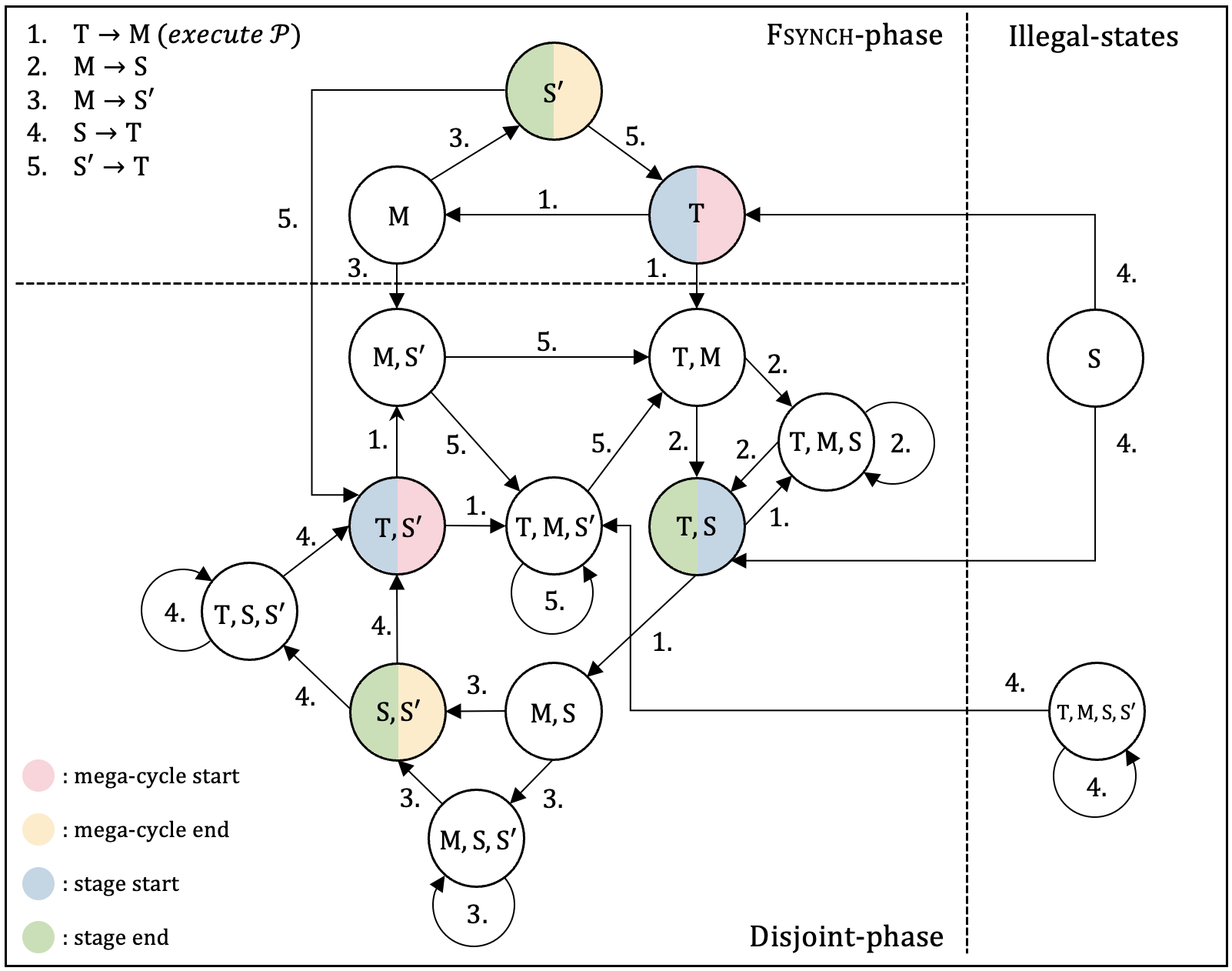}
\caption{Transition diagram of configurations (protocol $\textup{SIM}^{RS}_{S}$(the left part of this figure) and self-stabilizing ss-$\textup{SIM}^{RS}_{S}$).}\label{fig:alg4-rs-ss-statetran}
\end{center}
\end{figure}

\paragraph*{Correctness}
The left part of Fig.~\ref{fig:alg4-rs-ss-statetran} shows the transition diagram of configurations when performing $\textup{SIM}^{RS}_{S}$.
Each megacycle begins with $\forall T$ and ends with $\forall S'$ or $\forall S,S'$. If it ends with $\forall S'$, it means that all robots have executed the algorithm (FSYNCH-phase), while ending with $\forall S,S'$ represents all other cases (Disjoint-phase). The $S'$-robot signifies that it has executed at the end of this megacycle, and it is guaranteed not to execute at the beginning of the next megacycle. If a megacycle starts with $\forall T,S'$, it indicates the FSYNCH-phase has ended, and the Disjoint-phase is being executed, ending with $\forall S,S'$. Regarding stages, the FSYNCH-phase starts with $\forall T$ and ends with $\forall S'$ in one stage, while the Disjoint-phase begins with either $\forall T,S$ or $\forall T,S'$ and ends with either $\forall T,S$ or $\forall S,S'$. A stage beginning with $\forall T,S$ is any stage other than the first, while one starting with $\forall T,S'$ is the first stage. Furthermore, ending with $\forall T,S$ is any stage other than the last and ending with $\forall S,S'$ is the last stage. A stage that ends with $\forall S,S'$ transitions to $\forall T,S'$, and a new megacycle begins. The next robots to execute are chosen from those other than $S'$-robots, operating in a way that satisfies the disjointness of \RSY. \\%Due to space limitations in the paper, details of the correctness are relegated to Appendix~\ref{app:4Col-RS-SS-1}, 
%We obtain the following theorem. but we can show that any execution of protocol $\textup{SIM}^{RS}_{S}$ of $\LU$ robots with $4$ colors in \SSY\ corresponds to a possible execution of ${\cal{P}}$ of $\OB$ in \RSY\ and Theorem~\ref{4col-RS-SS} and its corollary hold.

We now demonstrate that Protocol $\textup{SIM}^{RS}_{S}$ (Algorithm~\ref{alg4-rs-ss}) ensures fair and accurate execution of any \textit{restricted-repetition} protocol ${\cal{P}}$. This is achieved by defining \textit{mega-cycles} and \textit{stages} within the protocol, based on the global configuration of the robots.

A \textit{mega-cycle} starts at time $t$ when only robots in states $T$ and $S'$, denoted as $c(\in\{T, S'\})$, are present. The mega-cycle concludes at the earliest subsequent time $t'>t$, marked by the exclusive presence of robots in states $S$ and $S'$, denoted as $c'(\in\{S, S'\})$.
%We now prove that Protocol $\textup{SIM}^{RS}_{S}$ provides a fair and correct execution of any \textit{restricted-repetition} protocol ${\cal{P}}$. We  define the concepts of \textit{mega-cycles} and \textit{stages} in this protocol, in terms of the global configuration of the robots. 

%A \textit{mega-cycle} begins at the time instant $t$ when only
%$T$-robots and $S'$-robots
%$c(\in\{T, S'\})$-robots 
%exist, and mega-cycle ends at the earliest time instant $t'>t$ when only 
%$S$-robots and $S'$-robots 
%$c'(\in\{S, S'\})$-robots 
% exist. 

%\begin{figure}[!t]
%\begin{center}
%\includegraphics[height=110mm,width=110mm]{img/4Col-state-transition.png}
%\caption{Transition diagram of state transitions (protocol $\textup{SIM}^{RS}_{S}$).}
%\end{center}
%\end{figure}

%\begin{figure}[!t]
%\label{fig:alg-self-stable-rs-ss-state-tran}
%\begin{center}
%\includegraphics[width=110mm]{img/4Col-self-stable-state-transition.png}
%\caption{Transition diagram of state transitions (protocol $\textup{SIM}^{RS}_{S}$(the left part of this figure) and self-stabilizing ss-$\textup{SIM}^{RS}_{S}$).}\label{fig:alg4-rs-ss-statetran}
%\end{center}
%\end{figure}

For transitions from the end of the one mega-cycle to the beginning of the next one, this is accomplished by the following transitions (Fig.~\ref{fig:alg4-rs-ss-statetran}):

\begin{itemize}
    \item \FSY-phase mega-cycle to \FSY-phase mega-cycle\\ 
    $\{S'\} \rightarrow \{T\}$ (the mega-cycle consists of one stage)\\
    \item \FSY-phase mega-cycle to Disjoint-phase mega-cycle\\
    $\{S'\} \rightarrow \{T,S'\}$\\
    \item Disjoint-phase mega-cycle to Disjoint-phase mega-cycle\\
    $\{S,S'\}\ (\rightarrow \{T,S,S'\}) \rightarrow \{T,S'\}$\\
\end{itemize}

When this process is completed, a new mega-cycle starts.\\

A $stage$ within a mega-cycle commences at time $t$ when only robots in states $c(\in\{T, S, S'\})$ are present, including at least one $T$-robot. During a stage, some robots transition to the $M$ state. This stage concludes at the earliest time $t'>t$ when only robots in states $c(\in\{T, S, S'\})$ are observed. The stage $R^i_j$ encompasses the set of $T$-robots that switch to M during the time interval $t_1\ (t_0 \leq t_1 < t_2)$.

Furthermore, consider $R^1,R^2,\ldots ,R^i,\ldots$ as a sequence of mega-cycles, with $R_1^i,R_2^i,\ldots ,R_{k_i}^i$ representing the sequence of stages within the $i$-th mega-cycle. Based on the \RSY\ scheduler, \RSY\ is categorized into two phases: the \FSY-phase, where all robots are activated simultaneously (i.e., $k_i=1$), and the Disjoint-phase, characterized by a lack of common robots activated across successive stages (i.e., $k_i \geq 2, R_j^i\cap R_{j+1}^i=\emptyset, R^i_{k_i} \cap R^{i+1}_1 = \emptyset$).
%A $stage$ of a mega-cycle begins at time $t$ when only $c(\in\{T, S, S'\})$-robots exist (with $T$-robot exists). During the stage, some robots change to color M and the stage ends at the earliest subsequent time $t'>t$ when only $c(\in\{T, S, S'\})$-robots exist. stage $R^i_j$ is the set of $T$-robots which turn to M in time $t_1\ (t_0 \leq t_1 < t_2)$.

%Next, let $R^1,R^2,\ldots ,R^i,\ldots$ be a mega-cycle sequence and $R_1^i,R_2^i,\ldots ,R_{k_i}^i$ be the stage sequence of the $i$-th mega-cycle.
%From the definition of \RSY\ scheduler, \RSY\ can be divided into two phases. We call \FSY-phase when all robots are activated simultaneously (i.e. $k_i=1$) and Disjoint-phase when the common set of robots activated in successive stage is empty (i.e. $k_i \geq 2, R_j^i\cap R_{j+1}^i=\emptyset, R^i_{k_i} \cap R^{i+1}_1 = \emptyset$).

\begin{lemma}\label{lem:st-rs-s}
During each stage, the following conditions are met:
\begin{enumerate}
\item[(i)] At the beginning of the stage, there are one or more $T$-robots, and all other robots are either $S$ or $S'$ (denoted as $c(\in\{S, S'\})$-robots).
\item[(ii)] At least one $T$-robot executes protocol ${\cal{P}}$ during this stage.
\item[(iii)] At the end of the stage, those robots that executed ${\cal{P}}$ will transition to color $S'$ if no $T$-robots remain, otherwise, they transition to color $S$.
\item[(iv)] At the end of this stage, only robots in states $T$, $S$, or $S'$ (denoted as $c(\in\{T, S, S'\})$-robots) are present.
\end{enumerate}
%During each stage, the following holds:\\
%(i) At the beginning of the stage, one or more are $T$-robots and all others are $c(\in\{S, S'\})$-robots. (ii) At least one of $T$-robot executes ${\cal{P}}$ during this stage. (iii) At the end of stage those robots execute ${\cal{P}}$ and if $T$-robot do not exist, those robots turn to color S', otherwise color S. (iv) At the end of this stage, only $c(\in\{T, S, S'\})$-robots exist.
\end{lemma}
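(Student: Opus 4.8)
The plan is to prove the four items together, using an induction over the ordered sequence of stages to establish (i) and a finite case analysis on the configuration that opens a stage to establish (ii)--(iv); throughout I read off the transition guards directly from Algorithm~\ref{alg4-rs-ss} (equivalently, from Fig.~\ref{fig:alg4-rs-ss-tran}(a)). For the base case of the induction, the initial configuration at time $t=0$ is $\forall T$, which already has the form required by (i). For the inductive step I invoke item (iv) for the preceding stage (its final configuration uses only colors in $\{T,S,S'\}$): if that configuration still contains a $T$-robot, it is by definition the opening configuration of the next stage; otherwise, as recorded below, it is $\{S'\}$ or $\{S,S'\}$, and one of the mega-cycle transitions $\{S'\}\to\{T\}$, $\{S'\}\to\{T,S'\}$, $\{S,S'\}\to\{T,S'\}$ (Fig.~\ref{fig:alg4-rs-ss-statetran}) produces the opening configuration of the first stage of the next mega-cycle. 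In every case that configuration contains at least one $T$-robot and otherwise only $S$- and $S'$-robots, which is (i); the case analysis below also shows that the only opening configurations that ever occur are $\{T\}$, $\{T,S\}$, and $\{T,S'\}$.

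For (ii), note first that while the configuration has one of these three forms, the only activations that change it are activations of $T$-robots: an $S$-robot can move only when an $S'$-robot is present (its guard requires $\{S'\}\subseteq c$), and an $S'$-robot can move only when an $M$-robot is present (with a $T$-robot present its guard simplifies to $c\cap\{S\}=\emptyset\wedge\{M\}\subseteq c$), and neither situation occurs in $\{T\},\{T,S\},\{T,S'\}$. Hence by fairness a $T$-robot is activated within finitely many rounds; taking the stage to begin at the first such round (consistent with the definition $R^i_j=\{$robots that turn $M$ in the stage$\}$), that robot looks and sees $c\in\{\{T\},\{T,S\},\{T,S'\}\}$, which is exactly the guard that triggers execution of ${\cal P}$, so it executes ${\cal P}$ and turns $M$; this proves (ii). After this execute-round the configuration is obtained from the opening one by replacing some (possibly all) $T$-robots by $M$-robots, i.e. it is one of $\{T,M\},\{T,M,S\},\{T,M,S'\},\{M\},\{M,S\},\{M,S'\}$.

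For (iii) and (iv) I trace each of these six configurations to its stabilization, i.e. to the next configuration using only colors in $\{T,S,S'\}$. If a $T$-robot remains after the execute-round: in $\{T,M\}$ and $\{T,M,S\}$ every $M$-robot satisfies the guard for $M\to S$ and fails the guard for $M\to S'$, so by fairness all $M$-robots become $S$ and the stage ends at $\{T,S\}$; in $\{T,M,S'\}$ the $M$-robots are momentarily blocked (both their guards fail) and the $T$-robots cannot act (they see an $M$), but every $S'$-robot satisfies $S'\to T$, so all $S'$-robots become $T$, reducing to $\{T,M\}$ and then to $\{T,S\}$. If no $T$-robot remains after the execute-round: from $\{M\}$ and $\{M,S\}$ every $M$-robot satisfies $M\to S'$, stabilizing at $\{S'\}$ resp. $\{S,S'\}$; from $\{M,S'\}$ the $M$-robots are again blocked while the $S'$-robots fire $S'\to T$, which reintroduces a $T$-robot and reduces to $\{T,M\}$ and then $\{T,S\}$. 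In every case the robots that executed ${\cal P}$ (exactly those that turned $M$ in the execute-round) are colored $S'$ at the end of the stage precisely when no $T$-robot survives to the end, and $S$ otherwise, which is (iii); and the terminal configurations $\{S'\},\{S,S'\},\{T,S\}$ use only colors in $\{T,S,S'\}$, which is (iv). Along each trace I also verify that $S$- and $S'$-robots stay inert whenever the claimed transition order needs it (for instance, in $\{T,M,S\}$ or $\{M,S,S'\}$ an $S$-robot cannot become $T$ because an $M$-robot is present, and an $S'$-robot cannot become $T$ because an $S$-robot is present), so that the scheduler cannot escape the stated outcomes; the same checks confirm that the four-color configuration $\{T,M,S,S'\}$ is never reached.

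The main obstacle is precisely this ``cleanup'' analysis in (iii)--(iv): since each round activates only an arbitrary nonempty subset of the robots, the configuration can linger in mixed intermediate states such as $\{T,M,S\}$, $\{M,S,S'\}$, or the blocked states $\{T,M,S'\}$ and $\{M,S'\}$, and one must argue --- guard by guard and using fairness --- that no matter how the partial activations are interleaved the reachable colors are exactly those listed and the process converges to a unique terminal configuration in each case. The delicate point is the pair of blocked states, where the $M$-robots can move only after the $S'$-robots have first converted back to $T$; this requires an explicit argument that the $S'\to T$ conversion is forced and that it necessarily completes before any $M$-robot moves, so that the outcome is $\{T,S\}$ (executors colored $S$) rather than something else.
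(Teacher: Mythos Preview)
Your case analysis contains a genuine gap in the treatment of the post-execute configuration $\{M\}$. You assert that from $\{M\}$ ``every $M$-robot satisfies $M\to S'$, stabilizing at $\{S'\}$.'' This is true only if all $M$-robots are activated while the configuration is still $\{M\}$. Under \SSY\ the adversary may activate a proper nonempty subset; those robots turn $S'$, and the configuration becomes $\{M,S'\}$. At that point the remaining $M$-robots are blocked (as you yourself observe for $\{M,S'\}$ as a separate case): the guard $M\to S'$ requires $c\cap\{T,S'\}=\emptyset$ or $\{S\}\subseteq c$, and both fail. Meanwhile the $S'$-robots --- which here are robots that \emph{did} execute ${\cal P}$ in this stage --- satisfy the guard $c\cap\{S,T\}=\emptyset$ and revert to $T$. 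The system then passes through $\{T,M\}$ (or $\{T,M,S'\}$) and terminates at $\{T,S\}$. In this run some robots that executed ${\cal P}$ end the stage colored $T$, not $S$ or $S'$, so your conclusion for (iii) (``executors are colored $S'$ \ldots\ and $S$ otherwise'') does not hold.

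The paper's own proof records precisely this path: the list of stage transitions under $R^i_1$ includes $\{T\}\to\{M\}\to\{M,S'\}\,(\to\{T,M,S'\})\to\{T,M\}\,(\to\{T,M,S\})\to\{T,S\}$. So the paper is aware that $\{M\}$ need not stabilize at $\{S'\}$, and its statement of (iii) should be read as describing the immediate $M\to S$ or $M\to S'$ transition of each $M$-robot at the moment it fires, not the robot's final color at the end of the stage. Your stronger reading is not what the protocol guarantees. The contrast with $\{M,S\}$ is instructive: there the intermediate configuration is $\{M,S,S'\}$, the presence of $S$ both keeps the second disjunct of the $M\to S'$ guard alive and blocks the $S'\to T$ guard, so all $M$-robots do eventually reach $S'$; that argument does not transfer to $\{M\}$ because no $S$-robot is present.
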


\begin{proof}
Parts (i) and (iv) follow directly from the definition. $R^i_1$ is the earliest stage when only $c(\in\{T, S, S'\})$-robots exist (e.g at the beginning). %From the \SSY\ definition, the robots activated in $R_1$ performs Look, Compute, and Move at the same time, so that every robots has the same snapshot $S$, and in this snapshot all robots are colored $c\in\{\textup{T, S}\}$. 
During $R^i_1$, some of $T$-robots turn their color to M (and execute ${\cal{P}}$). Thereafter, all robots except $M$-robots, won't change the color (such a robot does not execute ${\cal{P}}$) in this stage. This proves parts (ii). According to the rules of the protocol, those robots turn their color to M and have executed one step of ${\cal{P}}$, would now change to color S' if $T$-robots are not exists in the configuration (at this time stage is $R^i_{k_i}$), otherwise change to S (stage is $R^i_j (1 \leq j < k_i)$). When all these robots' color turn to S or S', the $T$-robots would have a next chance to execute ${\cal{P}}$ and turn to M. This is the end of this stage and conditions (iii) and (iv) hold at this time.

In addition, from Fig.~\ref{fig:alg4-rs-ss-statetran}, the following transitions are made within each stage:

\begin{itemize}
    \item $R^i_1:$ At the beginning of this stage, none of the robots are execute ${\cal{P}}$.
    \begin{itemize}
        \item $\{T\} \rightarrow \{M\} \rightarrow \{M,S'\}\ (\rightarrow \{T,M,S'\}) \\\indent \ \rightarrow \{T,M\}\ (\rightarrow \{T,M,S\}) \rightarrow \{T,S\}$
        \item $\{T\} \rightarrow \{T,M\}\  (\rightarrow \{T,M,S\}) \rightarrow \{T,S\}$
        \item $\{T,S'\} \rightarrow \{T,M,S'\} \rightarrow \{T,M\}\ (\rightarrow \{T,M,S\}) \rightarrow \{T,S\}$
        \item $\{T,S'\} \rightarrow \{M,S'\}\ (\rightarrow \{T,M,S'\}) \\\indent \ \rightarrow \{T,M\}\ (\rightarrow \{T,M,S\}) \rightarrow \{T,S\}$
    \end{itemize}
    \item $R^i_j\ (2 \leq j < k_i):$ At the end of this stage, there are robots that have executed ${\cal{P}}$ once and robots that have never.
    \begin{itemize}
        \item $\{T,S\} \rightarrow \{T,M,S\} \rightarrow \{T,S\}$
    \end{itemize}
    \item $R^i_{k_i}:$ At the end of this stage, all robots have been executed ${\cal{P}}$ exactly once.
    \begin{itemize}
        \item $\{T\} \rightarrow \{M\} \rightarrow \{S'\}\ (k_i=1)$
        \item $\{T,S\} \rightarrow \{M,S\}\  (\rightarrow \{M,S,S'\}) \rightarrow \{S,S'\}\ (k_i\neq 1)$
    \end{itemize}
\end{itemize}
$\qed$
\end{proof}

\begin{lemma}\label{lem:mc-sim-rs-s-1}
For each mega-cycle, the following conditions are met:
\begin{enumerate}
\item[(i)] At the beginning of a mega-cycle, only robots in states $T$ and $S$, denoted as $c(\in\{T, S\})$-robots, are present.
\item[(ii)] $S'$-robots do not execute protocol ${\cal{P}}$ during the first stage of a mega-cycle.
\item[(iii)] Throughout the mega-cycle, each robot executes protocol ${\cal{P}}$ exactly once.
\item[(iv)] Each mega-cycle concludes within a finite time frame.
\item[(v)] At the end of a mega-cycle, only robots in states $S$ or $S'$, denoted as $c'(\in\{S, S'\})$-robots, are present.
\end{enumerate}
%For each mega-cycle, the following holds:\\
%(i) At the beginning of a mega-cycle, only $c(\in\{T, S\})$-robots exist. (ii) $S'$-robots never executes ${\cal{P}}$ in the first stage of mega-cycle. (iii) During the mega-cycle, each robots executes ${\cal{P}}$ exactly once. (iv) Each mega-cycle ends with finite time. (v) At the ends of mega-cycle, only $c'(\in\{S, S'\})$-robots exist.  
\end{lemma}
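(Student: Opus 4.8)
The plan is to build the proof of Lemma~\ref{lem:mc-sim-rs-s-1} directly on top of the per-stage analysis of Lemma~\ref{lem:st-rs-s}, by an induction on the index $i$ of the mega-cycle whose base case is the initial configuration (all lights $T$). Parts (i) and (v) are essentially definitional: (i) restates the condition under which a mega-cycle is declared to begin (only $T$- and $S'$-robots present) and (v) restates the condition under which it is declared to end (only $S$- and $S'$-robots present); the only thing to verify here is that the transition leaving the previous mega-cycle indeed lands in such a starting configuration, i.e. that $\{S,S'\}\to\{T,S,S'\}\to\{T,S'\}$ and $\{S'\}\to\{T\}$ (or $\{S'\}\to\{T,S'\}$) are the only ways to exit a mega-cycle, which follows by inspecting the guards of the $S\to T$ and $S'\to T$ rules. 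Hence the substance of the lemma is in (ii), (iii) and (iv), which I would prove together by walking through the stages $R^i_1,\dots,R^i_{k_i}$.

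For part (ii): if the mega-cycle starts from the all-$T$ configuration there is nothing to prove, so assume it starts from $\{T,S'\}$. By Lemma~\ref{lem:st-rs-s}(ii) at least one $T$-robot executes ${\cal{P}}$ and turns to $M$ during $R^i_1$; before any $M$-robot appears the configuration is contained in $\{T,S'\}$ and still contains a $T$-robot, and in that case the guard of the $S'\to T$ rule is unsatisfied, so the carried-over $S'$-robots do not change colour yet. Once an $M$-robot is present I would check, from the guards, that (1) an $M$-robot cannot leave state $M$ while any $S'$-robot remains (its $M\to S$ guard needs the absence of $S'$, and its $M\to S'$ guard needs the absence of $T$, but $T$-robots are present throughout $R^i_1$), so $R^i_1$ cannot terminate until every $S'$-robot has converted; and (2) while an $M$-robot is present no $T$-robot satisfies the ``Execute ${\cal{P}}$'' condition, which requires the snapshot to be $\{T\}$, $\{T,S\}$ or $\{T,S'\}$ — all $M$-free. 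Combining (1), (2) with fairness, during $R^i_1$ the inherited $S'$-robots merely switch to $T$ (all of them do, by whichever disjunct of the $S'\to T$ guard applies, since otherwise the stage cannot end) and none of them executes ${\cal{P}}$; this is exactly the cross-boundary disjointness demanded by \RSY.

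For parts (iii) and (iv): I would track the quantity $\delta$ = the number of robots that have not yet executed ${\cal{P}}$ in the current mega-cycle, which equals $n$ at the start (the $T$-robots have not executed; the $S'$-robots executed only in the previous mega-cycle). By Lemma~\ref{lem:st-rs-s}(ii) at least one robot executes ${\cal{P}}$ in every stage, and once a robot has executed ${\cal{P}}$ in this mega-cycle it stays in $\{M,S,S'\}$ until the mega-cycle ends — the configurations enabling $S\to T$ (no $M$, some $S'$) or enabling $S'\to T$ for an already-executed robot do not occur before the end, a fact read off the guards together with Lemma~\ref{lem:st-rs-s}(iii) (a robot turns $M\to S'$ only in the last stage). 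Hence $\delta$ strictly decreases by at least one per stage, so the mega-cycle has at most $n$ stages; together with termination of each individual stage — obtained from a short case check that every intermediate configuration of a stage ($\{M,S'\}$, $\{T,M,S'\}$, $\{T,M\}$, $\{T,M,S\}$, $\{M,S\}$, $\{M,S,S'\}$, $\ldots$) has at least one robot with an enabled transition, plus fairness — this gives (iv). When $\delta$ reaches $0$, every robot has executed ${\cal{P}}$ exactly once (at most once since it never reverts to $T$ within the mega-cycle; at least once since a robot ending in $S$ or $S'$ must have passed through $M$, hence executed — an inherited $S'$-robot counts here only via its re-execution, which must occur because the mega-cycle cannot end while it is $T$), and the configuration then contains only $S$- and $S'$-robots, which is precisely the end of the mega-cycle. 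This yields (iii) and re-confirms (v).

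The main obstacle I expect is part (ii) together with the ``at least once'' half of (iii): one must argue, with some care about the order in which the semi-synchronous activations occur, that the $S'$-robots inherited from the previous mega-cycle necessarily convert to $T$ strictly inside the first stage — early enough that the still-present $M$-robots keep them from executing ${\cal{P}}$ — and yet are guaranteed to execute ${\cal{P}}$ in some later stage of the same mega-cycle. This rests entirely on reading the guards of the four colour-change rules correctly, together with a small but unavoidable case analysis over the intermediate configurations; the remainder is the bookkeeping of the monotone counter $\delta$.
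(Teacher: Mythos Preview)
Your proposal is correct and follows essentially the same approach as the paper: both build on the per-stage Lemma~\ref{lem:st-rs-s}, argue part~(ii) by reading off the guards of the $S'\to T$ and $T\to M$ rules in the first stage, and obtain (iii)--(iv) via a strictly decreasing counter (your $\delta$; the paper uses the number of $T$- and $S'$-robots). One small imprecision: your justification that ``$T$-robots are present throughout $R^i_1$'' can fail in an \SSY\ round where all $T$-robots turn $M$ before any $S'$ converts, but the conclusion still holds because in that situation there is no $S$-robot either, so the second disjunct of the $M\to S'$ guard is also unsatisfied; you may want to phrase that case explicitly.
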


\begin{proof}
Due to Lemma~\ref{lem:st-rs-s} we know that, in each stage, a non-empty subset of the robots execute ${\cal{P}}$. Thus each stage of protocol $\textup{SIM}^{RS}_{S}$ corresponds to one activation round of an execution of ${\cal{P}}$ in the \RSY\ model where the set of robots activated in that round corresponds to the set of $T$-robots turn to color M, and eventually to $c\in\{S, S'\}$, during this stage of $\textup{SIM}^{RS}_{S}$. Since there are no $M$-robots exist at the stage $R^i_1$, $S'$-robots can be activated but cannot turn to color T. Also, at the stage $R^i_{k_i}$ $M$-robot exists, so $S'$-robots can turn to T, but cannot execute ${\cal{P}}$. This proves parts (ii). As long as there are more $T$-robots and $S'$-robots exist, another stage can begin. With each stage, the sum of those robots decrease. Eventually all robots will be colored $c\in\{S, S'\}$, and thus, each robot will have executed ${\cal{P}}$ exactly once (stage $R^i_{k_i}$). At the end of stage $R^i_{k_i}$, the mega-cycle ends and the conditions of the Lemma~\ref{lem:mc-sim-rs-s-1} are satisfied. $\qed$
\end{proof}

\begin{lemma}\label{lem:ms-sim-rs-s-2}
The mega-cycle of protocol $\textup{SIM}^{RS}_{S}$ corresponds to a synchronous activation and disjoint-semi-synchronous activation round in some execution of ${\cal{P}}$ in the \RSY\ model.
\end{lemma}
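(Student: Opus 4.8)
The plan is to combine the two structural lemmas already established---Lemma~\ref{lem:st-rs-s} about stages and Lemma~\ref{lem:mc-sim-rs-s-1} about mega-cycles---and argue that the sequence of stages inside one mega-cycle is precisely a legal activation round (or sequence of rounds) of ${\cal P}$ under the \RSY\ scheduler. First I would recall from Lemma~\ref{lem:mc-sim-rs-s-1}(iii) that within a single mega-cycle every robot executes ${\cal P}$ exactly once, and that a mega-cycle decomposes into stages $R^i_1,\ldots,R^i_{k_i}$; by Lemma~\ref{lem:st-rs-s}(ii) each stage $R^i_j$ contains a non-empty set of robots (those that turn their color to M in that stage) which execute one step of ${\cal P}$. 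So I would map stage $R^i_j$ to the $j$-th activation round of ${\cal P}$ within the $i$-th ``macro-round block,'' with the activated set being exactly the $T$-robots that switch to M during $R^i_j$.

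Next I would split into the two cases identified in the paper. If $k_i=1$, the single stage activates all robots simultaneously (the transition $\{T\}\to\{M\}\to\{S'\}$ from Lemma~\ref{lem:st-rs-s}), which is exactly a \FSY\ round---one of the two allowed behaviours in the \RSY\ activation sequence definition (the case $\forall i\ge 1,\ e_i=R$, or a prefix of such rounds). If $k_i\ge 2$, I must verify the disjointness constraints: (a) $R^i_j\cap R^i_{j+1}=\emptyset$ for consecutive stages inside the mega-cycle, and (b) $R^i_{k_i}\cap R^{i+1}_1=\emptyset$ across the mega-cycle boundary. For (a), note that a robot which turned to M in $R^i_j$ becomes an $S$- or $S'$-robot by the end of that stage (Lemma~\ref{lem:st-rs-s}(iii)), and the protocol rules forbid an $S$- or $S'$-robot from executing ${\cal P}$ until the next mega-cycle---so it cannot be in $R^i_{j+1}$. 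For (b), I would use Lemma~\ref{lem:mc-sim-rs-s-1}(ii): the $S'$-robots (exactly those that executed in the last stage $R^i_{k_i}$) do not execute ${\cal P}$ in the first stage $R^{i+1}_1$, because at stage $R^{i+1}_1$ there are no $M$-robots and an $S'$-robot cannot transition to T and then execute in the same stage. Conversely, the robots executing in $R^{i+1}_1$ are $T$-robots, hence not among the $S'$-robots of $R^i_{k_i}$; this gives disjointness across the boundary.

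Finally I would assemble these into the claimed correspondence: concatenating the stages of all mega-cycles yields an infinite activation sequence $\langle e_1,e_2,\ldots\rangle$ in which either every $e_i=R$ (all mega-cycles are \FSY-phase, $k_i=1$) or, after a possibly empty \FSY\ prefix, every stage is a non-empty proper subset satisfying $e_i\cap e_{i+1}=\emptyset$---which is exactly the restricted-repetition condition defining \RSY. Combined with Lemma~\ref{lem:mc-sim-rs-s-1}(iv) (each mega-cycle terminates) and fairness of the underlying \SSY\ scheduler (every robot activated infinitely often, so infinitely many stages occur and every robot executes ${\cal P}$ infinitely often), this shows the simulated execution is a genuine, fair \RSY\ execution of ${\cal P}$.

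I expect the main obstacle to be pinning down the boundary case (b)---the disjointness between the last stage of one mega-cycle and the first stage of the next---since this is where the four-color design (the split of ``stopped'' into S and S') actually does its work, and it requires carefully tracing the transition conditions in Algorithm~\ref{alg4-rs-ss} to confirm that an $S'$-robot genuinely cannot both clear to T and execute ${\cal P}$ within the single stage $R^{i+1}_1$. A secondary subtlety is making sure the stage decomposition is well-defined even when the \SSY\ adversary activates robots in colors that do nothing in a given round (so that a ``stage'' may span several \SSY\ rounds); here I would lean on the definition of a stage as the interval between consecutive configurations in which only $c(\in\{T,S,S'\})$-robots are present, as already set up before the lemma.
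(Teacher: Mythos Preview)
Your plan matches the paper's proof almost exactly: invoke Lemmas~\ref{lem:st-rs-s} and~\ref{lem:mc-sim-rs-s-1} to map stages to activation rounds, split into the \FSY\ case ($k_i=1$) and the Disjoint case ($k_i\ge 2$), and verify the cross-boundary disjointness $R^i_{k_i}\cap R^{i+1}_1=\emptyset$ via the $S'$ color and Lemma~\ref{lem:mc-sim-rs-s-1}(ii). You are in fact more careful than the paper on the intra-mega-cycle disjointness $R^i_j\cap R^i_{j+1}=\emptyset$, which the paper leaves implicit.

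The one point the paper argues explicitly that you only \emph{assert} is the non-reversion: once the execution enters the Disjoint phase it never returns to an all-$R$ round. The paper devotes a paragraph to this, arguing from the configuration diagram that the global color-set never collapses from several colors back to a singleton. In your outline this is hidden inside the phrase ``after a possibly empty \FSY\ prefix.'' It does follow from your point~(b) together with Lemma~\ref{lem:mc-sim-rs-s-1}(iii): if mega-cycle $i$ is in the Disjoint phase then $R^i_{k_i}\neq\emptyset$ and $R^{i+1}_1\cap R^i_{k_i}=\emptyset$ force $R^{i+1}_1\neq R$, hence $k_{i+1}\ge 2$ and mega-cycle $i{+}1$ is again Disjoint. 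But since the \RSY\ definition requires the all-$R$ rounds to form a genuine prefix, you should state this inference rather than leave it implicit.
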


\begin{proof}
In each round of the \FSY-phase, all robots are activated simultaneously in one stage of the mega-cycle. For the Disjoint-phase, we derive the correspondence using the lemma that connects stages and mega-cycles. Specifically, $S'$-robots that executed ${\cal{P}}$ in the final stage $R_{k_i}^i$ will not execute ${\cal{P}}$ in the initial stage $R_1^{i+1}$ of the next mega-cycle.

We then demonstrate that the \FSY-phase transitions to the Disjoint-phase but does not revert to the \FSY-phase. The transition to the Disjoint-phase occurs when only a subset of robots is activated from a state where previously all robots had been activated simultaneously (initially all are T-robots; if all are activated together, they share the same state and time of executing ${\cal{P}}$). Conversely, the transition back to the \FSY-phase is precluded by the absence of a configuration shift from multiple states to a single state among the group of robots. This is supported by Fig.\ref{fig:alg4-rs-ss-statetran}, which illustrates that such a transition does not occur. $\qed$
%For each rounds of \FSY-phase, it corresponds to when all robots are activated at the same time in one stage of the mega-cycle. And for each rounds of Disjoint-phase, we can derive the correspondence from the lemma of the stage and the mega-cycle ($S'$-robots, which executed ${\cal{P}}$ in the $R_{k_i}^i$ will not execute ${\cal{P}}$ in the $R_1^{i+1}$). 
%Next, we show that \FSY-phase transit to Disjoint-phase but never return from Disjoint-phase to \FSY-phase. The former one is simple: it occur when only a subset of a group of robots is activated from a state in which all robots had been activated until then (the initial state is all T, if all robots are activated at the same time, the state and the time executing ${\cal{P}}$ would be same). The latter, on the other hand, can be shown by the fact that the group of robots does not reach a configuration with a single state from a one with multiple states. This can be said from Fig.\ref{fig:alg4-rs-ss-statetran}, which shows that no such transition exists. $\qed$
\end{proof}

\begin{theorem}\label{th:SIM-RS-S}
Protocol $\textup{SIM}^{RS}_{S}$ is correct, i.e. any execution of protocol $\textup{SIM}^{RS}_{S}$ in \SSY\ corresponds to a possible execution of ${\cal{P}}$ in \RSY.
\end{theorem}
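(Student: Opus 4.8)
The plan is to assemble Theorem~\ref{th:SIM-RS-S} directly from the structural lemmas already established, treating the correctness claim as the statement that every execution of $\textup{SIM}^{RS}_{S}$ in \SSY\ projects onto a legal execution of ${\cal P}$ under an \RSY\ scheduler. First I would fix an arbitrary fair \SSY\ execution $\sigma$ of $\textup{SIM}^{RS}_{S}$ started from the initial configuration $\forall T$, and use Lemma~\ref{lem:st-rs-s} together with Lemma~\ref{lem:mc-sim-rs-s-1} to partition the timeline of $\sigma$ into well-defined mega-cycles $R^1, R^2, \ldots$, each subdivided into stages $R^i_1, \ldots, R^i_{k_i}$. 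The definition of a stage has to be shown to be unambiguous: I would argue that from the beginning-of-stage configuration (at least one $T$-robot, everything else in $\{S,S'\}$) the protocol deterministically drives the system, through the intermediate $M$-configurations enumerated in the proof of Lemma~\ref{lem:st-rs-s}, to the next end-of-stage configuration, so the boundaries $t_0 < t_1 < \cdots$ are determined by $\sigma$ and the partition is well-formed and exhaustive (no robot is ``left over,'' because by Lemma~\ref{lem:mc-sim-rs-s-1}(iii) every robot recolors to $\{S,S'\}$ within each mega-cycle and then resets).

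Next I would define the candidate \RSY\ activation sequence $E = \langle e_1, e_2, \ldots\rangle$ by declaring $e_m$ to be the set of robots that, during the $m$-th stage (in the global ordering of all stages across all mega-cycles), change their light to $M$ and thereby execute ${\cal P}$. By Lemma~\ref{lem:st-rs-s}(ii) each $e_m$ is non-empty, so $E$ is a genuine activation sequence. I would then verify the defining \RSY\ predicate for $E$. Two cases: if every mega-cycle is an \FSY-phase mega-cycle ($k_i = 1$ for all $i$), then every $e_m = R$ and the first disjunct of the \RSY\ condition holds. Otherwise, by Lemma~\ref{lem:ms-sim-rs-s-2} the execution has an \FSY-prefix (possibly empty) followed exclusively by Disjoint-phase mega-cycles with no reversion; taking $p$ to be the last stage-index of the \FSY-prefix, I must check $\emptyset \neq e_i \neq R$ and $e_i \cap e_{i+1} = \emptyset$ for all $i > p$. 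Non-emptiness is Lemma~\ref{lem:st-rs-s}(ii); $e_i \neq R$ holds because in a Disjoint-phase stage at least one robot is in $\{S,S'\}$ and does not execute (this is exactly what distinguishes the Disjoint-phase, per Lemma~\ref{lem:ms-sim-rs-s-2}); the disjointness $e_i \cap e_{i+1} = \emptyset$ splits into the within-mega-cycle case $R^i_j \cap R^i_{j+1} = \emptyset$ and the across-boundary case $R^i_{k_i} \cap R^{i+1}_1 = \emptyset$, both of which are precisely Lemma~\ref{lem:mc-sim-rs-s-1}(ii) combined with Lemma~\ref{lem:st-rs-s}(iii) — a robot that executed in the last stage of mega-cycle $i$ becomes an $S'$-robot, and $S'$-robots provably do not execute in the first stage of mega-cycle $i+1$.

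Finally I would close the argument on the side of ${\cal P}$ itself: since in each stage exactly the robots of $e_m$ perform one \LK-\CP-\M\ cycle of ${\cal P}$ (the embedded ``Execute ${\cal P}$'' at lines~5--7) while all other robots keep their position and only manipulate their control color, the sequence of configurations of $\sigma$, restricted to positions and to the ${\cal P}$-colors, is identical to the configuration sequence produced by running ${\cal P}$ under the scheduler $E$; and $E$ was just shown to be an \RSY\ schedule. Fairness transfers as well: by Lemma~\ref{lem:mc-sim-rs-s-1}(iii)--(iv) every robot executes ${\cal P}$ once per mega-cycle and mega-cycles complete in finite time, so every robot is activated infinitely often in $E$. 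The main obstacle I anticipate is the bookkeeping around stage boundaries when $S'$-robots are present — one must be careful that an $S'$-robot which is \emph{activated} during stage $R^i_1$ (the scheduler may pick it) genuinely cannot change its light or move there, so that it is correctly excluded from $e_m$; this is the content of the parenthetical ``$(\rightarrow \{T,M,S'\})$'' transitions in Lemma~\ref{lem:st-rs-s} and of Lemma~\ref{lem:mc-sim-rs-s-1}(ii), and it is where the four-color encoding (splitting $S$ from $S'$) is actually used, so it deserves an explicit sentence rather than a wave of the hand.
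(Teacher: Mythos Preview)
Your proposal is correct and follows essentially the same route as the paper: both arguments partition the \SSY\ execution into mega-cycles and stages via Lemmas~\ref{lem:st-rs-s} and~\ref{lem:mc-sim-rs-s-1}, invoke Lemma~\ref{lem:ms-sim-rs-s-2} for the one-way \FSY-to-Disjoint transition, and derive fairness from the ``once per mega-cycle'' property. Your write-up is more explicit than the paper's---you actually build the activation sequence $E$ and check each clause of the \RSY\ predicate (including $e_i \neq R$ and the within- versus across-mega-cycle disjointness split), and you flag the $S'$-robot bookkeeping at stage boundaries---whereas the paper compresses all of this into a reference to Lemma~\ref{lem:ms-sim-rs-s-2}; but the underlying decomposition and the lemmas used are identical.
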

\begin{proof}
First notice that after each mega-cycle, only $c(\in\{S, S'\})$-robots exist (Lemma 2) and thus, according to the state transition diagram of protocol $\textup{SIM}^{RS}_{S}$, after each mega-cycle ends, the next mega-cycle begins automatically. Since each mega-cycle terminates in finite time, the execution of protocol $\textup{SIM}^{RS}_{S}$ is an infinite sequence of mega-cycles. We have already shown that each stage within a mega-cycle of protocol $\textup{SIM}^{RS}_{S}$ corresponds to a synchronous activation and disjoint-semi-synchronous activation round in some execution of ${\cal{P}}$ in the \RSY\ model (Lemma 3). We now need to show that the sequence of such activation rounds satisfies the fairness property. The fairness property requires that in any infinite execution of ${\cal{P}}$, each robot must be activated infinitely often. We have shown that in each mega-cycle, each robot actively executes ${\cal{P}}$ once. Thus in any infinite execution, i.e. an infinite sequence of mega-cycles, each robot executes ${\cal{P}}$ infinitely many times. Hence, this execution simulates a possible execution of protocol ${\cal{P}}$. In fact, this particular execution also satisfies the stronger condition of 1-fairness, where each robot is activated exactly once in a mega-cycle.

Note that if protocol ${\cal{P}}$ is a terminating algorithm (i.e., it terminates for every execution), then during the simulation, the robots would stop moving in finite time. $\qed$
\end{proof}

Thus, Theorem~\ref{4col-RS-SS} and its corollary hold.
%Those Lemmas and Theorem are proved in Appendix~\ref{app:4Col-RS-SS}. 

\begin{corollary}
$\forall R \in \mathcal R,  {\LU_k^{RS}(R) \subseteq \LU_{4k}^{S}}(R).$
\end{corollary}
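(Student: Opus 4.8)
The plan is to lift Theorem~\ref{th:SIM-RS-S} (equivalently Theorem~\ref{4col-RS-SS}) from the oblivious case $k=1$ to arbitrary $k$ by composing the simulator $\textup{SIM}^{RS}_{S}$ with the simulated protocol ${\cal P}$, running both in lockstep. Concretely, given a protocol ${\cal P}$ for $\LU$ robots with $k$ colors working under \RSY, I would have each robot carry a light whose color is a pair $(a,b)$, where $a\in\{T,M,S,S'\}$ is the control color of $\textup{SIM}^{RS}_{S}$ and $b\in\{1,\dots,k\}$ is the color that ${\cal P}$ would assign. This gives $4k$ colors total, matching the claimed bound. In the \textit{Compute} phase a robot first strips off the $\cal P$-components of the observed snapshot to obtain a $\cal P$-configuration, and runs the $\textup{SIM}^{RS}_{S}$ transition rules on the control components exactly as before; whenever those rules call for the action ``Execute ${\cal P}$'' (the $T\to M$ transition), the robot instead computes ${\cal P}$'s destination and ${\cal P}$'s new color on that stripped snapshot, updating the $b$-component accordingly and setting the $a$-component to $M$; on all other control transitions the $b$-component is left unchanged.

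The key steps, in order, are: (1) define the product light set $C' = \{T,M,S,S'\}\times C_{\cal P}$ with $|C'| = 4k$, and specify the composed protocol as above, being careful that the \RSY-control decisions depend \emph{only} on the $a$-components so that Lemmas~\ref{lem:st-rs-s}--\ref{lem:ms-sim-rs-s-2} and Theorem~\ref{th:SIM-RS-S} apply verbatim to the $a$-projection of any execution; (2) invoke Theorem~\ref{th:SIM-RS-S} to conclude that the sequence of stages of the composed protocol, read through the $a$-projection, induces a legal \RSY\ activation sequence $\langle e_1, e_2, \dots\rangle$ in which $e_i$ is exactly the set of robots performing the $T\to M$ step in stage $i$; (3) observe that, by construction, the $b$-components together with the robots' positions evolve in precisely the way ${\cal P}$ would evolve them under that activation sequence --- the robot that fires in stage $i$ sees, at the moment it fires, the current $\cal P$-configuration (positions plus $b$-colors), because within a stage no other robot changes its $b$-component or position, which is the semi-synchronous ``atomic cycle'' property inherited from \SSY; (4) conclude that any problem solvable by ${\cal P}$ under \RSY\ is solved by the composed protocol under \SSY\ with $4k$ colors, which is exactly $\LU_k^{RS}(R) \subseteq \LU_{4k}^{S}(R)$.

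The main obstacle --- really the only delicate point --- is step~(3): verifying that the $\cal P$-snapshot a robot reads when it executes its $T\to M$ transition is a faithful instantaneous $\cal P$-configuration, i.e. that there is no ``staleness'' introduced by the control layer. In \SSY\ this is clean because every activated robot performs a full atomic \textit{Look}-\textit{Compute}-\textit{Move} in one round, so positions and the $b$-components it reads are the genuine current values; and since in each stage the firing robots move to ${\cal P}$'s destinations and adopt ${\cal P}$'s colors simultaneously, the induced step is exactly a synchronous (or disjoint) \RSY\ step of ${\cal P}$. One should also note that $\cal P$ is allowed to be an arbitrary $\LU$ protocol --- in particular it may itself not change a robot's color in some cycle, which is consistent with the ``light remains unchanged'' convention --- so the $b$-component update rule ``set to whatever ${\cal P}$ outputs, else leave unchanged'' is well defined. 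Finally I would remark, as the paper does for $\textup{SIM}^{RS}_{S}$, that terminating ${\cal P}$ stays terminating and that the self-stabilizing variant ss-$\textup{SIM}^{RS}_{S}$ composes the same way without increasing the color count beyond $4k$.
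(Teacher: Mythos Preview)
Your proposal is correct and is precisely the standard product construction the paper has in mind; the paper itself gives no explicit proof of this corollary, simply stating that it follows from Theorem~\ref{4col-RS-SS} and Theorem~\ref{th:SIM-RS-S}. Your careful verification of step~(3)---that in \SSY\ all $T\to M$ transitions within a stage occur in a single round (since once one robot becomes $M$ the firing condition $c\in\{\{T\},\{T,S\},\{T,S'\}\}$ fails), so all firing robots see the same ${\cal P}$-snapshot---is exactly the point that makes the lift from $k=1$ to general $k$ go through, and is more detailed than anything the paper spells out.
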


\subsection{Making $\textup{SIM}^{RS}_{S}$ Self-stabilizing}
%図の番号がなぜか3.2になる(最悪手動で番号付けする)
An simulation protocol is {\em self-stabilizing} for protocol $\cal{P}$ if it satisfies the conditions of the scheduler under which $\cal{P}$ is executed from any initial configuration, stating with all robots in inactive. 

We can make the protocol $\textup{SIM}^{RS}_{S}$ self-stabilizing (denoted as ss-$\textup{SIM}^{RS}_{S}$).
Fig.~\ref{fig:alg4-rs-ss-tran} (b) shows ss-$\textup{SIM}^{RS}_{S}$, and the red-labeled part was added to achieve self-stabilization.

We can show that $\textup{SIM}^{RS}_{S}$ works correctly even if it starts from any configuration appearing on the left part of Fig.~\ref{fig:alg4-rs-ss-statetran}, and ss-$\textup{SIM}^{RS}_{S}$ works correctly from any configuration by adding the red-labeled part in Fig.~\ref{fig:alg4-rs-ss-tran} (b).

We show ss-$\textup{SIM}^{RS}_{S}$(Algorithm~\ref{fig:alg4-rs-ss}) is correct and self-stabilizing. First,  $\textup{SIM}^{RS}_{S}$ works correctly from any configuration of the \FSY-phase or the Disjoint-phase.

\begin{algorithm}[H]
    \caption{ss-$\textup{SIM}^{RS}_{S}$}\label{fig:alg4-rs-ss}
    State \textit{Look}
    \begin{algorithmic}[0]
        \STATE $Pos\lbrack{r}\rbrack:$ the position on the plane of robot $r$ (according to my coordinate system);
        \STATE $Light\lbrack{r}\rbrack:$ the color of the light of robot.
    \end{algorithmic}
    (Note: I am robot $x$) \\ \\
    State \textit{Compute}
    \begin{algorithmic}[1]
        \STATE $p \leftarrow Pos\lbrack{x}\rbrack$.
        \STATE $c \leftarrow \{Light\lbrack{r}\rbrack\ |\ \forall r \in R\}$.
        \IF {$Light\lbrack{x}\rbrack$ = T}
            \IF {$c = \{\textup{T}\} \vee c = \{\textup{T,S}\} \vee c = \{\textup{T,S'}\}$}
                \STATE Execute ${\cal{P}}.$
                \STATE $p \leftarrow computed\ destination$.
                \STATE $Light\lbrack{x}\rbrack \leftarrow$ M.
            \ENDIF
        \ELSIF {$Light\lbrack{x}\rbrack$ = M}
            \IF {$c\cap\{\textup{S'}\} = \emptyset \wedge \{\textup{T}\} \subseteq c$}
                \STATE $Light\lbrack{x}\rbrack \leftarrow$ S.
            \ENDIF
            \IF {$c\cap\{\textup{T,S'}\} = \emptyset \vee (\{\textup{T}\} \subsetneq c \wedge \{\textup{S}\} \subseteq c)$}
                \STATE $Light\lbrack{x}\rbrack \leftarrow$ S'.
            \ENDIF
        \ELSIF {$Light\lbrack{x}\rbrack$ = S} 
            \IF {$(c\cap\{\textup{M}\} = \emptyset \wedge \{\textup{S'}\} \subseteq c) \vee c = \{\textup{S}\} \vee \{\textup{T,M,S'}\} \subseteq c$}
                \STATE $Light\lbrack{x}\rbrack \leftarrow$ T.
            \ENDIF
        \ELSIF {$Light\lbrack{x}\rbrack$ = S'}
            \IF {$c\cap\{\textup{S,T}\} = \emptyset \vee (c\cap\{\textup{S}\} = \emptyset \wedge \{\textup{M}\} \subseteq c)$}
                \STATE $Light\lbrack{x}\rbrack \leftarrow$ T.
            \ENDIF
        \ENDIF
    \vspace{3mm}
    \end{algorithmic} 
    
    State \textit{Move} 
    \begin{algorithmic}[0]
        \STATE Move($p$).
    \end{algorithmic}
\end{algorithm}
%\caption{Self-stabilizing Protocol ss-$\textup{SIM}^{RS}_{S}$}
%\end{figure}

\begin{lemma}\label{lem:ss-RS-SS}
Protocol $\textup{SIM}^{RS}_{S}$ is correct even if it starts from any configuration which in the \FSY-phase or the Disjoint-phase.
\end{lemma}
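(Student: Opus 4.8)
The plan is to show that the configuration-transition diagram in the left part of Fig.~\ref{fig:alg4-rs-ss-statetran} is \emph{closed} and \emph{correct} not just when started from the canonical start configuration $\{T\}$, but when started from any configuration that already appears in that diagram, i.e. any configuration reachable in a legitimate execution. Concretely, I would first enumerate the configurations that lie on the \FSY-phase and the Disjoint-phase of the diagram — these are exactly the global states $\{T\}$, $\{M\}$, $\{M,S'\}$, $\{T,M,S'\}$, $\{T,M\}$, $\{T,M,S\}$, $\{T,S\}$, $\{S'\}$, $\{T,S'\}$, $\{M,S\}$, $\{M,S,S'\}$, $\{S,S'\}$, $\{T,S,S'\}$ — and observe that each such configuration is either a stage-boundary configuration (only $T,S,S'$ colors present, with at least one $T$) or an intermediate configuration within a stage. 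The key point is that from any of these, exactly the transitions drawn in the figure are enabled, and no ``escape'' transition to a configuration outside the diagram exists.

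The argument then reduces to re-running the proofs of Lemmas~\ref{lem:st-rs-s}, \ref{lem:mc-sim-rs-s-1}, and~\ref{lem:ms-sim-rs-s-2}, but starting the bookkeeping from an arbitrary diagram configuration rather than from $\{T\}$. I would proceed in the order: (a) if we start inside a stage (colors $T,M$ plus possibly $S$ or $S'$), show the stage completes correctly — every robot that turned $M$ executes ${\cal P}$ exactly once and settles to $S$ or $S'$ as dictated by whether $T$-robots remain, which is precisely Lemma~\ref{lem:st-rs-s}(ii)--(iv) with the same proof; (b) if we start at a stage boundary, the usual stage/mega-cycle analysis applies verbatim; (c) if we start at a mega-cycle boundary configuration $\{S'\}$, $\{T,S'\}$, $\{S,S'\}$, or $\{T,S,S'\}$, show the transitions $\{S'\}\to\{T\}$ or $\{S'\}\to\{T,S'\}$, and $\{S,S'\}\ (\to\{T,S,S'\})\to\{T,S'\}$ still fire and deliver a proper mega-cycle start — so the execution from that point onward is an honest infinite sequence of mega-cycles, and hence (by Theorem~\ref{th:SIM-RS-S}) simulates a legal \RSY\ execution of ${\cal P}$. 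One subtlety to flag explicitly: when starting mid-stage, some robots are already $M$; we must argue these correspond to robots ``activated in the current round'' of the simulated \RSY\ execution, and that the disjointness condition $R^i_{k_i}\cap R^{i+1}_1=\emptyset$ is still enforced because the $S'$-colored robots carry the ``executed in the last stage'' token across the mega-cycle boundary exactly as in the clean run.

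The main obstacle I anticipate is verifying that no diagram configuration admits a transition leading \emph{out} of the diagram — in particular that the $M$-robot guard conditions (lines 10--14 of Algorithm~\ref{alg4-rs-ss}), which were tuned for clean executions, do not, on some ``shifted'' but still-legitimate configuration, allow an $M$-robot to jump to $S$ when it should go to $S'$ or vice versa, thereby producing e.g. a spurious $\{S\}$ or a configuration mixing $S$ and $S'$ prematurely. I would handle this by a finite case check over the at most a dozen diagram configurations: for each, list which color-update guards are satisfiable, and confirm the resulting configuration is again in the diagram. Because the set of diagram configurations is small and the guards are simple Boolean conditions on the color multiset, this is a routine but necessary exhaustive verification; once it is done, closure of the diagram plus the already-proved per-stage and per-mega-cycle correctness immediately give the lemma.
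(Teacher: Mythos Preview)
Your approach is correct but takes a genuinely different route from the paper. The paper's proof is a short \emph{suffix argument}: every color-configuration appearing in the \FSY-phase or Disjoint-phase of the diagram is, by construction, reachable from $\{T\}$ in some legitimate execution; because the system is \SSY, the configuration at a round boundary completely determines the set of possible futures; hence starting the simulator from any such configuration $s$ yields exactly the set of suffixes (from the round where $s$ is reached) of legitimate executions from $\{T\}$. Since a suffix of an \RSY\ schedule is again an \RSY\ schedule (the definition allows $p=0$, i.e.\ starting directly in the Disjoint-phase), correctness transfers immediately. No closure check or re-running of the stage/mega-cycle lemmas is needed.

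Your plan instead verifies closure of the diagram by exhaustive case analysis on the dozen or so configurations, and then re-establishes Lemmas~\ref{lem:st-rs-s}--\ref{lem:ms-sim-rs-s-2} from each possible starting point. This works and is more self-contained---it does not rely on the (implicit) fact that every diagram configuration is reachable from $\{T\}$---but it is considerably more labor. One small correction to your framing: when you start mid-stage with some robots already colored $M$, those robots have \emph{not} executed $\cal P$ (they merely carry the color), so they should not be counted as ``activated in the current round'' of the simulated schedule; they simply transition to $S$ and sit out the partial first mega-cycle. The resulting simulated schedule is still a valid \RSY\ schedule (fairness and disjointness hold), but the bookkeeping is cleaner if you treat the partial first mega-cycle as covering only the robots that were $T$ at the start. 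The paper's suffix argument sidesteps this nuance entirely.
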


\begin{proof}
We have shown that the $\textup{SIM}^{RS}_{S}$ can correctly simulate all robots starting with $T$.
For simplicity, we consider protocols with a finite number of executions.
Considering that the protocol $\textup{SIM}^{RS}_{S}$ is a sequence of activation rounds and that \RSY\ can start from the Disjoint-phase, we can say that the suffix of the execution sequence is also an execution sequence that can be simulated correctly.

Thus, let $s$ be a certain configuration that appears in the \FSY-phase and the Disjoint-phase, $ar$ be the set of arbitrary activation round sequences that can occur in a $\textup{SIM}^{RS}_{S}$ execution, and $s'_i$ be the starting state of round $e_i$ of the activation round sequence.

Furthermore, let $ar'$ be the set of activation round sequences in $ar$ for which there exists a round $t$ satisfying $s=s'_t$. 

If the initial state of the global state is $s$, then the execution is the same as the suffix of any of the activation round sequences in $ar'$.
It is correct due to the correctness of the $\textup{SIM}^{RS}_{S}$ and the synchronization of the execution by \SSY.

Therefore, even if we let the $\textup{SIM}^{RS}_{S}$ run with any global configuration that appears in the \FSY-phase and the Disjoint-phase as the starting configuration, we can still correctly simulate the \RSY\ protocol.
$\qed$
\end{proof}

\begin{theorem}\label{th:ss-RS-S}
%Protocol ss-$\textup{SIM}^{RS}_{S}$ is correct and self-stabilizing.%, i.e. any initial global  and any execution of protocol $\textup{SIM}^{RS}_{S}$ in \SSY $^{4}$ corresponds to a possible execution of ${\cal{P}}$ in \RSY.
Protocol ss-$\textup{SIM}^{RS}_{S}$ is correct and self-stabilizing, i.e. from any initial global configuration, any execution of protocol $\textup{SIM}^{RS}_{S}$ in \SSY\ corresponds to a possible execution of ${\cal{P}}$ in \RSY.   
\end{theorem}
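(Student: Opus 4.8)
The plan is to reduce Theorem~\ref{th:ss-RS-S} to Lemma~\ref{lem:ss-RS-SS}. Lemma~\ref{lem:ss-RS-SS} already establishes that the unmodified protocol $\textup{SIM}^{RS}_{S}$ performs a correct \RSY\ simulation of ${\cal{P}}$ whenever it is started in any configuration of the \FSY-phase or of the Disjoint-phase, i.e.\ in any of the configurations drawn in the left part of Fig.~\ref{fig:alg4-rs-ss-statetran}; call this set of configurations \emph{legitimate} and denote it by $\mathcal{L}$. It therefore suffices to prove two things: (a) on every legitimate configuration the guards coloured red in Fig.~\ref{fig:alg4-rs-ss-tran}(b) (equivalently, Algorithm~\ref{fig:alg4-rs-ss}) are inert, so that ss-$\textup{SIM}^{RS}_{S}$ makes exactly the same moves as $\textup{SIM}^{RS}_{S}$ on $\mathcal{L}$; and (b) starting from any \emph{illegitimate} configuration, ss-$\textup{SIM}^{RS}_{S}$ reaches a legitimate one within a bounded number of \SSY\ rounds under any fair adversary, without executing ${\cal{P}}$ in the meantime.

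First I would enumerate the configuration space. A robot's move depends on the current configuration only through the set $c=\{Light[r]\mid r\in R\}$ of colours that are present, and since the colour set is $\{T,M,S,S'\}$ there are only $2^{4}-1=15$ relevant configurations. Sorting them according to Fig.~\ref{fig:alg4-rs-ss-statetran} leaves only a short list of illegitimate configurations --- essentially those in which an $S$-robot coexists with colours that cannot precede $S$ in a genuine stage, e.g.\ $\{S\}$ (which is already a deadlock for the unmodified $\textup{SIM}^{RS}_{S}$), $\{T,S,S'\}$, and $\{T,M,S,S'\}$. For part (a) I would then inspect the red guards case by case over $\mathcal{L}$ and verify that each of them is either false or is addressed only to a colour that no robot in that configuration holds; this makes the transition relation of ss-$\textup{SIM}^{RS}_{S}$ coincide with that of $\textup{SIM}^{RS}_{S}$ on $\mathcal{L}$, so Lemma~\ref{lem:ss-RS-SS} transfers verbatim once a legitimate configuration has been reached.

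For part (b) I would follow the transitions of ss-$\textup{SIM}^{RS}_{S}$ from each illegitimate configuration. In each case the only robots whose guards become enabled are the offending $S$-robots, they can only recolour to $T$, the configuration stays illegitimate until the last such $S$-robot flips, and the colour set then belongs to $\{\,\{T\},\{T,S\},\{T,S'\},\{T,M,S'\}\,\}\subseteq\mathcal{L}$; since by fairness every $S$-robot is activated after finitely many rounds, convergence to $\mathcal{L}$ takes finite time. Moreover no robot executes ${\cal{P}}$ during this repair phase, because a $T$-robot executes ${\cal{P}}$ only when $c\in\{\{T\},\{T,S\},\{T,S'\}\}\subseteq\mathcal{L}$. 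Hence the whole execution is a finite, ${\cal{P}}$-free colour repair followed by a correct run, which by Lemma~\ref{lem:ss-RS-SS} is a valid \RSY\ execution of ${\cal{P}}$ (one whose prefix may lie in the Disjoint-phase, which \RSY\ permits). Fairness of the simulated execution then follows exactly as in Theorem~\ref{th:SIM-RS-S}, since every mega-cycle after the repair activates each robot exactly once and each mega-cycle is finite by Lemma~\ref{lem:mc-sim-rs-s-1}(iv); this yields correctness and self-stabilization from every initial configuration.

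The main obstacle is the finite but delicate case analysis of part (b): one must argue that fairness of the \SSY\ scheduler is genuinely used (several repairing transitions are enabled only for $S$-robots, so one has to guarantee that each of them is activated before anything else can change the colour set), that the colour-set guards leave no gap among the $15$ cases so that no unanticipated illegitimate configuration is reachable, and --- for part (a) --- that the red guards really are inert on $\mathcal{L}$, so that Lemma~\ref{lem:ss-RS-SS} applies literally rather than only in spirit. A secondary point worth being explicit about is that a spurious $S$-robot present in the initial configuration may simply be treated by the repaired run as ``already executed in the current (Disjoint-phase) stage'', which costs it at most one missed mega-cycle and so does not endanger the infinitely-often fairness requirement.
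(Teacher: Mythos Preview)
Your approach is essentially the paper's own: show that the red additions fire only from illegal configurations, perform no step of ${\cal P}$, and drive the system into a configuration of the \FSY-phase or Disjoint-phase, after which Lemma~\ref{lem:ss-RS-SS} applies. The paper's proof is a two-sentence version of exactly this plan; your decomposition into (a) inertness of the red guards on $\mathcal{L}$ and (b) finite ${\cal P}$-free convergence from the complement is simply a more explicit rendering of the same argument.

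One small correction to your enumeration: $\{T,S,S'\}$ is \emph{not} illegitimate. It occurs in the genuine Disjoint-phase transition $\{S,S'\}\to\{T,S,S'\}\to\{T,S'\}$ (see the mega-cycle transition list after Lemma~\ref{lem:st-rs-s} and the left part of Fig.~\ref{fig:alg4-rs-ss-statetran}). The actual illegitimate set is just $\{S\}$ and $\{T,M,S,S'\}$, which is why the only red additions for $S$-robots are the disjuncts $c=\{S\}$ and $\{T,M,S'\}\subseteq c$; note that the second disjunct is vacuously satisfied on the legitimate $\{T,M,S'\}$ but addresses no $S$-robot there, so your part~(a) still goes through. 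Also, your claim that ``the configuration stays illegitimate until the last such $S$-robot flips'' is slightly off: from $\{S\}$ a single flip already lands you in the legitimate $\{T,S\}$. None of this damages the argument, but you should fix the case list before writing the proof out in full.
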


\begin{proof}
Regarding the added transition conditions, since they all perform transitions without executing $\cal{P}$, they only perform a color change from Illegal-states, and eventually merge into global state of \FSY-phase or Disjoint-phase.
Thus, it is the same as starting with the merging state as the initial state.

Therefore, from Lemma~\ref{lem:ss-RS-SS}, ss-$\textup{SIM}^{RS}_{S}$ can correctly simulate the \RSY\ protocol even if it starts from an arbitrary configuration. $\qed$
\end{proof}

\subsection{5-Color Simulation of \RSY\ by \ASY}
%\textcolor{red}{============additional part (5Col-RA-AS-text.tex)============}

If we use one more color (that is, use 5 colors), protocol $\textup{SIM}^{RS}_{S}$ (resp. ss-$\textup{SIM}^{RS}_{S}$) can be extended so that it works in \ASY\ to simulate \RSY\ from an initial configuration (resp. any initial configuration). 
They are called $\textup{SIM}^{RS}_{A}$ and ss-$\textup{SIM}^{RS}_{A}$, and shown in Fig.~\ref{fig:diagram-SIMRSA} (a) and (b), respectively.

In addition to $T$, $M$, $S$, $S'$ which have the same meaning as the colors in Protocol $\textup{SIM}^{RS}_{S}$, a color $W$ (Waiting) is introduced. A robot $r$ that executes the simulated algorithm changes from $T$ to $M$ and executes the algorithm. Unlike in \SSY, in \ASY, robots observing 
$r$ before it changes to $M$ get the same snapshot as $r$, but after changing to $M$, the snapshot differs, thus robots observing an $M$-robot cannot execute the algorithm. At this time, a robot observing an $M$-robot changes its color from $T$ to $W$ and pend its execution until the next stage. Afterward, once all $M$-robots have completed their algorithm execution and changed their color to $S$, the $W$-robots return to $T$, preparing for the execution of the next stage. 

These transitions, along with those involving the color of $W$, are the same as in Protocol $\textup{SIM}^{RS}_{S}$, except for the transitions related to $W$, where $M$-robots change their color from $M$ to $S$ or $S'$ if $W$-robots exist and $S'$-robots and $T$-robots do not exist, or $S$-robots exist and $T$-robots and $W$-robots do not exist, and $S$-robots or $S'$-robots change their color to $T$ if there do not exist $W$-robots\footnote{Based on the meanings of $S$ and $S'$, the transition from $S$ to $T$ occurs when there are no $W$-robots and $M$-robots, but there are $S'$-robots present. Conversely, the transition from $S'$ to $T$ occurs when there are no $W$-robots and $S$-robots, but $M$-robots are present.}. Moreover, after all robots have executed the algorithm (\FSY-phase), they all become $M$-robots. In this case, any robot observing all $M$-robots changes their color to $W$. These $W$-robots then become capable of executing the algorithm in the next stage.

The transition diagrams of configurations when performing $\textup{SIM}^{RS}_{A}$ and ss-$\textup{SIM}^{RS}_{A}$ are shown in Fig.~\ref{fig:diagram-trans-SIMRSA}. The correctness of the protocols can be shown in a way similar to the cases of $\textup{SIM}^{RS}_{S}$ and ss-$\textup{SIM}^{RS}_{S}$.

We show that asynchronous systems equipped with a light of five colors$(\LU_5^{A})$, are at least as powerful as \textit{restricted-repetition} system devoid of lights.

The robots in the system can exist in one of the following colors:
%Specifically, the robot should have one of the following states:
\begin{itemize} \item \textbf{T(rying)}: The robot has not executed ${\cal{P}}$ in the current mega-cycle. \item \textbf{W(aiting)}: The robot has no chance to execute ${\cal{P}}$ in the current stage. \item \textbf{M(oving)}: The robot has executed ${\cal{P}}$ once. \item \textbf{S(topped)}: The robot has executed ${\cal{P}}$ but is not in the last stage of the mega-cycle. \item \textbf{S'(topped)}: The robot has executed ${\cal{P}}$ in the last stage of the mega-cycle. \end{itemize}

\paragraph*{Protocol Description}

The pseudo code of the protocol are presented in Algorithm~\ref{fig:alg-sim-RS-A}, while the transition diagram showing the changes in robot colors is represented in Fig.~\ref{fig:diagram-SIMRSA}.

The light used in $\textup{SIM}^{RS}_{A}$ can display five colors: $T$, $M$, $S$, $S'$, $W$. Initially, all light are set to $T$.

We denote state set as a global configuration (e.g. $\forall T,S$ mean all robots state is either T or S, and both color robot exists at least one).

Fig.~\ref{fig:diagram-trans-SIMRSA} is a transition diagram of global configurations. These global configurations are important state to simulate \RSY.
\begin{itemize}
    \item $\forall T$: The first stage of each mega-cycle in \FSY-phase.
    \item $\forall T,S'$: The first stage of each mega-cycle in Disjoint-phase.
    \item $\forall M$: The last stage of each mega-cycle in \FSY-phase.
    \item $\forall S,S'$: The last stage of each mega-cycle in Disjoint-phase.
    \item $\forall T,S$: The stage which is not first or last of each mega-cycle in Disjoint-phase.
\end{itemize}

%Considering the state in this way, the rule of protocol can be considered as follows.

\begin{figure}[H]
\begin{center}
\includegraphics[width=120mm]{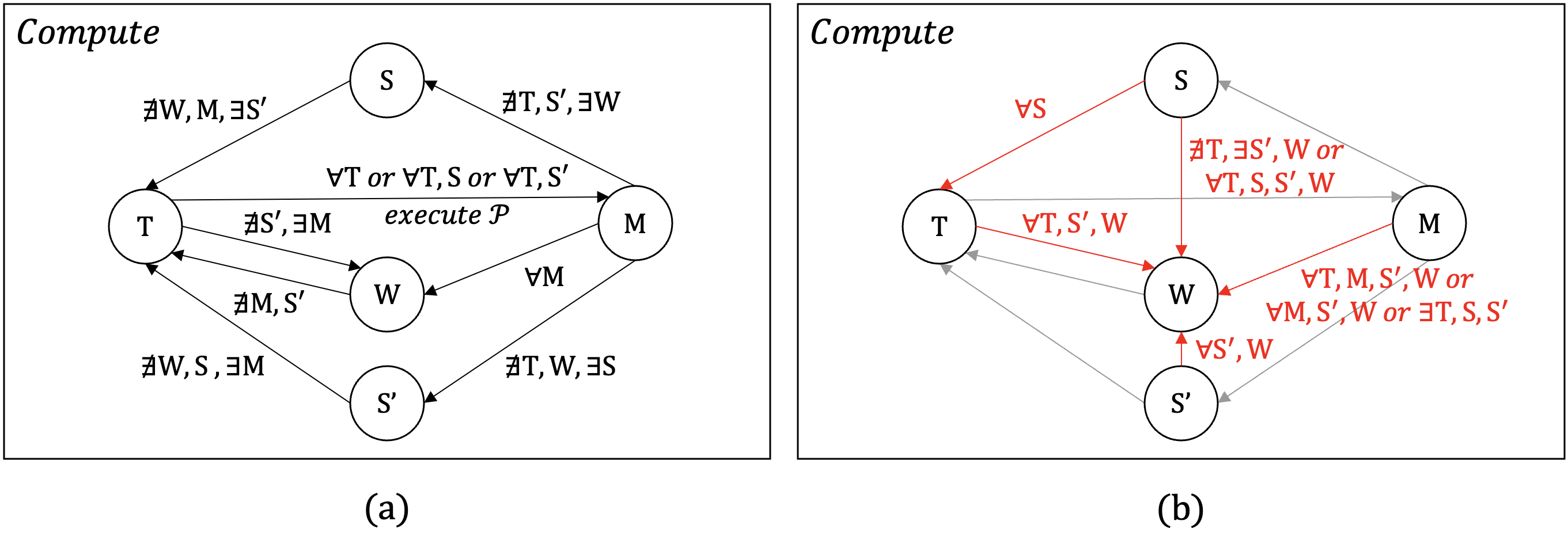}
\caption{Transition diagram representations of (a) protocol $\textup{SIM}^{RS}_{A}$ and (b) self-stabilizing protocol ss-$\textup{SIM}^{RS}_{A}$. The condition colored red in (b) is  newly added to $\textup{SIM}^{RS}_{A}$ to achieve self-stabilization, but not newly added in ss-$\textup{SIM}^{RS}_{A}$ are omitted.}\label{fig:diagram-SIMRSA}
\end{center}
\end{figure}

%\begin{figure}[!t]
\begin{algorithm}[H]
    \caption{$\textup{SIM}^{RS}_{A}$}\label{fig:alg-sim-RS-A}
    State \textit{Look}
    \begin{algorithmic}[0]
        \STATE $Pos\lbrack{r}\rbrack:$ the position on the plane of robot $r$ (according to my coordinate system);
        \STATE $Light\lbrack{r}\rbrack:$ the color of the light of robot.
    \end{algorithmic}
    (Note: I am robot $x$) \\ \\
    State \textit{Compute}
    \begin{algorithmic}[1]
        \STATE $p \leftarrow Pos\lbrack{x}\rbrack$.
        \STATE $c \leftarrow \{Light\lbrack{r}\rbrack\ |\ \forall r \in R\}$.
        \IF {$Light\lbrack{x}\rbrack$ = T}
            \IF {$c = \{\textup{T}\} \vee c = \{\textup{T,S}\} \vee c = \{\textup{T,S'}\}$}
                \STATE Execute ${\cal{P}}.$
                \STATE $p \leftarrow computed\ destination$.
                \STATE $Light\lbrack{x}\rbrack \leftarrow$ M.
            \ENDIF
            \IF {$c\cap\{\textup{S'}\} = \emptyset \wedge \{\textup{M}\} \subseteq c$}
                \STATE $Light\lbrack{x}\rbrack \leftarrow$ W.
            \ENDIF
        \ELSIF {$Light\lbrack{x}\rbrack$ = M}
            \IF {$c = \{\textup{M}\}$}
                \STATE $Light\lbrack{x}\rbrack \leftarrow$ W.
            \ENDIF
            \IF {$c\cap\{\textup{T,S'}\} = \emptyset \wedge \{\textup{W}\} \subseteq c$}
                \STATE $Light\lbrack{x}\rbrack \leftarrow$ S.
            \ENDIF
            \IF {$c\cap\{\textup{T,W}\} = \emptyset \wedge \{\textup{S}\} \subseteq c$}
                \STATE $Light\lbrack{x}\rbrack \leftarrow$ S'.
            \ENDIF
        \ELSIF {$Light\lbrack{x}\rbrack$ = S}
            \IF {$c\cap\{\textup{W,M}\} = \emptyset \wedge \{\textup{S'}\} \subseteq c$}
                \STATE $Light\lbrack{x}\rbrack \leftarrow$ T.
            \ENDIF
        \ELSIF {$Light\lbrack{x}\rbrack$ = S'}
            \IF {$c\cap\{\textup{W,S}\} = \emptyset \wedge \{\textup{M}\} \subseteq c$}
                \STATE $Light\lbrack{x}\rbrack \leftarrow$ T.
            \ENDIF
        \ELSIF {$Light\lbrack{x}\rbrack$ = W}
            \IF {$c\cap\{\textup{M,S'}\} = \emptyset$}
                \STATE $Light\lbrack{x}\rbrack \leftarrow$ T.
            \ENDIF
        \ENDIF
        \vspace{3mm}
    \end{algorithmic} 
    
    State \textit{Move} 
    \begin{algorithmic}[0]
        \STATE Move($p$).
    \end{algorithmic}
\end{algorithm}
%\caption{Protocol $\textup{SIM}^{RA}_{A}$}\label{fig:alg-sim-RS-A}
%\end{figure}

%\begin{figure}[!t]
%\begin{center}
%\includegraphics[width=120mm]{img/5Col-transitions.png}
%\caption{Transition diagram of (a) protocol $\textup{SIM}^{RS}_{A}$ and (b) self-stabilizing protocol ss-$\textup{SIM}^{RS}_{A}$. The condition colored red in (b) is a new addition to $\textup{SIM}^{RS}_{A}$ to achieve self-stability, and not newly added in ss-$\textup{SIM}^{RS}_{A}$ are omitted.}\label{fig:diagram-SIMRSA}
%\end{center}
%\end{figure}

%When a robot with $\alpha$ state, it is called an $\alpha$-robot. 

The protocol simulates a sequence of mega-cycles, each starting with all robots attempting to execute protocol ${\cal{P}}$ ($\forall T$ or $\forall T,S'$) and ending with all robots having executed ${\cal{P}}$ once ($\forall M$ or $\forall S,S'$). After this end configuration, the system transitions to the starting configuration, and a new mega-cycle begins.

During each mega-cycle, each robot has the opportunity to execute one step of protocol ${\cal{P}}$. The following describes how the robot in each state operates in each global configuration, including state transitions.
\begin{itemize}
    \item $T$: The robot has not executed ${\cal{P}}$ in the current mega-cycle.
        \begin{itemize}
            \item $T\rightarrow{M}$\\
                $\forall T,S$ and $\forall T,S'$ are the possible global configuration when this state transition occurs.\\
                As explained above, these are the first stages in the mega-cycle, $T$-robots are trying to execute one LCM cycle of protocol ${\cal{P}}$ in this state transition.\\
            \item $T\rightarrow{W}$\\
                $\nexists S', \exists M$ are the possible global configuration when this state transition occurs.\\
                At this time, executed ${\cal{P}}$ and potentially moving robots are exist, so that $T$-robots are not allowed to move and changes its color to $W$ and waits for the next turn (i.e., it waits until no $M$-robots exist in global configuration).\\
        \end{itemize}
    \item $W$: The robot has no chance to execute ${\cal{P}}$ in the current stage. 
        \begin{itemize}
            \item $W\rightarrow{T}$\\
                $\nexists M,S'$ are the possible global configuration when this state transition occurs.\\
                At this time, no robots executed ${\cal{P}}$ and potentially moving or such a robot may never have existed (i.e., global configuration is $\forall W$) in the same stage. In order to get another chance to execute ${\cal{P}}$ in the next stage or mega-cycle, all $W$-robots change their state to $T$.\\
        \end{itemize}
    \item $M$: The robot has executed ${\cal{P}}$ once. 
        \begin{itemize}
            \item $M\rightarrow{W}$\\
                $\forall M$ is the only possible global configuration when this state transition occurs.\\
                This occurs in \FSY-phase when all $T$-robots execute protocol ${\cal{P}}$ in the same single stage and mega-cycle ended.\\
            \item $M\rightarrow{S}$\\
                $\nexists T,S', \exists W$ are the possible global configuration when this state transition occurs.\\
                From global configuration, some robots could not executed ${\cal{P}}$ in the same stage. This means $M$-robots executed ${\cal{P}}$ in Disjoint-phase but not in the last stage of the mega-cycle.\\
            \item $M\rightarrow{S'}$\\
                $\nexists T,W, \exists S$ are the possible global configuration when this state transition occurs.\\
                From global configuration, no robot are pending execution ${\cal{P}}$ in the same stage. This means $M$-robots executed ${\cal{P}}$ in Disjoint-phase and in the last stage of the mega-cycle.\\
        \end{itemize}
    \item $S$: The robot has executed ${\cal{P}}$ but is not in the last stage of the mega-cycle. 
        \begin{itemize}
            \item $S\rightarrow{T}$\\
                $\nexists W,M, \exists S'$ are the possible global configuration when this state transition occurs.\\
                From global configuration, all robots had executed ${\cal{P}}$ once in the current stage. As the mega-cycle ends, $S$-robot has the right to execute protocol ${\cal{P}}$ in the first stage of the next mega-cycle.\\
        \end{itemize}
    \item $S'$: The robot has executed ${\cal{P}}$ in the last stage of the mega-cycle.
        \begin{itemize}
            \item $S'\rightarrow{T}$\\
                $\nexists W,S, \exists M$ are the possible global configuration when this state transition occurs.\\
                From global configuration, all $S$-robots in the previous stage changed their state to $T$ and after some them are already executed ${\cal{P}}$ in current stage. This means no $T$-robots can get chance to execute ${\cal{P}}$ in the same stage, so that $S'$-robots can change their state to $T$ and wait for next stage. \\
        \end{itemize}
\end{itemize}

\begin{figure}[H]
\begin{center}
\includegraphics[width=120mm]{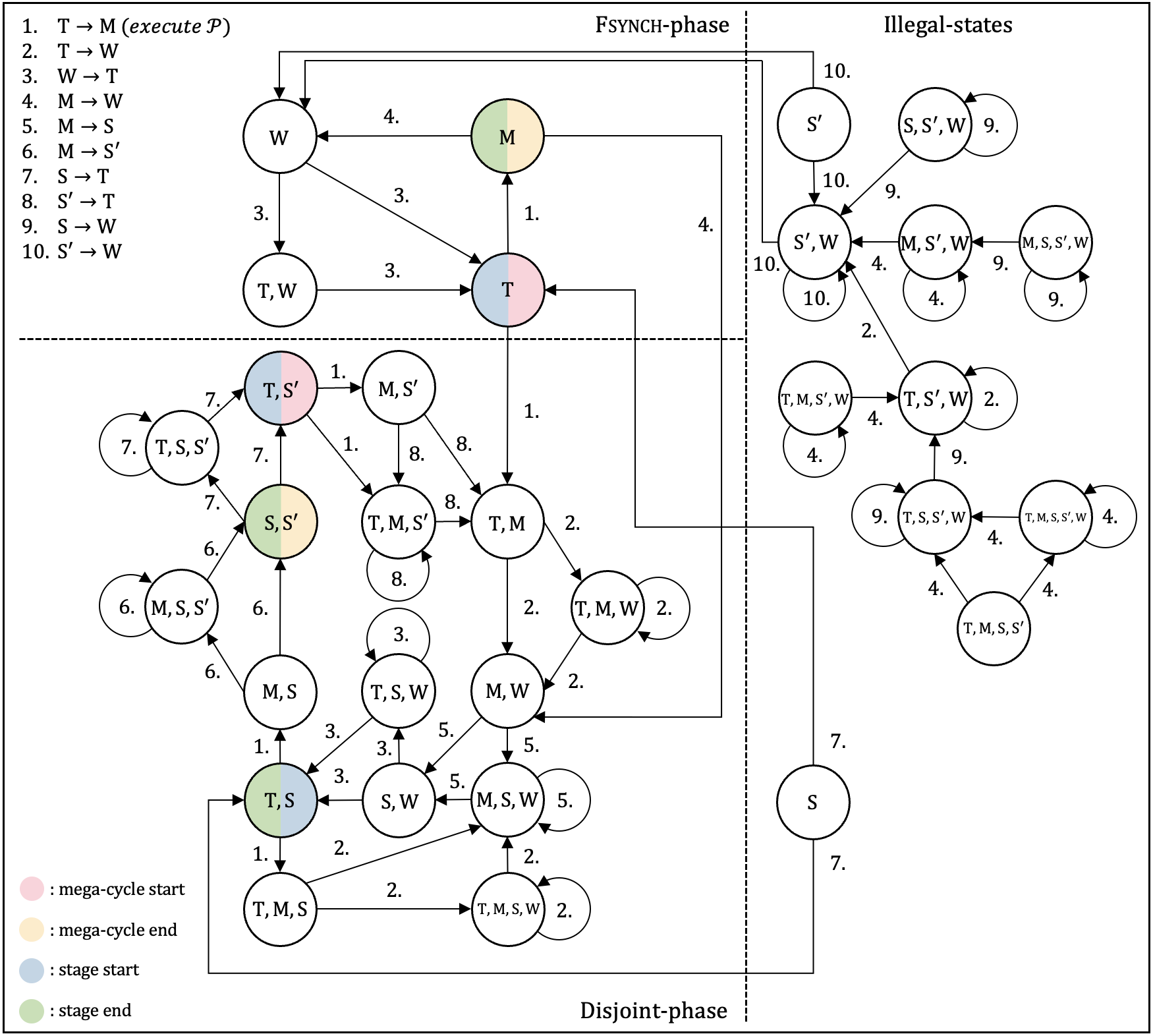}
\caption{Transition diagram of configurations (protocol $\textup{SIM}^{RS}_{A}$(the left part of this figure) and self-stabilizing ss-$\textup{SIM}^{RS}_{A}$).}\label{fig:diagram-trans-SIMRSA}
\end{center}
\end{figure}

\paragraph*{Correctness}

We now prove that Protocol $\textup{SIM}^{RS}_{A}$ provides a fair and correct execution of any \textit{restricted-repetition} protocol ${\cal{P}}$. We first define the concepts of \textit{mega-cycles} and \textit{stages} in this protocol, in terms of the global configuration of the robots.

A \textit{mega-cycle} begins at the time instant $t$ when only $c(\in\{T, S'\})$-robots exist, and mega-cycle ends at the earliest time instant $t'>t$ when only M-robots or only $c'(\in\{S, S'\})$-robots exist. 

For transitions from one mega-cycle to the next, this is accomplished by the following transitions (Fig.~\ref{fig:diagram-trans-SIMRSA}):
\begin{itemize}
    \item \FSY-phase mega-cycle to \FSY-phase mega-cycle\\ 
    $\{M\} \rightarrow \{W\}\ (\rightarrow \{T,W\}) \rightarrow \{T\}$\\
    \item \FSY-phase mega-cycle to Disjoint-phase mega-cycle\\
    $\{M\} \rightarrow \{M,W\}\ (\rightarrow \{M,S,W\}) \rightarrow \{S,W\}\ (\rightarrow \{T,S,W\}) \rightarrow \{T,S\}\ \\\indent \ (\rightarrow \{T,M,S\}\ (\rightarrow \{T,M,S,W\})\ \rightarrow \{M,S,W\} \rightarrow \ldots) \rightarrow \{M,S\}\ \\\indent \ (\rightarrow \{M,S,S'\}) \rightarrow \{S,S'\}\ (\rightarrow \{T,S,S'\}) \rightarrow \{T,S'\}$\\
    \item Disjoint-phase mega-cycle to Disjoint-phase mega-cycle\\
    $\{S,S'\}\ (\rightarrow \{T,S,S'\}) \rightarrow \{T,S'\}$
\end{itemize}

When each of these process is completed, a new mega-cycle starts.\\

A $stage$ of a mega-cycle begins at time $t_0$ when only $c(\in\{T, S, S'\})$-robots exist (with $T$-robot exists). During the stage, some robots change to color M and the stage ends at the earliest subsequent time $t_2>t_0$ when only $c(\in\{T, S, S'\})$-robots or $M$-robots exist. stage $R^i_j$ is the set of $T$-robots which turn to M in time $t_1\ (t_0 \leq t_1 < t_2)$.

The proof about the mega-cycle simulating the \RSY\ scheduler can be derived almost identically to Lemmas~\ref{lem:mc-sim-rs-s-1} and~\ref{lem:ms-sim-rs-s-2}. The only difference is the addition of the case where all robots are M in the mega-cycle's terminate condition. Therefore, we prove here that the behavior of each stage reproduces the activation rounds in \RSY.

\begin{lemma}
For each stage of the protocol, the following conditions are met:
\begin{enumerate}
\item[(i)] At the beginning of the stage, there are one or more $T$-robots, and all other robots are either in states $S$ or $S'$ (denoted as $c(\in\{S, S'\})$-robots).
\item[(ii)] At least one $T$-robot executes protocol ${\cal{P}}$ during this stage.
\item[(iii)] All robots that execute ${\cal{P}}$ during this stage will have the same snapshot.
\item[(iv)] At the conclusion of this stage, only robots in states $T$, $S$, $S'$ (denoted as $c'(\in\{T, S, S'\})$), or M-robots are present
\end{enumerate}
%During each stage, the following holds:\\
%(i) At the beginning of the stage, one or more are $T$-robots and all others are $c(\in\{S, S'\})$-robots. (ii) At least one of $T$-robot executes ${\cal{P}}$ during this stage. (iii) All robots that execute ${\cal{P}}$ during this stage have same snapshot. (iv) At the end of this stage, only $c'(\in\{T, S, S'\})$-robots or M-robots exist.
\end{lemma}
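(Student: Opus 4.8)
\emph{Plan.} The proof follows the template of Lemma~\ref{lem:st-rs-s}: parts (i) and (iv) are read off from the definition of a stage, part (ii) from the structure of the transition rules, and the real work is part (iii), where asynchrony enters. The colour $W$ in $\textup{SIM}^{RS}_{A}$ is introduced exactly to guarantee (iii), and the plan is to make this precise with a timing argument.

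First I would dispose of (i) and (iv), which are essentially definitional: by construction a stage begins at the first instant $t_0$ at which the global configuration lies in $\{T,S,S'\}$ with at least one $T$-robot, which is (i), and ends at the first later instant $t_2$ at which the configuration again lies in $\{T,S,S'\}$ or equals $\{M\}$ --- in both cases no $W$-robot is present, which is (iv); that such a $t_2$ exists (the stage is finite) follows from fairness, as in the \SSY\ analysis behind Lemma~\ref{lem:mc-sim-rs-s-1}. For (ii) I would observe that when the configuration is $c=\{T\}$, $\{T,S\}$, or $\{T,S'\}$, the only colour change any robot can perform is $T\to M$: in Algorithm~\ref{fig:alg-sim-RS-A} the $T\to W$ branch and the $S'\to T$ branch both require an $M$-robot, and the $S\to T$ branch requires an $S'$-robot together with the absence of $M$ and $W$, none of which holds in such a $c$. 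Hence the configuration stays equal to $c$ until the first robot of the stage executes ${\cal P}$; and by fairness some $T$-robot is eventually activated, at which point its guard $c=\{T\}\vee c=\{T,S\}\vee c=\{T,S'\}$ fires and it executes ${\cal P}$. So $|R^i_j|\ge 1$.

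The crux is (iii). Let $X$ denote the set of robots that execute ${\cal P}$ during the stage. I would first argue that at $t_0$ every robot occupies a \emph{settled} position, i.e. has no ${\cal P}$-move still in progress: the only way a robot leaves colour $M$ is via a later \emph{LCM} cycle whose \emph{Look} is performed after the \emph{Move} of the cycle in which it executed ${\cal P}$ has ended, so a robot not coloured $M$ has no pending move; and no robot is coloured $M$ at $t_0$ by (i). Since in $\textup{SIM}^{RS}_{A}$ a robot changes position only in a cycle in which it executes ${\cal P}$, and no robot outside $X$ executes ${\cal P}$ in this stage, the position configuration stays constant from $t_0$ until the first robot of $X$ begins its \emph{Move}. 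The key claim is that every \emph{Look} of a robot of $X$ precedes the first moment at which any robot of $X$ reaches colour $M$. To see this, let $r_1\in X$ be the robot that reaches $M$ first and let $r_2\in X$, $r_2\neq r_1$. While $r_2$ is still coloured $T$ --- hence up to and including $r_2$'s own \emph{Look} --- every $M$-branch guard that could move $r_1$ off colour $M$ is falsified: $M\to W$ needs $c=\{M\}$, and $M\to S$ and $M\to S'$ each need the absence of $T$-robots. So $r_1$ stays coloured $M$ from its \emph{Compute} up to $r_2$'s \emph{Look}; since $r_2$ executes ${\cal P}$ it sees no $M$-robot at that \emph{Look}, so that \emph{Look} cannot be strictly later than $r_1$ reaching $M$. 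Applying this to all pairs, every \emph{Look} of a robot of $X$ happens no later than the first \emph{Compute}-to-$M$, hence no later than any \emph{Move} of a robot of $X$; combined with the paragraph on (ii) (the colour configuration is still the stage-start configuration up to that point) and the settled-positions claim, all robots of $X$ take the same snapshot. This is (iii).

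The remaining work --- that a mega-cycle of $\textup{SIM}^{RS}_{A}$ corresponds to a synchronous round (\FSY-phase, terminal configuration $\{M\}$) or to a disjoint semi-synchronous round (Disjoint-phase, terminal configuration $\{S,S'\}$) of some \RSY\ execution, and that fairness is preserved --- carries over almost verbatim from Lemmas~\ref{lem:mc-sim-rs-s-1} and~\ref{lem:ms-sim-rs-s-2}, the only new point being the extra terminal configuration $\{M\}$, which I would check against Fig.~\ref{fig:diagram-trans-SIMRSA}. I expect the hard part to be the timing argument for (iii): the subtle danger is that some robot of $X$ could reach $M$, leave $M$, and thereby let a later robot of $X$ observe a snapshot in which no move has yet occurred but the colours have drifted; the argument above closes this loophole precisely because every escape route from $M$ is blocked by the continued presence of a $T$-robot that has not yet executed ${\cal P}$.
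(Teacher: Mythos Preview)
Your proposal is correct and follows essentially the same approach as the paper: dispose of (i) and (iv) by definition, derive (ii) from the observation that from a start configuration in $\{\{T\},\{T,S\},\{T,S'\}\}$ the only enabled transition is $T\to M$, and establish (iii) by showing that every robot in $X$ performs its \emph{Look} before the first robot reaches colour $M$, so that all such \emph{Look}s see the unchanged start configuration.

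Your version is in fact more careful than the paper's in two respects. First, you explicitly argue that positions are \emph{settled} at $t_0$ (no pending ${\cal P}$-move), which the paper leaves implicit in its phrase ``the global configuration does not change''. Second, the paper simply asserts that any robot performing \emph{Look} at time $\ge t_2$ ``would see some $M$-robots'' without justifying why the first $M$-robot is still coloured $M$; you close this gap with the observation that every $M$-branch guard in Algorithm~\ref{fig:alg-sim-RS-A} requires the absence of $T$-robots, so $r_1$ cannot leave $M$ while $r_2$ (still $T$ up to its own \emph{Compute}) is present. This is exactly the ``loophole'' you anticipated, and your closure of it is the genuinely asynchronous part of the argument that the paper glosses over.
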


\begin{proof}
Parts (i) and (iv) follow directly from the definition. Let $t_0$ be the earliest $j$-th stage's time when only $c(\in\{T, S, S'\})$-robots exist (e.g at the beginning), and let ${\cal{S}}$ be the snapshot at that time. Let $t_2>t_0$ be the first time since $t_0$ when $T$-robot turn its color to M (according to the rules of $\textup{SIM}^{RS}_{A}$, this is the only possibility when only $c(\in\{T, S\})$-robots or only $c'(\in\{T, S'\})$-robots exist). Notice that this robot must have performed the \textit{Look} operation at time $t_1$ such that $t_0 \leq t_1 < t_2$. Since during the period $(t_0,t_2)$ the global configuration does not change, every robot performs \textit{Look} between $t_0$ and $t_2$ has the same snapshot ${\cal{S}}$, and in this snapshot all robots are colored $c\in\{T, S\}$ or $c'\in\{T, S'\}$. So, all $T$-robots would eventually execute ${\cal{P}}$ and turn to M in the stage $R^i_j$. Any robot $r$ that performs\textit{Look} at a time $\geq t_2$, would see some $M$-robots and thus $r$ would turn to W (such a robot does not execute ${\cal{P}}$ in stage $R^i_j$). This proves parts (ii) and (iii). 

According to the rules of the protocol, from initial configuration (i.e only $T$-robots exist), if all robot execute ${\cal{P}}$ and turn their color to M simultaneously, the stage $R^i_j$ ends immediately. If not so, those $M$-robots and have executed one step of ${\cal{P}}$, would now change to color S' if $c(\in\{T, W\})$-robots aren't exists and $S$-robot exists in the configuration (at this time stage is $R^i_{k_i}$), otherwise turn to S (stage is $R^i_j\ (1 \leq j < k_i)$). When all these robots' color turn to S or S', the $T$-robots would have a next chance to execute ${\cal{P}}$ and turn to M. This is the end of stage $R^i_j$ and (iv) hold at this time.

In addition, from Fig.~\ref{fig:diagram-trans-SIMRSA}, the following transitions are made within each stage:

\begin{itemize}
   \item $R^i_1:$ At the beginning of this stage, none of the robots are execute ${\cal{P}}$.
   \begin{itemize}
       \item $\{T\} \rightarrow \{T,M\} \rightarrow \{T,M,S'\}\ (\rightarrow \{T,M,W\}) \rightarrow \{M,W\}\ \\\indent \ (\rightarrow \{M,S,W\}) \rightarrow \{S,W\}\ (\rightarrow \{T,S,W\}) \rightarrow \{T,S\}$
       \item $\{T,S'\} \rightarrow \{T,M,S'\} \rightarrow \{T,M\}\ (\rightarrow \{T,M,W\}) \rightarrow \{M,W\}\ \\\indent \ (\rightarrow \{M,S,W\}) \rightarrow \{S,W\}\ (\rightarrow \{T,S,W\}) \rightarrow \{T,S\}$
       \item $\{T,S'\} \rightarrow \{M,S'\}\ (\rightarrow \{T,M,S'\}) \rightarrow \{T,M\}\  (\rightarrow \{T,M,W\})\ \\\indent \rightarrow \{M,W\}\ (\rightarrow \{M,S,W\}) \rightarrow \{S,W\}\ (\rightarrow \{T,S,W\}) \rightarrow \{T,S\}$
   \end{itemize}
   \item $R^i_j\ (2 \leq j < k_i):$ At the end of this stage, there are robots that have executed ${\cal{P}}$ once and robots that have never.
   \begin{itemize}
       \item $\{T,S\} \rightarrow \{T,M,S\}\ (\rightarrow \{T,M,S,W\}) \rightarrow \{M,S,W\} \\\indent \ \rightarrow \{S,W\}\ (\rightarrow \{T,S,W\}) \rightarrow \{T,S\}$
   \end{itemize}
   \item $R^i_{k_i}:$ At the end of this stage, all robots have been executed ${\cal{P}}$ exactly once.
   \begin{itemize}
       \item $\{T\} \rightarrow \{M\}\ (k_i=1)$
       \item $\{T,S\} \rightarrow \{M,S\}\  (\rightarrow \{M,S,S'\}) \rightarrow \{S,S'\}\ (k_i\neq 1)$
   \end{itemize}
\end{itemize} $\qed$
\end{proof}

\begin{theorem}
Protocol $\textup{SIM}^{RS}_{A}$ is correct, i.e. any execution of protocol $\textup{SIM}^{RS}_{A}$ in \ASY $^{5}$ corresponds to a possible execution of ${\cal{P}}$ in \RSY.
\end{theorem}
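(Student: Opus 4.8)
The plan is to follow exactly the template used for the semi-synchronous simulator in Theorem~\ref{th:SIM-RS-S}, using the stage lemma just proved as the asynchronous analogue of Lemma~\ref{lem:st-rs-s}, and then supplying the asynchronous analogues of Lemmas~\ref{lem:mc-sim-rs-s-1} and~\ref{lem:ms-sim-rs-s-2}. Concretely, I would first record the mega-cycle properties for $\textup{SIM}^{RS}_{A}$: (a) a mega-cycle begins with only $c(\in\{T,S'\})$-robots and ends at the first later instant when only $M$-robots or only $c'(\in\{S,S'\})$-robots are present; (b) between these instants the execution decomposes into consecutive stages, each of which, by the stage lemma, contains a nonempty set of robots that execute ${\cal{P}}$ exactly once and then settle to $S$ or $S'$ (or to $M\to W$ in the single-stage \FSY\ case); (c) as long as a $T$- or $W$- or $S'$-robot that has not yet acted in the mega-cycle remains, a further stage can start, and the count of such robots strictly decreases, so the mega-cycle terminates in finite time with every robot having executed ${\cal{P}}$ once; (d) the end-of-mega-cycle transitions listed under Fig.~\ref{fig:diagram-trans-SIMRSA} ($\{M\}\to\{W\}\to\{T\}$; $\{M\}\to\cdots\to\{T,S'\}$; $\{S,S'\}\to\{T,S'\}$) are the only ones possible, so a new mega-cycle always starts automatically and the \FSY-phase may persist or switch once to the Disjoint-phase but never reverts (the reversion would require a multi-state configuration collapsing to a single state among all robots, which no transition produces).

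Next I would argue the \RSY-correspondence. Each stage $R^i_j$ is declared to be the set of $T$-robots that turn to $M$ during that stage; by the stage lemma this set is nonempty, and by construction within a mega-cycle consecutive stages are disjoint (a robot that turned $M$ in $R^i_j$ is $S$ or $S'$ for the rest of the mega-cycle and so cannot lie in $R^i_{j+1}$). For the boundary condition $R^i_{k_i}\cap R^{i+1}_1=\emptyset$: the robots that execute ${\cal{P}}$ in the last stage of mega-cycle $i$ become $S'$ (or, in the one-stage \FSY\ case, $M\to W$), and the stage lemma together with the transition diagram shows that in the first stage of mega-cycle $i+1$ the robots that turn $M$ are drawn from $T$-robots that were $S$ (resp. $W\to T$) at the end of mega-cycle $i$, hence disjoint from the $S'$-set; in the pure \FSY-to-\FSY\ case $k_i=1$, all robots act together and the $e_i=R$ clause of the restricted-repetition condition applies. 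Combining with finite termination of each mega-cycle, an execution of $\textup{SIM}^{RS}_{A}$ is an infinite sequence of mega-cycles, and the induced sequence of activation sets $\langle R^1_1,\dots,R^1_{k_1},R^2_1,\dots\rangle$ satisfies the restricted-repetition condition; fairness follows because every robot executes ${\cal{P}}$ exactly once in each mega-cycle, hence infinitely often. If ${\cal{P}}$ terminates, the robots stop moving in finite time.

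The main obstacle is the genuinely asynchronous part, namely establishing part (iii) of the stage lemma robustly — that all robots executing ${\cal{P}}$ in a stage compute on the \emph{same} snapshot — and, dually, that no robot begins a new stage's execution while a previous-stage $M$-robot is still in its \textit{Move} phase. I would handle this by the interval argument sketched in the lemma's proof: fix the earliest time $t_0$ of stage $R^i_j$ at which only $c(\in\{T,S,S'\})$-robots are present with snapshot ${\cal S}$, and the first time $t_2>t_0$ at which some robot turns $T\to M$; since the only enabling transition out of such a configuration is $T\to M$, the global configuration is constant on $(t_0,t_2)$, so every \textit{Look} in that window returns ${\cal S}$ and every $T$-robot that Looks there executes ${\cal{P}}$ on ${\cal S}$, while any robot that Looks at time $\ge t_2$ sees an $M$-robot and is forced to $W$ (so it does not execute ${\cal{P}}$ in this stage, matching the $\nexists S',\exists M \Rightarrow T\to W$ rule). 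The crucial clean-handoff claim is that the $M\to S/S'$ transitions are gated ($\nexists T,S',\exists W$ or $\nexists T,W,\exists S$), and $S\to T$, $S'\to T$ require $\nexists W,M$, so a $T$-robot cannot re-enter the executing condition $c\in\{\{T\},\{T,S\},\{T,S'\}\}$ until every $M$-robot of the previous stage has finished moving and become $S$ or $S'$; hence each stage's executing robots see the true post-previous-stage configuration, exactly as in one \RSY\ round. I would verify, by walking through every branch of the configuration diagram in Fig.~\ref{fig:diagram-trans-SIMRSA} (the stage cases $R^i_1$, $R^i_j$ for $2\le j<k_i$, and $R^i_{k_i}$ already enumerated there), that no other interleaving is reachable, which is the bookkeeping that makes the asynchronous argument rigorous; everything else transfers verbatim from the \SSY\ proof.
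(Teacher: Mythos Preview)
Your proposal is correct and follows essentially the same approach as the paper: you invoke the stage lemma, reproduce the mega-cycle properties analogous to Lemmas~\ref{lem:mc-sim-rs-s-1} and~\ref{lem:ms-sim-rs-s-2} (including the no-reversion argument from \FSY\ to Disjoint), and conclude fairness and the \RSY\ correspondence exactly as in Theorem~\ref{th:SIM-RS-S}. Your final paragraph on the same-snapshot and clean-handoff obstacle is really a re-derivation of the stage lemma's part~(iii), which the paper proves separately just before this theorem; the paper's proof of the theorem itself is correspondingly terser, simply citing that lemma and the configuration diagram rather than re-arguing the interval $(t_0,t_2)$ analysis inside the theorem.
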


\begin{proof}
First notice that after each mega-cycle, only $M$-robots or only $c(\in\{S, S'\})$-robots exist and thus, according to the state transition diagram of protocol $\textup{SIM}^{RS}_{A}$, after each mega-cycle ends, the next mega-cycle begins automatically. Since each mega-cycle terminates in finite time, the execution of protocol $\textup{SIM}^{RS}_{A}$ is an infinite sequence of mega-cycles. We could say that each stage within a mega-cycle of protocol $\textup{SIM}^{RS}_{A}$ corresponds to a synchronous activation and disjoint-semi-synchronous activation round in some execution of ${\cal{P}}$ in the \RSY\ model. We now need to show that the sequence of such activation rounds satisfies the fairness property. The fairness property requires that in any infinite execution of ${\cal{P}}$, each robot must be activated infinitely often. We have shown that in each mega-cycle, each robot actively executes ${\cal{P}}$ once. Thus in any infinite execution, i.e. an infinite sequence of mega-cycles, each robot executes ${\cal{P}}$ infinitely many times. Hence, this execution simulates a possible execution of protocol ${\cal{P}}$. In fact, this particular execution also satisfies the stronger condition of 1-fairness.

Note that if protocol ${\cal{P}}$ is a terminating algorithm (i.e., it terminates for every execution), then during the simulation, the robots would stop moving in finite time.
 $\qed$
\end{proof}

Thus, we have the following theorem and corollary.
\begin{theorem}\label{5col-RS-AS}
$\forall R \in {\cal{R}}, \OB^{RS}(R) \subseteq \LU_{5}^{A}(R).$
\end{theorem}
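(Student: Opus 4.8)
The plan is to obtain Theorem~\ref{5col-RS-AS} from the just-established correctness of $\textup{SIM}^{RS}_{A}$ in exactly the way Theorem~\ref{4col-RS-SS} followed from Theorem~\ref{th:SIM-RS-S}: treat $\textup{SIM}^{RS}_{A}$ as a generic compiler and carry out the standard reduction from ``solvable'' to ``solvable''. I would fix a team $R\in\mathcal{R}$ and a problem $\Pi\in\OB^{RS}(R)$, witnessed by an oblivious protocol ${\cal P}$ all of whose \RSY-executions meet the specification of $\Pi$, and define the $\LU$ protocol ${\cal Q}=\textup{SIM}^{RS}_{A}[{\cal P}]$ in which the step ``Execute ${\cal P}$'' of Algorithm~\ref{fig:alg-sim-RS-A} is the body of ${\cal P}$ evaluated on the positional part of the current snapshot. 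Since ${\cal P}$ is oblivious it carries no light, so ${\cal Q}$ uses exactly the five control colours $T,W,M,S,S'$.

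Next I would verify that ${\cal Q}$ is a legitimate $\LU_5^{A}$ protocol and that its runs ``contain'' runs of ${\cal P}$: the snapshot fed to ${\cal P}$ is the projection onto positions of the coloured snapshot taken in the \textit{Look} phase, expressed in the acting robot's own coordinate system, which is precisely the input ${\cal P}$ would receive in a genuine \RSY\ round, and the destination it returns is used verbatim in the \textit{Move} phase. Then I would invoke the correctness theorem for $\textup{SIM}^{RS}_{A}$ together with its stage lemma: every \ASY-execution $\xi$ of ${\cal Q}$ is an infinite sequence of mega-cycles, each mega-cycle splits into stages, the induced sequence of stages is a legal \RSY-activation sequence of ${\cal P}$, and the robots that act in a given stage all execute ${\cal P}$ on one common snapshot and hence move precisely as in the corresponding \RSY-round, while the remaining robots only recolour and stay put. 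Consequently the configuration history of $\xi$ coincides, up to repetition of finitely many configurations inserted between consecutive \RSY-rounds, with the configuration history of a genuine \RSY-execution $\eta$ of ${\cal P}$; and since the same theorem establishes $1$-fairness, $\eta$ is an admissible \RSY-execution of ${\cal P}$ and therefore satisfies $\Pi$.

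It then remains to transfer $\Pi$ from $\eta$ to $\xi$. For the standard geometric tasks the specification is a predicate on configuration sequences invariant under stuttering, so the extra repeated configurations are harmless and $\xi$ satisfies $\Pi$; moreover, as noted after the correctness theorem, if ${\cal P}$ is terminating then the robots of $\xi$ stop moving after finitely many steps, so termination is preserved. Hence ${\cal Q}$ is a fixed $\LU_5^{A}$ protocol every execution of which solves $\Pi$, giving $\Pi\in\LU_5^{A}(R)$; as $\Pi$ was arbitrary, $\OB^{RS}(R)\subseteq\LU_5^{A}(R)$. The companion corollary ${\LU_k^{RS}(R)\subseteq\LU_{5k}^{A}(R)}$ follows from the same construction with ${\cal P}$ a $k$-colour $\LU$ protocol and the light of ${\cal Q}$ recording a pair (control colour, ${\cal P}$-colour), multiplying the colour count by $k$; running ss-$\textup{SIM}^{RS}_{A}$ instead of $\textup{SIM}^{RS}_{A}$ yields the same conclusions from an arbitrary initial configuration without extra colours.

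I expect the only delicate point to be the ``up to stuttering'' clause, which is genuinely an \ASY\ phenomenon: one must confirm that every configuration occurring while some robot is still in its \textit{Move} phase (hence already coloured $M$ but not yet at its destination) is one the specification of $\Pi$ does not constrain --- true for the tasks under consideration, which are specified through the configurations reached at step boundaries --- and that each such in-transit phase lasts only finite time, which is immediate from the finiteness of every \textit{Move} in \ASY. Everything else is a direct appeal to the correctness of $\textup{SIM}^{RS}_{A}$ proved above.
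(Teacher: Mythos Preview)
Your proposal is correct and follows exactly the approach the paper takes: the paper simply writes ``Thus, we have the following theorem and corollary'' immediately after establishing the correctness of $\textup{SIM}^{RS}_{A}$, so Theorem~\ref{5col-RS-AS} is treated as an immediate consequence of the simulator's correctness. You have merely spelled out the implicit reduction (fix $\Pi$, compile ${\cal P}$ through $\textup{SIM}^{RS}_{A}$, invoke the stage lemma and the correctness theorem, transfer the specification along the stuttering-equivalent configuration sequence) in more detail than the paper does, including the care about in-transit configurations that the paper leaves tacit.
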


\begin{corollary}
$\forall R \in \mathcal R,  \LU_k^{RS}(R) \subseteq \LU_{5k}^{A}(R).$
\end{corollary}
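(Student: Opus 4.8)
The plan is to lift Theorem~\ref{5col-RS-AS} from the lightless ($\OB$) case to the $k$-color $\LU$ case by a product construction on the colors, exactly as the $4k$-color corollary was obtained from Theorem~\ref{4col-RS-SS}. Given any $\LU$ protocol ${\cal{P}}$ that uses a color set $C_{\cal{P}}$ with $|C_{\cal{P}}| = k$ and is meant to run under \RSY, I build a simulator whose robots carry lights over the set $C_{\cal{P}} \times \{T, M, S, S', W\}$, of size $5k$. The second component is the \emph{control color} and is handled precisely by $\textup{SIM}^{RS}_{A}$ (Algorithm~\ref{fig:alg-sim-RS-A}); the first component is the \emph{simulated color} and is touched only at the line ``Execute ${\cal{P}}$''. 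Concretely, a robot in control state $T$ that is permitted to act (its control projection of the snapshot is one of $\{\{T\}, \{T,S\}, \{T,S'\}\}$) runs one $\LK$-$\CP$-$\M$ step of ${\cal{P}}$, feeding ${\cal{P}}$ the projection of the snapshot onto positions and simulated colors; it then sets its light to the pair (new simulated color computed by ${\cal{P}}$, $M$) in a single Compute step and moves to the destination computed by ${\cal{P}}$. Every other transition of the simulator is a pure control transition: it reads only the control projection of the snapshot and leaves the simulated component unchanged.

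Next I would argue correctness by reusing the analysis of $\textup{SIM}^{RS}_{A}$ verbatim: since the control transitions depend only on the control projection, the decomposition into mega-cycles and stages, the fact that each stage corresponds to one activation round of a legal \RSY\ schedule, and 1-fairness (each robot acts exactly once per mega-cycle) are all inherited from Theorem~\ref{5col-RS-AS} and its supporting lemma. What must be added is that the execution of ${\cal{P}}$ is itself faithful, i.e., that the robots executing ${\cal{P}}$ in a given stage $R^i_j$ all observe exactly the configuration (positions and simulated colors) that a \RSY\ round would present. This follows from the ``frozen window'' property already established for $\textup{SIM}^{RS}_{A}$: throughout the interval $(t_0, t_2)$ delimiting the stage, no robot moves (a move is triggered only by the $T \to M$ step, which closes the window) and no robot changes its light (no control transition is enabled in a configuration whose control projection is $\{T,S\}$ or $\{T,S'\}$, and no Execute-${\cal{P}}$ step is enabled without closing the window), so every robot that performs \emph{Look} in that interval --- which is exactly the set that executes ${\cal{P}}$ in $R^i_j$, by the same reasoning as in the lemma --- sees one and the same full snapshot, simulated colors included. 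A robot that looks after $t_2$ sees an $M$-robot, goes to $W$, and does not execute ${\cal{P}}$ in this stage, matching the fact that it is not in this \RSY\ round. Hence the simulated colors are committed atomically at stage boundaries, and ${\cal{P}}$'s view coincides with the \RSY\ semantics.

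Finally I would record the two easy wrap-ups. Since ${\cal{P}}$ was an arbitrary $k$-color $\LU$ protocol and the simulator uses $5k$ colors under \ASY, any problem in $\LU_k^{RS}(R)$ lies in $\LU_{5k}^{A}(R)$, which is the claim; and the self-stabilizing version is obtained by the same product construction applied to ss-$\textup{SIM}^{RS}_{A}$, the additional transitions again being pure control transitions, so no extra colors are needed. The only step requiring genuine care --- and the main obstacle --- is the frozen-window argument for the simulated component: one must ensure that the $T \to M$ step updates \emph{both} components within one atomic Compute, so that no observer ever sees a new simulated color paired with the stale control color $T$; granting that, the rest is bookkeeping over the already-proven structure of $\textup{SIM}^{RS}_{A}$.
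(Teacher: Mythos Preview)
Your proposal is correct and is exactly the argument the paper has in mind: the corollary is stated without proof as an immediate consequence of Theorem~\ref{5col-RS-AS}, obtained by the product construction $C_{\cal{P}}\times\{T,M,S,S',W\}$ on the light, with the control component governed by $\textup{SIM}^{RS}_{A}$ and the simulated component updated only at the ``Execute~${\cal{P}}$'' line. Your explicit frozen-window check for the simulated colors (atomicity of the $T\to M$ step and the fact that all robots executing ${\cal{P}}$ in a stage share the same snapshot, already proved for $\textup{SIM}^{RS}_{A}$) fills in the detail the paper leaves implicit.
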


\subsection{Making $\textup{SIM}^{RS}_{A}$ Self-stabilizing}
We show that the protocol ss-$\textup{SIM}^{RS}_{A}$ (Algorithm~\ref{fig:ss-sim-rs-a}) correctly works and self-stabilizing. %the \RSY\ simulation under the \ASY. 
Although the proof is the same as ss-$\textup{SIM}^{RS}_{A}$, due to no bound on delays, \ASY\ scheduler cannot divide into rounds. Thus, we have to consider the time which can delimited by the global configuration. 

\begin{lemma}
Protocol $\textup{SIM}^{RS}_{A}$ is correct even if it starts from any configuration which in the \FSY-phase or the Disjoint-phase.
\end{lemma}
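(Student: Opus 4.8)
The plan is to mirror the proof of Lemma~\ref{lem:ss-RS-SS} (the \SSY\ case) as closely as possible, replacing the notion of ``activation round'' by the notion of a time interval delimited by a change of global configuration. First I would observe that the correctness proof of $\textup{SIM}^{RS}_{A}$ already established that, starting from the canonical initial configuration $\forall T$, the protocol decomposes an arbitrary \ASY\ execution into an infinite sequence of mega-cycles, each mega-cycle into stages, and each stage reproduces exactly one (synchronous or disjoint-semi-synchronous) activation round of ${\cal{P}}$ under \RSY. The key structural fact I would invoke is the transition diagram of global configurations in Fig.~\ref{fig:diagram-trans-SIMRSA} (left part): every configuration reachable during a legitimate execution lies in either the \FSY-phase or the Disjoint-phase, and from any such configuration the protocol continues to produce well-formed stages and mega-cycles.

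Next I would argue the ``suffix'' property, exactly as in Lemma~\ref{lem:ss-RS-SS}: since \RSY\ itself may legitimately begin in the Disjoint-phase (its activation sequences allow an empty prefix of $R$-rounds), any suffix of a valid \RSY\ activation sequence is again a valid \RSY\ activation sequence. Therefore, if $s$ is any global configuration appearing in the \FSY-phase or the Disjoint-phase, I would take the set of all legitimate $\textup{SIM}^{RS}_{A}$ executions (from $\forall T$) in which $s$ occurs at some point, and note that the portion of such an execution from the moment $s$ appears onward is a correct simulation of the corresponding \RSY\ suffix. Hence launching $\textup{SIM}^{RS}_{A}$ directly from $s$ yields precisely one of these suffix behaviors, so it too correctly simulates ${\cal{P}}$ under \RSY.

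The one genuinely new ingredient compared to the \SSY\ proof is the remark already flagged in the surrounding text: in \ASY\ there is no global round structure, so I cannot simply say ``let $s'_i$ be the starting state of round $e_i$.'' Instead I would define the relevant time instants by the global configuration itself — the instants at which the observed colour set $c = \{Light[r] \mid r \in R\}$ changes — and verify that between two consecutive such instants every robot that performs a \textit{Look} sees the same snapshot (this is exactly the argument used in the proof of the stage lemma for $\textup{SIM}^{RS}_{A}$, via the period $(t_0,t_2)$ over which the global configuration is constant). With that observation in hand, the bijection between maximal constant-configuration intervals and activation rounds of ${\cal{P}}$ goes through verbatim, and the rest of the argument is identical to Lemma~\ref{lem:ss-RS-SS}.

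The main obstacle I expect is precisely this point: making rigorous that an \ASY\ execution started from an arbitrary \FSY/Disjoint-phase configuration really does decompose into stages that each behave like a single \RSY\ activation round, given that \textit{Move} operations have unbounded duration and a late-looking robot could in principle observe an intermediate configuration. The resolution is to rely on the fact (already proved for $\textup{SIM}^{RS}_{A}$) that colour changes, not movements, delimit stages, and that a robot still executing a \textit{Move} from a previous stage carries the colour $M$ (or has already switched to $S$/$S'$), so its presence is reflected in the global configuration and correctly blocks or permits the transitions of the other robots. Once this invariant is restated for an arbitrary legitimate starting configuration, the lemma follows by the same suffix argument as in the \SSY\ case, and no new colours are needed. $\qed$
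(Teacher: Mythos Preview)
Your proposal is correct and follows essentially the same approach as the paper: the suffix argument of Lemma~\ref{lem:ss-RS-SS}, adapted to \ASY\ by replacing synchronous rounds with the time instants at which the global colour configuration changes (what the paper denotes $t_{CE}$). Your handling of the ``robots mid-cycle at the starting instant'' issue via the $M$-colour blocking invariant is a slightly different framing of the same point the paper makes when it says such robots can be treated as doing nothing in their pending cycle, but the underlying argument is the same.
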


\begin{proof}
We showed that the $\textup{SIM}^{RS}_{A}$ can correctly simulate all robots starting with T.
For simplicity, we consider protocols with a finite number of executions.
In an \ASY\ scheduler, the focus is on the activation sequence after the time of the global state change.
The global state can change at the time when the computation in the LCM-cycle is completed. We denote such a time by $t_{CE}$.

Starting with the global state at time t as the initial state, as opposed to starting with all robots initially stopped, is inadequate because it does not assume that Move in the LCM-cycle will be executed.
This can be accomplished by making the robot that was started at time $t_{CE}$ behave as if it will do nothing when the LCM-cycle is executed.

Hence, as in the proof in Lemma~\ref{lem:ss-RS-SS}, using the notion that suffix of the execution sequence is also an execution sequence that can be simulated correctly.

Therefore, even if we let the $\textup{SIM}^{RS}_{A}$ run with any global state that appears in the \FSY-phase and the Disjoint-phase as the starting state, we can still correctly simulate the \RSY\ protocol.
 $\qed$
\end{proof}

%\begin{figure}[!t]
\begin{algorithm}[H]
    \caption{ss-$\textup{SIM}^{RS}_{A}$}\label{fig:ss-sim-rs-a}
    State \textit{Look}
    \begin{algorithmic}[0]
        \STATE $Pos\lbrack{r}\rbrack:$ the position on the plane of robot $r$ (according to my coordinate system);
        \STATE $Light\lbrack{r}\rbrack:$ the color of the light of robot.
    \end{algorithmic}
    (Note: I am robot $x$) \\ \\
    State \textit{Compute}
    \begin{algorithmic}[1]
        \STATE $p \leftarrow Pos\lbrack{x}\rbrack$.
        \STATE $c \leftarrow \{Light\lbrack{r}\rbrack\ |\ \forall r \in R\}$.
        \IF {$Light\lbrack{x}\rbrack$ = T}
            \IF {$c = \{\textup{T}\} \vee c = \{\textup{T,S}\} \vee c = \{\textup{T,S'}\}$}
                \STATE Execute ${\cal{P}}.$
                \STATE $p \leftarrow computed\ destination$.
                \STATE $Light\lbrack{x}\rbrack \leftarrow$ M.
            \ENDIF
            \IF {$(c\cap\{\textup{S'}\} = \emptyset \wedge \{\textup{M}\} \subseteq c) \vee c = \{\textup{T,S',W}\}$}
                \STATE $Light\lbrack{x}\rbrack \leftarrow$ W.
            \ENDIF
        \ELSIF {$Light\lbrack{x}\rbrack$ = M}
            \IF {$c = \{\textup{M}\} \vee c = \{\textup{M,S',W}\} \vee c = \{\textup{T,M,S',W}\} \vee \{\textup{T,S,S'}\} \subseteq c$}
                \STATE $Light\lbrack{x}\rbrack \leftarrow$ W.
            \ENDIF
            \IF {$c\cap\{\textup{T,S'}\} = \emptyset \wedge \{\textup{W}\} \subseteq c$}
                \STATE $Light\lbrack{x}\rbrack \leftarrow$ S.
            \ENDIF
            \IF {$c\cap\{\textup{T,W}\} = \emptyset \wedge \{\textup{S}\} \subseteq c$}
                \STATE $Light\lbrack{x}\rbrack \leftarrow$ S'.
            \ENDIF
        \ELSIF {$Light\lbrack{x}\rbrack$ = S}
            \IF {$(c\cap\{\textup{W,M}\} = \emptyset \wedge \{\textup{S'}\} \subseteq c) \vee c = \{\textup{S}\}$}
                \STATE $Light\lbrack{x}\rbrack \leftarrow$ T.
            \ENDIF
            \IF {$(c\cap\{\textup{T}\} = \emptyset \wedge \{\textup{S',W}\} \subseteq c) \vee c = \{\textup{T,S,S',W}\}$}
                \STATE $Light\lbrack{x}\rbrack \leftarrow$ W.
            \ENDIF
        \ELSIF {$Light\lbrack{x}\rbrack$ = S'}
            \IF {$c\cap\{\textup{W,S}\} = \emptyset \wedge \{\textup{M}\} \subseteq c$}
                \STATE $Light\lbrack{x}\rbrack \leftarrow$ T.
            \ENDIF
            \IF {$c = \{\textup{S',W}\}$}
                \STATE $Light\lbrack{x}\rbrack \leftarrow$ W.
            \ENDIF
        \ELSIF {$Light\lbrack{x}\rbrack$ = W}
            \IF {$c\cap\{\textup{M,S'}\} = \emptyset$}
                \STATE $Light\lbrack{x}\rbrack \leftarrow$ T.
            \ENDIF
        \ENDIF
        \vspace{3mm}
    \end{algorithmic} 
    
    State \textit{Move} 
    \begin{algorithmic}[0]
        \STATE Move($p$).
    \end{algorithmic}
\end{algorithm}
%\caption{Self-stabilizing Protocol ss-$\textup{SIM}^{RS}_{A}$}
%\end{figure}

As with $\textup{SIM}^{RS}_{S}$, self-stabilizability can be achieved for $\textup{SIM}^{RS}_{A}$ by adding appropriate transition conditions. The transition diagram of global configurations is represented in the whole part of Fig.~\ref{fig:diagram-trans-SIMRSA}

\begin{theorem}
Protocol ss-$\textup{SIM}^{RS}_{A}$ is correct and self-stabilizing, i.e. any initial global configurations and any execution of protocol $\textup{SIM}^{RS}_{A}$ in \ASY $^{5}$ corresponds to a possible execution of ${\cal{P}}$ in \RSY.
\end{theorem}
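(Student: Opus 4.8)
The plan is to reduce the statement to the immediately preceding lemma (correctness of $\textup{SIM}^{RS}_{A}$ when started from any \FSY-phase or Disjoint-phase configuration), in exactly the way Theorem~\ref{th:ss-RS-S} was reduced to Lemma~\ref{lem:ss-RS-SS} in the semi-synchronous case. Call a global configuration \emph{legal} if it appears in the transition diagram of $\textup{SIM}^{RS}_{A}$ (the $\textup{SIM}^{RS}_{A}$ part of Fig.~\ref{fig:diagram-trans-SIMRSA}), i.e.\ it is one of the \FSY-phase or Disjoint-phase configurations, and \emph{illegal} otherwise. It then suffices to establish three facts: (a) from any initial configuration (all robots inactive, lights arbitrary) every fair execution of ss-$\textup{SIM}^{RS}_{A}$ in \ASY\ reaches a legal configuration after finitely many events; (b) no robot executes ${\cal{P}}$ before that happens; and (c) from a legal configuration the remainder of the execution is a correct simulation. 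Fact (b) is almost immediate from the code: the unique line of Algorithm~\ref{fig:ss-sim-rs-a} that executes ${\cal{P}}$ sits in the $T$-branch guarded by the snapshot colour set being $\{T\}$, $\{T,S\}$, or $\{T,S'\}$, all of which are legal; hence a robot whose \textit{Look} returned an illegal configuration can only recolour its own light in the ensuing \textit{Compute}, so the transient phase never moves a robot and never ``spends'' a step of ${\cal{P}}$. In particular all the red guards added in Algorithm~\ref{fig:ss-sim-rs-a} (the extra disjuncts $c=\{T,S',W\}$, $c=\{M,S',W\}$, $\{T,S,S'\}\subseteq c$, $c=\{S\}$, $c=\{T,S,S',W\}$, $c=\{S',W\}$, and the new $S\!\to\!W$ rule) are pure recolouring rules.

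\textbf{Convergence to a legal configuration (the core).} For fact (a) I would enumerate the finitely many illegal global colour sets over $\{T,M,S,S',W\}$ and, following the edges induced by the (red) transitions, exhibit a ranking on configurations that strictly decreases at every \textit{Compute} that starts from an illegal configuration, so that a legal configuration is reached after finitely many events; the ranking must also certify that no illegal configuration is a sink (some robot is always enabled to recolour toward a lower rank) and that the red guards create no cycle among illegal configurations. Concretely one walks through each illegal colour set, determines which robots are enabled and to which configuration each enabled move leads, and checks that one eventually lands in $\{T\}$, $\{T,S\}$, $\{T,S'\}$, $\{M\}$, or $\{S,S'\}$ (or a configuration already on the diagram of Fig.~\ref{fig:diagram-trans-SIMRSA}). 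This case analysis is the substantive part of the proof and is precisely where the design of the red guards is justified.

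\textbf{Asynchrony and closure.} Since \ASY\ has no rounds, I would phrase (a)--(b) in terms of the instants $t_{CE}$ at which some \textit{Compute} completes, the only times the global configuration can change, exactly as in the preceding lemma. A robot that performed \textit{Look} while the configuration was illegal either recolours harmlessly or, if its snapshot happened to be legal, behaves as a robot freshly started from a legal configuration. At the first $t_{CE}$ at which the configuration becomes legal I would invoke the device already used in the preceding lemma: treat every robot currently mid-cycle as one that ``does nothing'' in its present LCM cycle, so the continuation is literally a suffix of an execution of $\textup{SIM}^{RS}_{A}$ started from a \FSY-phase or Disjoint-phase configuration, to which the preceding lemma applies and yields a valid \RSY\ simulation with 1-fairness. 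Finally one checks that the legal set is closed under ss-$\textup{SIM}^{RS}_{A}$, i.e.\ no red guard fires in a legal configuration in a way that leaves the legal set; whenever a red guard does fire at a legal configuration it coincides with an existing $\textup{SIM}^{RS}_{A}$ transition, so legality, once attained, persists. Combining this with facts (a)--(c) gives the theorem: the transient only rewrites lights, and the suffix from the first legal configuration is a correct, 1-fair simulation of ${\cal{P}}$ under \RSY\ (and terminates in finite time if ${\cal{P}}$ does).

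\textbf{Main obstacle.} The delicate point is the convergence argument of the second step: producing a potential function on global configurations that strictly decreases under every enabled move from an illegal configuration and, crucially, ruling out livelock cycles among illegal configurations, given the asynchronous adversary's freedom to interleave \textit{Look}s and \textit{Compute}s arbitrarily and to delay any individual robot (fairness guaranteeing only eventual activation, so that a robot may act on a stale snapshot). Once this finite but careful case analysis over the illegal colour sets is done, the reduction in the first and third steps is routine, essentially identical to the semi-synchronous case.
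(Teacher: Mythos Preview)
Your proposal is correct and follows essentially the same approach as the paper: observe that the added (red) transitions never execute ${\cal P}$ and only recolour lights from illegal configurations toward legal ones, then invoke the preceding lemma once a legal (\FSY-phase or Disjoint-phase) configuration is reached. The paper's proof is in fact considerably terser than yours---it simply asserts convergence to a legal configuration and appeals to the lemma, without spelling out a potential-function argument, the closure check, or the stale-snapshot issue---so your outline is, if anything, more thorough than what the paper provides.
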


\begin{proof}
Regarding the added transition conditions, since they all perform transitions without executing $\cal{P}$, they only perform a color change from Illegal-states, and eventually merge into global state of \FSY-phase or Disjoint-phase.
Thus, it is the same as starting with the merging state as the initial state.

Therefore, from Lemma~\ref{lem:ss-RS-SS}, the self-stabilizing $\textup{SIM}^{RS}_{A}$ can correctly simulate the \RSY\ protocol even if it starts from an arbitrary state. $\qed$
\end{proof}

\section{Optimal Simulation of \RSY\ by \ASY\ with Two Robots}\label{sec:sim-2-RSY-by-ASY}
\subsection{3-Color Simulation of \RSY\ by \ASY}

In this subsection, we show the following theorem constructively.

\begin{theorem}\label{th:Sim-RS-by-AS-2robots}
$\forall R \in \mathcal R_2,  {\OB^{RS}}(R) \subseteq \LU_3^{A}(R).$
\end{theorem}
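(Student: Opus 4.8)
The plan is to build an explicit simulator, call it $\textup{SIM}^{RS,2}_{A}$, in the style of $\textup{SIM}^{RS}_{A}$ but using only three \emph{control} colours, and to exploit how degenerate the \RSY\ scheduler becomes for $n=2$. First I would observe that for $R=\{r_0,r_1\}$ the restricted-repetition condition forces every \RSY\ activation sequence to be either (a) $e_i=R$ for all $i$ (the pure \FSY\ execution), or (b) a finite \FSY\ prefix $e_1=\cdots=e_p=R$ followed by strictly alternating singletons $\{r_0\},\{r_1\},\{r_0\},\dots$ — because for $n=2$ every nonempty proper subset is a singleton and $e_i\cap e_{i+1}=\emptyset$ forbids repeating the same one. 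Hence it suffices to design a protocol whose every \ASY\ execution, sampled at the stage boundaries delimited by the control colours, reproduces \emph{some} fair execution of exactly this restricted shape; then, since ${\cal P}$ is correct against all \RSY\ executions, faithfully realizing one of them solves the problem.

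Second, I would introduce three control colours, say $0$ (``idle / your turn''), $1$ (``I have just executed ${\cal P}$ and may still be moving''), $2$ (``waiting''), and compose them with the $k$ colours of the simulated protocol ${\cal P}$ exactly as in $\textup{SIM}^{RS}_{S}$, so the simulator uses $3k$ colours; for $\OB$, i.e.\ $k=1$, this is $3$ colours, which is the statement. The transition rule is roughly: a robot at control colour $0$ seeing the other also at $0$ executes one ${\cal P}$-cycle, sets its control colour to $1$, and moves (this is a full \FSY\ round when the adversary lets both \textit{Look} before either moves); a robot at $0$ seeing the other at $1$ abstains and switches to $2$; a robot at $2$ seeing the other back at $0$ executes one ${\cal P}$-cycle, sets $1$, and moves (its turn in the alternating phase); a robot at $1$ that, on re-activation, sees the other at $2$ (or at $1$, the post-\FSY\ case) switches to $0$, signalling ``I have finished moving''; all other combinations do nothing, and the lone symmetric stuck pattern ($2,2$) is reset to $0$. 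The two design points that make this sound are: (i) a robot changes its control colour to $1$ \emph{before} it moves, so any robot that \textit{Look}s afterwards sees colour $1$ and abstains; and (ii) because the cycles of a single robot are sequential in \ASY, a robot re-activated after a move has provably finished that move, so its colour change $1\to0$ is a legitimate ``move completed'' signal. Together these guarantee that a robot executes ${\cal P}$ only from a configuration in which the other robot is certifiably stationary, and that executions strictly alternate in the disjoint regime.

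Third, for correctness I would mimic the stage/mega-cycle bookkeeping of Lemmas~\ref{lem:st-rs-s}--\ref{lem:ms-sim-rs-s-2}: show that between two consecutive ${\cal P}$-executions the executing robot \textit{Look}s at a genuine static configuration, that executions alternate (resp.\ are simultaneous in the \FSY\ regime), and — since the set of reachable control configurations is tiny — that under \ASY-fairness every such configuration leads to a fresh ${\cal P}$-execution within a bounded number of activations, so each robot executes ${\cal P}$ infinitely often. Thus every \ASY\ execution of $\textup{SIM}^{RS,2}_{A}$ realizes a fair \RSY\ execution of ${\cal P}$; and since the non-${\cal P}$ transitions involve no movement, the trajectory of positions is exactly that of the simulated execution. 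Self-stabilization follows, as for ss-$\textup{SIM}^{RS}_{A}$, by adding transitions that funnel the few control configurations off the main cycle back onto it without executing ${\cal P}$. The main obstacle I expect is squeezing all three requirements — never computing on a stale or mid-move snapshot, enforcing strict alternation despite anonymity and possibly concurrent activation, and avoiding livelock — into only three colours; the later result~(4) shows there is essentially no slack, so the exhaustive check that every reachable control configuration behaves correctly is where the real work lies.
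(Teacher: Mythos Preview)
Your plan matches the paper's: both exploit that for $n=2$ every \RSY\ schedule is an \FSY\ prefix followed by strict alternation, give an explicit three-colour simulator, and argue correctness by exhaustively analysing the reachable colour pairs under \ASY. The paper's concrete protocol \textup{SIM-}$2^{RS}_A$ happens to differ from your handshake scheme --- it uses the single cyclic rule $T\!\to\! M\!\to\! S\!\to\! T$, each colour advancing whenever it sees its own colour or its cyclic predecessor, with ${\cal P}$ executed only on $T\!\to\! M$ --- and then carries out exactly the ``exhaustive check'' you correctly flag as the real work, via a notion of \emph{cs-times} and Lemmas~\ref{lem:nochangecolor}--\ref{lem:TM-ST-SM}, which track between consecutive cs-times which colour pair is reached and which robot(s) executed ${\cal P}$.
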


To do so, we present a $\LU_3^{A}$ protocol SIM-$2^{RS}_A$ that produces \RSY\ execution of any $\OB^{RS}$ protocol $\cal{P}$. We also show that the number of colors used in SIM-$2^{RS}_A$ 
is optimal.

%\caption{Transition diagram of state transitions (protocol SIM-$2^{RS}_A$)}\label{fig:SIM2}-(b)

The transition diagram representation is shown in Fig.~\ref{fig:SIM2}-(a) and the pseudo code of the protocol 
is shown in Algorithm~\ref{fig:sim-2-rs-a}. 
%Figure~\ref{alg3} (and Figure~\ref{fig:SIM2}) and 
It uses three colors $T, M,$ and $S$. The meaning of the colors is almost the same as those of Protocol SIM$^{RS}_A$. The transition of configurations is shown in Fig.~\ref{fig:SIM2}-(b).
If SIM-$2^{RS}_A$ works in \SSY, it is easily verified that it simulates $\cal{P}$ working in \RSY\ as follows;
As long as the both robots continue to be activated simultaneously, since the transition repeats $(T,T) \rightarrow (M,M) \rightarrow (S,S) \rightarrow (T,T)$, and $a$ and $b$ have performed $\cal{P}$ simultaneously when changing $T$ to $M$,
SIM-$2^{RS}_A$ makes $\cal{P}$ work in \FSY. Once only one robot, say $a$, is activated at $(T,T)$\footnote{When it is activated at $(M,M)$ or $(S,S)$, we can show similarly noting that at any time, when only one robot is activated, two robots have executed \cal{P} simultaneously.}, the configuration becomes $(M,T)$ and $a$ has executed $\cal{P}$.
After that the transition of the configurations repeats the loop of
$(M,T) \rightarrow (S,T) \rightarrow (S,M) \rightarrow (T,M) \rightarrow (T,S) \rightarrow (M,S) \rightarrow (M,T)$ shown in Fig.~\ref{fig:SIM2}-(b) for any activation schedule in \SSY, where $b$ performs $\cal{P}$ and then $a$ performs $\cal{P}$ in the loop.
For example, when the configuration is $(M,T)$, 
since $a$ changes $M$ to $S$ and $b$ does not change its color in SIM-$2^{RS}_A$, if $a$ is activated once regardless of activation of $b$, the configuration becomes $(S,T)$. Other transitions are similar.
%if both robots are activated, since $a$ changes $M$ to $S$ and $b$ does not change its color, the configuration becomes $(S,T)$. Otherwise and 
Thus these activation satisfies \RSY\ adding the preceding simultaneous activation of the both robots.
We show that SIM-$2^{RS}_A$ can work correctly in \ASY, although it is very complicated due to asynchronicity. %Due to the page limitation, all proofs are deferred to Appendix~\ref{app:SIM2-RS-A}.

% ３状態のアルゴリズム

\begin{figure}[H]
\begin{center}
\includegraphics[width=120mm]{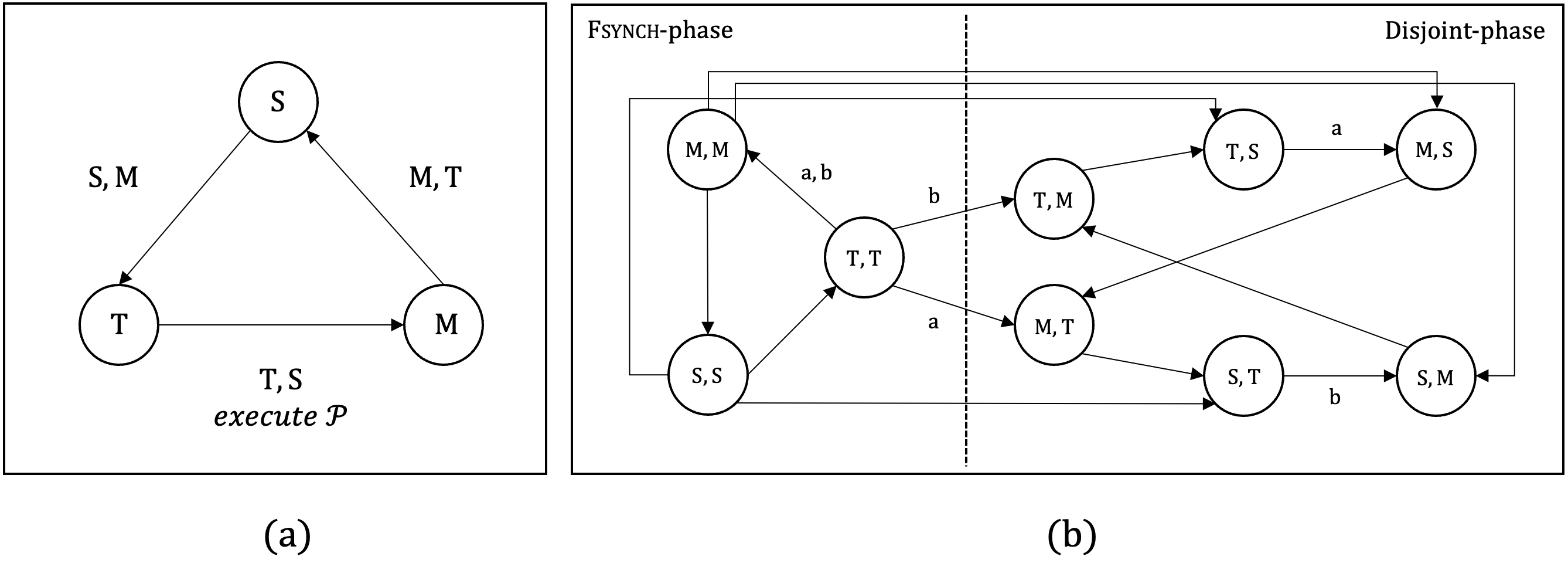}
\caption{(a) Transition diagram of protocol SIM-$2^{RS}_A$. (b) Transition diagram of state transitions (protocol SIM-$2^{RS}_A$)}\label{fig:SIM2}
\end{center}
\end{figure}

%\begin{figure}[!t]
\begin{algorithm}[H]
    
    \caption{SIM-2$^{RS}_A$: for robot x at location x.pos}\label{fig:sim-2-rs-a}
   
    State \textit{Look} 
    \begin{algorithmic}[0] %1で行番号付与
    
    \STATE $ my.light $%\colon the \,color\, of\, robot\, x$
    \STATE $ other.light:$ the other robot's light.
%    \STATE \hspace{2mm} $ x.step \colon my \, light \, use \, for \, Algorithm \, Sim$
%    \STATE \hspace{2mm} $ y.step \colon other \, robot`s \, light \, use \, for \, Algorithm \, Sim$
%    \STATE \hspace{2mm} $ x \, cannot \, see \, my \, own \, color$
    \vspace{3mm}
    \end{algorithmic}
    
    State \textit{Compute} 
    \begin{algorithmic}[1]
    \STATE $my.des \leftarrow my.pos$
    \IF {$my.light = T$ {\bf and} $other.light \in \{T,S\}$}
        \STATE Execute ${\cal{P}}.$
        \STATE $my.des \leftarrow computed\, destination$
        \STATE $my.light \leftarrow M$
    \ELSIF {$my.light = M$ {\bf and} $other.light \in \{T,M\}$}
        \STATE $my.light \leftarrow S$
        
    \ELSIF {$my.light = S$ {\bf and} $other.light \in \{M,S\}$}
        \STATE $my.light \leftarrow T$
    
    \ENDIF
    \vspace{3mm}
    \end{algorithmic}
    
    State \textit{Move} 
    \begin{algorithmic}[0]
        \STATE Move to $my.des$.
    \end{algorithmic}

\end{algorithm}
%\caption{Protocol SIM-2$^{RS}_S$} \label{alg3}
%\end{figure}

% ３状態のアルゴリズム

%\begin{figure}[!t]
%\begin{center}
%\includegraphics[height=70mm,width=110mm]{img/3Col-RS-AS-2.png}
%\caption{protocol.}\label{fig:SIM2}
%\end{center}
%\end{figure}

%\begin{figure}[!t]
%\begin{center}
%\includegraphics[height=105mm,width=100mm]{img/3Col-RS-AS-2transition.png}
%\caption{Transition diagram.}\label{transition-SIM2}
%\end{center}
%\end{figure}

The following lemma holds for SIM-2$^{RS}_A$ (Algorithm~\ref{fig:sim-2-rs-a}). 

\begin{lemma}\label{lem:nochangecolor}
Let $(\alpha,\beta) \in \{(T,M), (S,T), (M,S)\}$. If $\alpha$-robot observes $\beta$-robot at time $t$ in SIM$^{RS}_A$,
the $\alpha$-robot does not change its color at $t'+1$ and until the next activation, where $t'$ is the time the $\alpha$-robot performs the $Comp$-operation.
%\begin{itemize}
%    \item[(1)] When $T$-robot observes $M$-robot at time $t$,
%the $T$-robot does not change its color at $t'+1$, where $t'$ is the time the $T$-robot performs $Comp$-operation.
%    \item[(2)] When $S$-robot observes $T$-robot at time $t$,
%the $S$-robot does not change its color at $t'+1$, where $t'$ is the time the $S$-robot performs $Comp$-operation.
%    \item[(3)] When $M$-robot observes $S$-robot at time $t$,
%the $M$-robot does not change its color at $t'+1$, where $t'$ is the time the $T$-robot performs $Comp$-operation.
%\end{itemize}

\end{lemma}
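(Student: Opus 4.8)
The plan is to analyze the three cases $(\alpha,\beta)\in\{(T,M),(S,T),(M,S)\}$ uniformly by inspecting the guard conditions of Algorithm~\ref{fig:sim-2-rs-a}. The key observation is that in each of these pairs, the color $\beta$ that the $\alpha$-robot observes is \emph{precisely the color that disables the transition rule for an $\alpha$-robot}: a $T$-robot moves only if the other light is in $\{T,S\}$ (so seeing $M$ blocks it); an $S$-robot changes to $T$ only if the other light is in $\{M,S\}$ (so seeing $T$ blocks it); an $M$-robot changes to $S$ only if the other light is in $\{T,M\}$ (so seeing $S$ blocks it). Hence, first I would establish the \emph{local} claim: if the $\alpha$-robot performs its $\LK$ at time $t$ and sees $\beta$, then at the subsequent $\CP$-operation (at time $t'$) it computes its own position as destination and leaves its color unchanged, so at $t'+1$ it still has color $\alpha$.

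Second, I would argue that the $\alpha$-robot's color remains $\alpha$ until its \emph{next activation}, which is immediate: a robot changes color only during a $\CP$-operation of its own cycle, and between the end of the current cycle and the start of the next activation the robot performs no operation. So the only content beyond the one-line guard check is bookkeeping about the \ASY\ timeline — distinguishing the $\LK$ time $t$, the $\CP$ time $t'\ge t$, and noting that during $(t,t')$ the robot itself does nothing that could change its own light, and that its decision at $t'$ depends only on the snapshot taken at $t$.

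The main obstacle — and the reason the lemma is stated at all — is subtler than the guard check: in \ASY, between the $\LK$ at time $t$ and the $\CP$ at time $t'$, the \emph{other} robot may move or change color, but this does not matter because the $\alpha$-robot's computation uses only the snapshot from time $t$; what does need care is that the lemma only claims the $\alpha$-robot keeps color $\alpha$, not that the configuration is frozen. So I would be careful to phrase the conclusion as a statement purely about the $\alpha$-robot's own light, and to note that "until the next activation" is well-defined because a robot's light is persistent and is modified only in its own $\CP$-phase. I expect the proof to be short: a case table over the three pairs, each line citing the matching \textbf{elsif}-guard of Algorithm~\ref{fig:sim-2-rs-a} to conclude the guard evaluates to false on the snapshot taken at $t$, hence no color assignment and no destination change occurs at $t'$, hence the color is still $\alpha$ at $t'+1$ and stays $\alpha$ until the robot is next activated.
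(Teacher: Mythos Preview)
Your proposal is correct and is exactly the natural argument: for each of the three pairs you check that the observed color $\beta$ falls outside the guard set in the matching branch of Algorithm~\ref{fig:sim-2-rs-a}, so the $\CP$ at time $t'$ leaves both the destination and the light unchanged, and persistence of the light until the next activation is immediate. The paper itself states this lemma without proof, treating it as an immediate consequence of inspecting the guards of SIM-$2^{RS}_A$; your write-up simply makes that inspection explicit, so there is nothing to compare beyond noting that you have spelled out what the paper leaves to the reader.
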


In what follows, two robots are denoted as $a$ and $b$.
%Let $t_0$ be the starting time of the algorithm. 
Given a robot $r$, an operation $op$($\in \{ \mathit{Look, Comp, M_{B}, M_{E}} \}$), and a time $t$,
$t^+(r, op)$ denotes the time $r$ performs the first $op$ after $t$ (inclusive) if there exists such operation, and
$t^-(r, op)$ denotes the time $r$ performs the first $op$ before $t$ (inclusive)  if there exists such operation.
If $t$ is the time the algorithm terminates, $t^+(r, op)$ is not defined for any $op$.
When $r$ does not perform $op$ before $t$ and $t^-(r, op)$ does not exist, $t^-(r, op)$ is defined to be $t_0$.

%(c_r,c_s;d)\rightarrow definition!!

A time $t_c$ is called a \emph{cycle start time} (\emph{cs-time}, for short), if the next performed operations of both $a$ and $b$ after $t_c$ are both $\mathit{Look}$, or
otherwise, the robots performing the operations neither change their colors of lights nor move.
In the latter case, we can consider that these operations can be performed before $t_c$ and the subsequent $Look$ operation can be performed as the first operation after $t_c$.

If a configuration at a $cs$-time $t$ is that robot $a$ has color $\alpha$ and robot $b$ has color $\beta$, it is denoted as $[(\alpha,\beta)]_t$. If the algorithm changes configuration from $[(\alpha,\beta)]_t$ to  $[(\alpha',\beta')]_{t'}$ for some $cs$-time $t'>t$, it is denoted as $[(\alpha,\beta)]_t \rightarrow [(\alpha',\beta')]_{t'}$. 

%If robot $\gamma$ executes the algorithm $\cal{P}$ between $t$ and $t'$, it is denoted as 
If we write $[(\alpha,\beta)]_t \stackrel{\mathrm{\gamma}}{\mathrm{\rightarrow}} [(\alpha',\beta')]_{t'}$,
$\gamma$ denotes robot(s) executing $\cal{P}$ between $t$ and $t'$, where $\gamma=a$, $\gamma=b$, or $\gamma=(a,b)$ shows $a$ executes $\cal{P}$, $b$ executes $\cal{P}$, or $a$ and $b$ execute $\cal{P}$ simultaneously, respectively. Also if some robot, say $a$ executes $\cal{P}$ after $a$ and $b$ execute $\cal{P}$ simultaneously between $t$ and $t'$, $\gamma$ is written as $(a,b), a$.

%If robots $a$ and $b$ execute $\cal{P}$ simultaneously, it is denoted as 
%$[(\alpha,\beta)]_t \stackrel{\mathrm{(a,b)}}{\mathrm{\rightarrow}} [(\alpha',\beta')]_{t'}$
%where $gamma=(a,b)$ ($a$ and $b$ execute $\cal{P}$ simultaneously),  

%$T$-robot observes $T$-robot or $S$-robot, then it changes its color to $M$

\begin{lemma}\label{lem:ST->}
   Assume that an $M$-robot, say, $a$ performs $Comp$ and changes its color to $S$ at time $t$, and the other robot, say $b$ is an $S$-robot at $t+1$ and performs $Comp$ and changes its color to $T$ at $t' > t$\footnote{Note that $b$ observes $M$-robot or $S$-robot by $t'-1$.} (Figure~\ref{fig:ST->}).
   If $a$ is activated and performs $Look$ at $t'' \in (t..t']$, there exists a cs-time $\hat{t}$ such  that
   exactly one of the followings holds,
   %$\hat{t}$ is a time and 
%   any execution from $t''$ in Figure~\ref{fig:SS-->TT} is the same as that from 
$[(S,S)]_{t+1} \stackrel{}{\mathrm{\rightarrow}} [(T,T)]_{\hat{t}}$, $[(S,S)]_{t+1} \stackrel{\mathrm{a}}{\mathrm{\rightarrow}}[(M,T)]_{\hat{t}}$, or $[(S,S)]_{t+1} \stackrel{\mathrm{b}}{\mathrm{\rightarrow}}[(T,M)]_{\hat{t}}$. %$[(S,S)]_{t+1} \stackrel{\mathrm{a}}{\mathrm{\rightarrow}}[(S,T)]_{\hat{t}}$, or $[(S,S)]_{t+1} \stackrel{\mathrm{b}}{\mathrm{\rightarrow}}[(T,S)]_{\hat{t}}$.
\end{lemma}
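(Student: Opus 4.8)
The goal is to analyze the asynchronous execution starting from the configuration $[(S,S)]_{t+1}$, where $a$ is the robot that just turned $M\to S$ at time $t$, and $b$ is an $S$-robot that will eventually turn $S\to T$ at some time $t'>t$. The assumption that $a$ performs a $Look$ at some $t''\in(t..t']$ is the key hypothesis that pins down what $a$ can observe. The plan is to do a careful case analysis on the relative ordering of the $Look$, $Comp$ and $Move$ operations of $a$ and $b$ in the interval starting at $t+1$, and to show that in every case the system reaches one of the three claimed cs-time configurations $[(T,T)]_{\hat t}$, $[(M,T)]_{\hat t}$, or $[(T,M)]_{\hat t}$, with the stated robot (none, $a$, or $b$) executing $\cal P$ in between.

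\textbf{Key steps.} First I would establish what $a$'s $Look$ at $t''$ can return. Since $b$ is an $S$-robot throughout $(t..t')$ and only changes to $T$ at $t'$, and since $a$ is an $S$-robot from $t+1$ onward until it possibly changes, the snapshot $a$ gets at $t''$ shows the other robot with color $S$ (if $t''<t'$ or more precisely before $b$'s color change takes visible effect) or with color $T$ (if $t''=t'$ and $b$ already changed). In the first case, an $S$-robot observing $S$ satisfies the guard ``$my.light=S$ and $other.light\in\{M,S\}$'', so $a$ turns $S\to T$ at its subsequent $Comp$; after that $a$ is a $T$-robot observing $b$. In the second case, an $S$-robot observing $T$ does \emph{not} satisfy any guard, so $a$ keeps color $S$. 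I would then track $b$: after $b$ turns $S\to T$ at $t'$, $b$ is a $T$-robot; whether $b$ next gets to execute $\cal P$ depends on what it observes of $a$ (it needs $a\in\{T,S\}$). Then I would trace forward from the resulting intermediate configuration — essentially $(T,T)$, $(T,S)$, or $(S,T)$ at the relevant cs-time — until the first cs-time $\hat t$ at which the configuration has stabilized to one of $[(T,T)]$, $[(M,T)]$, $[(T,M)]$, using Lemma~\ref{lem:nochangecolor} repeatedly to argue that once a robot has observed the ``stuck'' color pattern it will not change color until its partner moves on, so the process is well-defined and terminates at a genuine cs-time.

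\textbf{Main obstacle.} The hard part will be handling the asynchrony carefully: between $t$ and $\hat t$ both robots may perform several $Look$/$Comp$/$Move$ operations in interleaved order, and a robot may have performed a $Look$ early (seeing an old configuration) but its $Comp$ much later, so the naive ``current configuration'' reasoning fails. The cs-time machinery and Lemma~\ref{lem:nochangecolor} are exactly the tools to tame this: I would argue that any ``pending'' operation that would be executed with a stale snapshot is either color-neutral and move-neutral (so it can be pushed past $\hat t$ harmlessly, consistent with the definition of cs-time) or it forces one of the target transitions. The delicate bookkeeping is to confirm that at most one of $a$, $b$ executes $\cal P$ in the interval, that the one who does is the one named in the corresponding arrow, and that the three cases are genuinely exhaustive and mutually exclusive given the hypothesis that $a$ looks in $(t..t']$ — in particular ruling out that both robots execute $\cal P$ (which would violate the \RSY\ disjointness that the whole construction is designed to preserve) and ruling out that the configuration at $\hat t$ is something other than the three listed. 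I expect this exhaustiveness check, organized by the position of $t''$ relative to $b$'s and $a$'s operation times, to be the bulk of the proof.
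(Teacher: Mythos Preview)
Your plan is sound and would produce a correct proof, but your case decomposition differs from the paper's. You split first on what $a$'s $Look$ at $t''$ returns (the other robot colored $S$ versus $T$) and then trace forward through intermediate configurations such as $(T,S)$ or $(S,T)$ until a genuine cs-time is reached. The paper instead splits on the relative order of $t^+(a,C)$ and $t'$ --- i.e., on which robot completes its $S\to T$ transition first --- and, in the case $t^+(a,C)\le t'$, sub-cases on how many additional full cycles $a$ manages to complete in $(t^+(a,C),t']$ (none, one completed, or one started but unfinished), reading off $[(T,T)]$ or $[(M,T)]$ directly from that count; the case $t^+(a,C)>t'$ is disposed of by a one-line symmetry reduction, interchanging the roles of $t^+(a,C)$ and $t'$, which is where the $[(T,M)]$ outcome with $b$ executing $\cal P$ comes from.

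The paper's timing-based split makes the three outcomes fall out almost immediately once one counts reactivations, whereas your observation-based split requires a slightly longer forward trace; in particular your ``second case'' (where $a$ observes $T$ and keeps color $S$) is an edge sub-case that the paper's organization absorbs into its symmetry step rather than treating separately. On the other hand, your explicit trace through $(T,S)$/$(S,T)$ is more transparent than the paper's terse symmetry reduction. Both routes lean on Lemma~\ref{lem:nochangecolor} in exactly the same way to neutralize stale-snapshot operations, so the underlying mechanism is identical; the difference is purely organizational.
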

\begin{proof}
There are two cases; (1) $t^+(a,C) \leq t'$, and (2) $t' <t^+(a,C)$.

{\bf Case (1)}: If $a$ is not activated between $t^+(a,C)+1$ and $t'$ after changing its color to $T$ at $t^+(a,C)$, ${\hat{t}}=t'+1$  is a $cs$-time such that $[(S,S)]_{t+1} \stackrel{}{\mathrm{\rightarrow}} [(T,T)]_{\hat{t}}$.

If $a$ is activated between $t^+(a,C)+1$ and $t'$ once and finishes its cycle 
 between  
$t^+(a,C)+1$ and $t'$, then $a$ changes its color to $M$ and executes $\cal{P}$. Even if $a$ is activated again until $t'$,
since $M$-robot $a$ observes $S$-robot $b$, $a$ does not change its color, 
${\hat{t}}=t'+1$ becomes a $cs$-time such that $[(S,S)]_{t+1} \stackrel{\mathrm{a}}{\mathrm{\rightarrow}} [(M,T)]_{\hat{t}}$.

If $a$ is activated between $t^+(a,C)+1$ and $t'$ but does not finish its cycle 
until $t'$, since even if $T$-robot $b$ observes $M$-robot $a$ between $t'$ and $(t')^+(a,M_E)$, $b$ does not change its color, $\hat{t}=(t')^+(a,M_E)$ becomes a cs-time
such that $[(S,S)]_{t+1} \stackrel{\mathrm{a}}{\mathrm{\rightarrow}} [(M,T)]_{\hat{t}}$.

%If $a$ performs $Comp$ after %$t'$, %a <-> b (T,T), (T,M)

{\bf Case (2)}: In this case, if we consider $t'$ and $t^+(a,C)$ as $t^+(a,C)$ and $t'$ in the case (1), respectively, it can be reduced to {\bf Case (1)}.

Thus, this lemma holds. $\qed$
\end{proof}

\begin{figure}[H]
\begin{center}
\includegraphics[width=70mm]{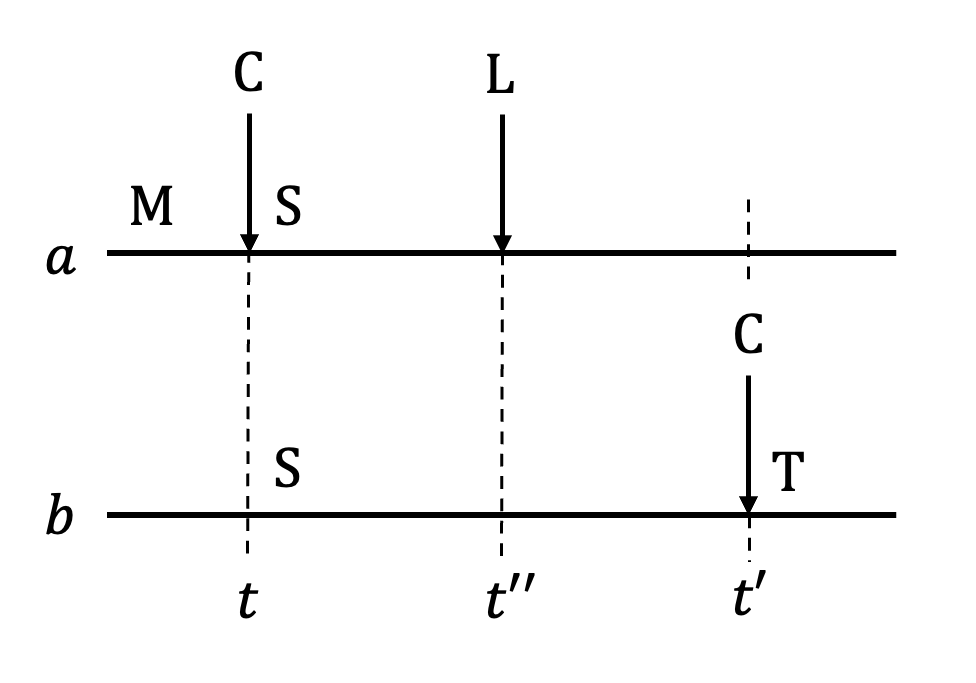}
\caption{The situation in Lemma~\ref{lem:ST->}.}\label{fig:ST->}
\end{center}
\end{figure}

\begin{figure}[H]
\begin{center}
\includegraphics[width=122mm]{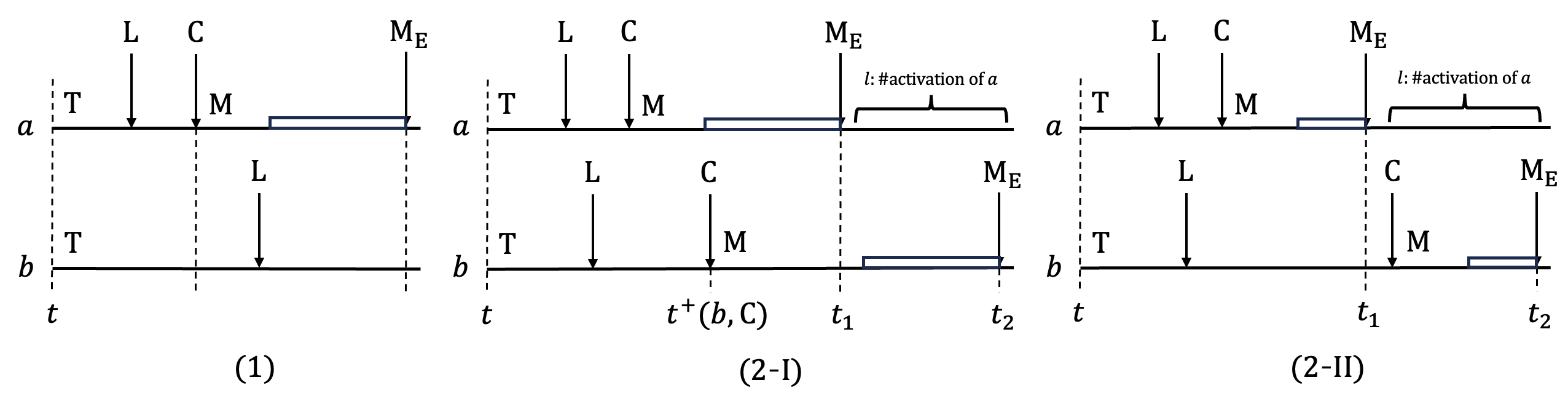}
\caption{The proof of Lemma~\ref{lem:TT}.}\label{fig:TT}
\end{center}
\end{figure}

\begin{figure}[H]
\begin{center}
\includegraphics[width=70mm]{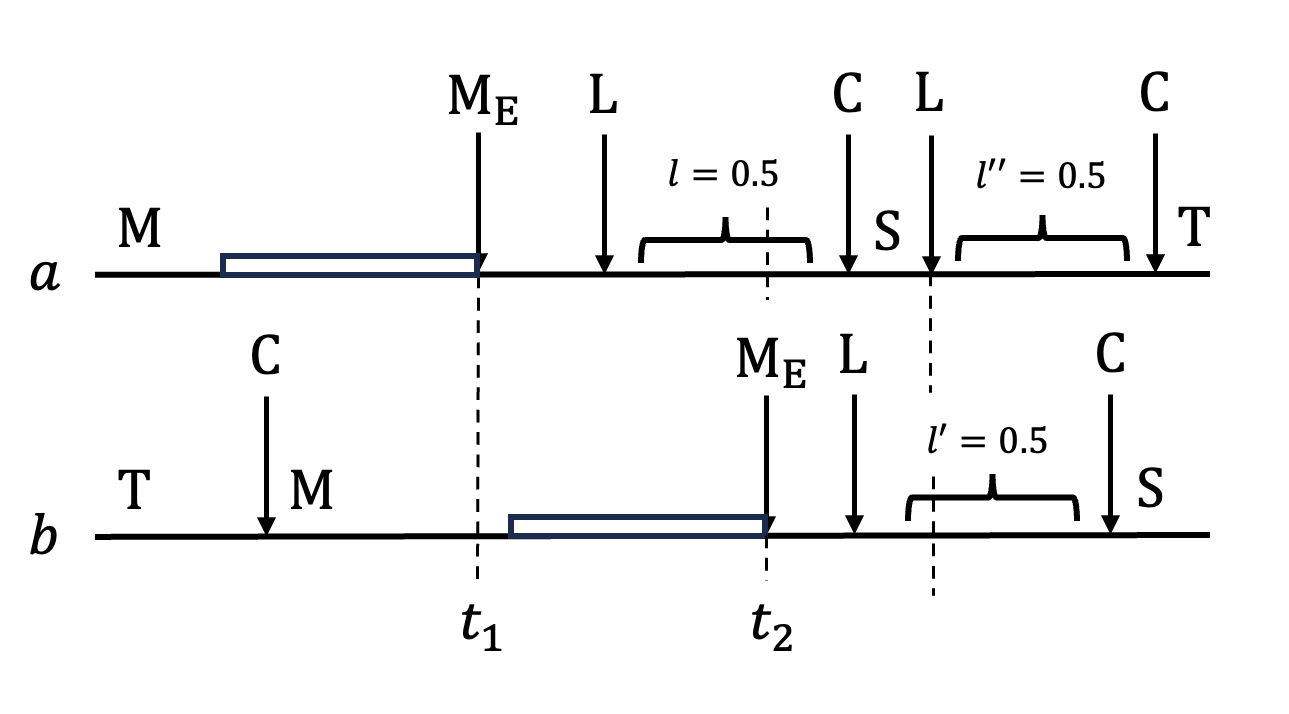}
\caption{$\ell=0.5$, $\ell'=0.5$, and $\ell''=0.5$ in (2-II).}\label{fig:ellell'ell''}
\end{center}
\end{figure}

\begin{lemma}\label{lem:TT}
    Let configuration at $cs$-time $t$ be $[(T,T)]_t$. Then there exists a $cs$-time $t'>t$ such that exactly one of the followings holds:
    \begin{enumerate}
    \item[(1)] $[(T,T)]_t \stackrel{\mathrm{(a,b)}}{\mathrm{\rightarrow}} [(M,M)]_{t'}$,  
    $[(T,T)]_t \stackrel{\mathrm{(a,b)}}{\mathrm{\rightarrow}} [(S,S)]_{t'}$, or
    $[(T,T)]_t \stackrel{\mathrm{(a,b)}}{\mathrm{\rightarrow}} [(T,T)]_{t'}$, 
    \item[(2)] $[(T,T)]_t \stackrel{\mathrm{(a,b)}}{\mathrm{\rightarrow}} [(S,M)]_{t'}$, or $[(T,T)]_t \stackrel{\mathrm{(a,b)}}{\mathrm{\rightarrow}} [(M,S)]_{t'}$,
    \item[(3)] $[(T,T)]_t \stackrel{\mathrm{(a,b)}}{\mathrm{\rightarrow}} [(T,M)]_{t'}$, or $[(T,T)]_t \stackrel{\mathrm{(a,b)}}{\mathrm{\rightarrow}} [(M,T)]_{t'}$,
    \item[(4)] $[(T,T)]_t \stackrel{\mathrm{(a,b)}}{\mathrm{\rightarrow}} [(T,S)]_{t'}$, or $[(T,T)]_t \stackrel{\mathrm{(a,b)}}{\mathrm{\rightarrow}} [(S,T)]_{t'}$,
    \item[(5)] $[(T,T)]_t \stackrel{\mathrm{(a,b),a}}{\mathrm{\rightarrow}} [(M,T)]_{t'}$, or $[(T,T)]_t \stackrel{\mathrm{(a,b),b}}{\mathrm{\rightarrow}} [(T,M)]_{t'}$,
    \item[(6)] $[(T,T)]_t, \stackrel{\mathrm{a}}{\mathrm{\rightarrow}} [(M,T)]_{t'}$, or $[(T,T)]_t  \stackrel{\mathrm{b}}{\mathrm{\rightarrow}} [(T,M)]_{t'}$. 
    \end{enumerate}
\end{lemma}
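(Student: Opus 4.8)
The statement enumerates every reachable configuration from $[(T,T)]_t$ in the asynchronous setting, so the natural approach is an exhaustive case analysis on how the adversary schedules the two robots out of the configuration $(T,T)$, organized by which robot performs its $\mathit{Look}$ first and whether the two $\mathit{Look}$s straddle a color change. First I would fix the earliest $\mathit{Comp}$ after $t$; since both robots have color $T$ and the partner is observed with color in $\{T,S\}$ (here $T$), that $\mathit{Comp}$ necessarily executes $\cal{P}$ and switches the robot to $M$. So the branching is really: (i) both robots $\mathit{Look}$ before either one writes $M$ — then both see $(T,T)$, both execute $\cal{P}$, both go to $M$, giving the simultaneous case; or (ii) one robot, say $a$, already has color $M$ (or has started its $\mathit{Move}$) when $b$ performs its $\mathit{Look}$ — then $b$ sees $(M,T)$, and by Lemma~\ref{lem:nochangecolor} $b$ is frozen, so only $a$ has moved.

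In case (i), once we reach $[(M,M)]$ at some $cs$-time, I would continue tracking: from $(M,M)$ each robot observes the other as $M\in\{T,M\}$ and switches to $S$; the adversary may interleave so that we land at $(M,M)$, $(S,M)$/$(M,S)$, or $(S,S)$ directly, and from $(S,S)$ (via Lemma~\ref{lem:ST->}-style reasoning, or a direct argument) we may further reach $(T,T)$, $(M,T)$, or $(T,M)$. Carefully chaining these, using Lemma~\ref{lem:nochangecolor} at each step to certify that the ``lagging'' robot cannot act, yields outcomes (1), (2), (4), and the first half of (5) (the $(a,b)$ then possibly one extra execution patterns $[(T,T)]\to[(M,T)]$ with label $(a,b),a$). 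In case (ii), after $b$ freezes at $(M,T)$, $a$ continues alone; depending on whether $a$ finishes further cycles, we reach $[(M,T)]$ (label $a$, outcome (6)) or $a$ advances $M\to S$, then $b$ unfreezes when it next observes $S\in\{T,S\}$ and executes $\cal{P}$, landing in $[(S,T)]\to\ldots$; I would show these threads stay within outcomes (3), (4), (6), and that any ``second execution by one robot after both executed'' is exactly captured by the $(a,b),a$ / $(a,b),b$ labels of (5). Symmetry between $a$ and $b$ halves the work.

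The main obstacle is the asynchronous bookkeeping: a robot may perform $\mathit{Look}$ long before its $\mathit{Comp}$, and its $\mathit{Move}$ may overlap arbitrarily with the other robot's entire cycle, so I must be careful that the ``$cs$-time'' abstraction legitimately lets me treat each such straggler $\mathit{Look}$/$\mathit{Comp}$/$\mathit{Move}$ as if scheduled at a clean boundary. The key technical tool is Lemma~\ref{lem:nochangecolor}: whenever a robot observes its partner in one of the ``trailing'' color pairs $(T,M)$, $(S,T)$, $(M,S)$, it is inert until its next activation, which lets me collapse otherwise unbounded interleavings into finitely many configuration transitions. I would also need the elementary observation that in $(T,T)$ any activated robot \emph{must} execute $\cal{P}$, and more generally identify, for each of the reachable pairs, exactly which robots can still change color — essentially reading off the transition diagram of Fig.~\ref{fig:SIM2}-(b) and verifying each arrow survives asynchrony. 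The case analysis is long but mechanical once these invariants are in place; the real care is in not missing a schedule, which I would guard against by always branching on ``the next $\mathit{Comp}$ event'' and showing each branch lands in the claimed list.
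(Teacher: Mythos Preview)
Your plan matches the paper's proof: assume wlog $a$ is activated first and branch on whether $t^+(b,\mathit{Look})$ falls before or after $t^+(a,\mathit{Comp})$, using Lemma~\ref{lem:nochangecolor} to freeze the lagging robot and then, in the simultaneous branch, chase the interleaving forward. Two points of the outcome bookkeeping need fixing, however.

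In your case (ii) you should stop immediately. Once $a$ has computed and changed to $M$, the time $t'=t^+(a,M_E)$ is already a $cs$-time with configuration $(M,T)$ and label $a$: that is outcome (6), and the lemma only asks for \emph{some} such $t'$. Your continuation (``$a$ advances $M\to S$, then $b$ unfreezes and executes $\cal P$'') produces an execution history ``$a$ then $b$'', which matches none of the labels in the lemma; in particular outcomes (3) and (4) carry the label $(a,b)$, meaning both robots executed $\cal P$ on the \emph{same} snapshot, and that is impossible once $b$'s $\mathit{Look}$ falls after $a$'s color has already changed to $M$. So case (ii) yields only (6).

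Conversely, your case (i) is missing outcome (3). After both robots switch to $M$, the faster one (say $a$) can complete two further cycles while $b$ is still in its $\mathit{Move}$: $a$ sees $M$ and goes $M\to S$, then sees $M$ again and goes $S\to T$, then as a $T$-robot observing $M$ it freezes (Lemma~\ref{lem:nochangecolor}). This lands at $[(T,M)]_{t'}$ with label $(a,b)$, which is exactly (3); it is not reached via $(S,S)$ and involves no second execution of $\cal P$. The paper organizes this part by letting $\ell$ count the number of full cycles the faster robot completes in the gap between the two $\mathit{Move}$-end times, treating the fractional cases $\ell=0.5,1.5,\ldots$ separately and invoking Lemma~\ref{lem:ST->} when both robots reach $S$ with a pending $\mathit{Look}$ still outstanding.
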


\begin{proof}
We can assume that $a$ is activated earlier than $b$\footnote{Otherwise, it is the symmetrical case and it can be treated similarly.}.   
\begin{enumerate}
    \item[Case 1:] ($t^+(a,C) \leq t^+(b,L)$ (see Fig.~\ref{fig:TT}(1)))
    
    Since $T$-robot $a$ observes $T$-robot $b$, $a$ changes its color to $M$ and performs $\cal{P}$. And since $b$ observes $M$-robot $a$, $b$ does not change its color (Lemma~\ref{lem:TT}($(\alpha,\beta)=(T,M)$),
    setting $t'=t^+(a,M_E)$ $t'$ becomes a $cs$-time and it holds that $[(T,T)]_t, \stackrel{\mathrm{a}}{\mathrm{\rightarrow}} [(M,T)]_{t'}$. ($(6)$)\footnote{Note that this result holds when $t^+(b,L) > t'$.}. %\textcolor{green}
    Symmetrically, it holds that $[(T,T)]_t  \stackrel{\mathrm{b}}{\mathrm{\rightarrow}} [(T,M)]_{t'}$.
    \item[Case 2:] (($t^+(a,L) \leq t^+(b,L) \leq t^+(a,C)$ (see Fig.~\ref{fig:TT}(2-I),(2-II)))

     Both robots change their color to $M$ and perform $\cal{P}$ with the same snapshot. Let $t_1=\min(t^+(a,M_E),t^+(b,M_E))$ and $t_2=\max(t^+(a,M_E),t^+(b,M_E))$. Wlog, we assume that $t^+(a,M_E)< t^+(b,M_E)$, and  then $t_1=t^+(a,M_E)$ and $t_2=t^+(b,M_E)$.
     There are two subcases (2-I) $t^+(b,C) < t^+(a,M_E)$ and (2-II) $t^+(b,C) \geq t^+(a,M_E)$.

     {\bf (2-I)} If $a$ is activated between $[t_1..t_2]$, then $M$-robot $a$ observes $M$-robot $b$. 
     Let $\ell$ be the number of activation of the robot $a$ completing the cycle between $[t_1..t_2]$.
     \begin{itemize}
     \item If $\ell=0$, letting $t'=t_2$, $[(T,T)]_t \stackrel{\mathrm{(a,b)}}{\mathrm{\rightarrow}} [(M,M)]_{t'}$($(1)$). 

     If $\ell=0$ but $a$ performs only $Look$-operation between $[t_1..t_2]$ (it is denoted as $\ell=0.5$), that is, $(t_1)^+(a,C)>t_2$, let $t_3=(t_1)^+(a,C)$ and $\ell'$ be the number of activation of the robot $b$ completing the cycle between $[t_2..t_3]$. %Similarly to the case of $\ell$, 
     %\textcolor{blue}{
     If $\ell'=0$ and $\ell' \geq 1$, setting $t'=t_3$, we can show that $[(T,T)]_t \stackrel{\mathrm{(a,b)}}{\mathrm{\rightarrow}} [(S,M)]_{t'}$($(2)$). Symmetrically, it holds that $[(T,T)]_t  \stackrel{\mathrm{b}}{\mathrm{\rightarrow}} [(M,S)]_{t'}$.  $[(T,T)]_t \stackrel{\mathrm{(a,b)}}{\mathrm{\rightarrow}} [(S,T)]_{t'}$($(4)$), respectively. In the case that $\ell'=0.5$,% but $b$ performs only $Look$-operation between $[t_2..t_3]$, 
     let $\ell''$ be the number of activation of the robot $a$ completing the cycle between $[t_3..t_4]$, where $t_4=(t_3)^+(b,C)$. If $\ell''=0$ and $\ell'' \geq 1$, setting $t'=t_4$, it holds that $[(T,T)]_t \stackrel{\mathrm{(a,b)}}{\mathrm{\rightarrow}} [(S,S)]_{t'}$($(1)$), $[(T,T)]_t \stackrel{\mathrm{(a,b)}}{\mathrm{\rightarrow}} [(T,S)]_{t'}$($(4)$), respectively. 
     The case that $\ell''=0.5$  but $a$ performs only $Look$-operation between $[t_3..t_4]$
     is left (Fig.~\ref{fig:ellell'ell''}) and so let $t_5=(t_4)^+(a,C)$.
     If $a$ is not activated between $[t_4..t_5]$, letting $t'=t_5$, it holds that  $[(T,T)]_t \stackrel{\mathrm{(a,b)}}{\mathrm{\rightarrow}} [(T,S)]_{t'}$($(4)$). Otherwise,
     using Lemma~\ref{lem:ST->} we can obtain that $[(T,T)]_{t} \stackrel{(a,b)}{\mathrm{\rightarrow}} [(T,T)]_{t'}$($(1)$), $[(T,T)]_{t} \stackrel{\mathrm{(a,b) a}}{\mathrm{\rightarrow}}[(M,T)]_{t'}$($(5)$), or $[(T,T)]_{t} \stackrel{\mathrm{(a,b) b}}{\mathrm{\rightarrow}}[(T,M)]_{t'}$($(5)$), where $t'=\hat{t}$ in Lemma~\ref{lem:ST->}. %$[(S,S)]_{t+1} \stackrel{\mathrm{a}}{\mathrm{\rightarrow}}[(S,T)]_{\hat{t}}$, or $[(S,S)]_{t+1} \stackrel{\mathrm{b}}{\mathrm{\rightarrow}}[(T,S)]_{\hat{t}}$.%the last part;
     
     \item If $\ell=1$ and one cycle completes between $[t_1..t_2]$,  since $M$-robot $a$ observes $M$-robot $b$, $a$ changes its color to $S$, letting $t'=t_2$, $[(T,T)]_t \stackrel{\mathrm{(a,b)}}{\mathrm{\rightarrow}} [(S,M)]_{t'}$.($(2)$).            
      If $\ell=1$ but $a$ performs only the second $Look$-operation between $[t_1..t_2]$,  $M$-robot $a$ changes its color to $S$ at the first $Comp$-operation and changes its color to $T$ at the second $Comp$-operation at $t'>t_2$. If $M$-robot $b$ is not activated between $[t_2..t']$, then   $[(T,T)]_t \stackrel{\mathrm{(a,b)}}{\mathrm{\rightarrow}} [(T,M)]_{t'}$.($(3)$). Symmetrically, it holds that $[(T,T)]_t  \stackrel{\mathrm{b}}{\mathrm{\rightarrow}} [(M,T)]_{t'}$.  Otherwise, even if $M$-robot $b$ observes $S$-robot $a$, $b$ does not change its color. Thus, it is the same as the first case.                  
                         
     \item If $\ell \geq 2$, $M$-robot $a$ changes its color to $S$ at the first $Comp$-operation and changes its color to $T$ at the second $Comp$-operation. Since $T$-robot $a$ does not change its color when seeing $M$-robot $b$, $t_2$ becomes a $cs$-time and it holds that $[(T,T)]_t \stackrel{\mathrm{(a,b)}}{\mathrm{\rightarrow}} [(T,M)]_{t_2}$($(3)$).
     \end{itemize}
     
     {\bf (2-II)} The case  that $t^+(b,C)<(t_1)^+(a,L)$ is the same as the case  {\bf (2-I)}. In the case that $t^+(b,C) \geq (t_1)^+(a,L)$, 
     If $M$-robot $a$ is not activated between $[t_1..t_2]$, setting $t'=t_2$, it hold that $[(T,T)]_t \stackrel{\mathrm{(a,b)}}{\mathrm{\rightarrow}} [(M,M)]_{t'}$($(1)$).  Otherwise,
     $M$-robot $a$ observes $T$-robot or $M$-robot $b$ and changes its color to $S$ at $(t_1)^+(a,C)=t_C$.
     There are two cases (a) $t_C \leq t_2$ and (b) $t_C > t_2$:
     \begin{itemize}
     \item (a) $t_C\leq t_2$:
     %     If $t_C< t^+(b,C)$, 
     Let $\ell$ be the number of activation of the robot $a$ completing the cycle between $(t_C..t_2]$. If $\ell=0$, $t_2$ is a $cs$-time and $[(T,T)]_t \stackrel{\mathrm{(a,b)}}{\mathrm{\rightarrow}} [(S,M)]_{t_2}$($(2)$). 
     Otherwise ($\ell \geq 1$), since $S$-robot $a$ becomes a $T$-robot, $t_2$ is a $cs$-time and $[(T,T)]_t \stackrel{\mathrm{(a,b)}}{\mathrm{\rightarrow}} [(T,M)]_{t_2}$($(3)$). 
     \item (b) $t_C >t_2$:
     This case is the same as that of $\ell=0.5$ and $\ell'=0.5$ in (2-I).
     
    \end{itemize}                   
\end{enumerate}
$\qed$
\end{proof}

\begin{lemma}\label{lem:MM}
    Let configuration at $cs$-time $t$ be $[(M,M)]_t$. Then there exists a $cs$-time $t'>t$ such that exactly one of the followings holds:
    \begin{enumerate}
    \item[(1)] $[(M,M)]_t \stackrel{}{\mathrm{\rightarrow}} [(S,S)]_{t'}$, or
    $[(M,M)]_t \stackrel{}{\mathrm{\rightarrow}} [(T,T)]_{t'}$,
    \item[(2)] $[(M,M)]_t \stackrel{}{\mathrm{\rightarrow}} [(S,T)]_{t'}$, or $[(M,M)]_t \stackrel{}{\mathrm{\rightarrow}} [(T,S)]_{t'}$,
    \item[(3)] $[(M,M)]_t \stackrel{}{\mathrm{\rightarrow}} [(S,M)]_{t'}$, or $[(M,M)]_t, \stackrel{}{\mathrm{\rightarrow}} [(M,S)]_{t'}$,
    \item[(4)] $[(M,M)]_t \stackrel{}{\mathrm{\rightarrow}} [(T,M)]_{t'}$, or $[(M,M)]_t, \stackrel{}{\mathrm{\rightarrow}} [(M,T)]_{t'}$,
    \item[(5)] $[(M,M)]_t \stackrel{a}{\mathrm{\rightarrow}} [(M,T)]_{t'}$, or $[(M,M)]_t, \stackrel{b}{\mathrm{\rightarrow}} [(T,M)]_{t'}$. 
    \end{enumerate}
\end{lemma}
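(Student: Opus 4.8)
The plan is to carry out, for the starting configuration $[(M,M)]_t$, the same style of timing case analysis already used for Lemma~\ref{lem:TT}, now beginning from two $M$-robots. Two features make this episode lighter than that of Lemma~\ref{lem:TT}: first, a robot cannot execute ${\cal{P}}$ before it has passed through $S$ and then $T$, so the activity immediately following $t$ consists only of colour changes; second, the steps $M\to S$ and $S\to T$ involve no movement (the computed destination equals the current position), so a robot moves only when it performs a $T\to M$ step, i.e.\ when it actually executes ${\cal{P}}$, and hence the asynchrony of motion matters only around such a step.

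Assume without loss of generality that $a$ is activated no later than $b$; the opposite order is symmetric. Since both robots are $M$-robots at $t$ and any robot activated from $(M,M)$ changes colour (an $M$-robot seeing an $M$-robot turns $S$), the first cs-time $t'$ after $t$ necessarily has a configuration different from $(M,M)$, and this is the $t'$ we shall exhibit. The decisive question is whether $b$ also performs an $M\to S$ step during $(t,t')$: this happens iff $b$'s first $\LK$ after $t$ still sees the mover coloured $M$ or $T$; otherwise $b$ sees it coloured $S$ and stays an $M$-robot by Lemma~\ref{lem:nochangecolor} (the case $(\alpha,\beta)=(M,S)$).

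\textbf{Case A: only $a$ ever performs $M\to S$ during $(t,t')$.} Then $b$ stays $M$, while $a$ performs $M\to S$ and, if re-activated, also $S\to T$, after which $a$ freezes (Lemma~\ref{lem:nochangecolor}, $(\alpha,\beta)=(T,M)$). A short analysis of whether $t'$ falls before or after $a$'s second $\CP$ shows that the configuration at $t'$ is $(S,M)$ or $(M,S)$ (case~(3)), or $(T,M)$ or $(M,T)$ (case~(4)), with nobody having executed ${\cal{P}}$ -- consistent with the absent $\gamma$-label. \textbf{Case B: both robots perform $M\to S$ during $(t,t')$.} Here the nested analysis of Lemma~\ref{lem:TT}, including its appeal to Lemma~\ref{lem:ST->}, is replayed: according to how far one robot slips ahead of the other before the configuration settles, $t'$ carries $(S,S)$ or $(T,T)$ (case~(1)), $(S,T)$ or $(T,S)$ (case~(2)), or $(M,T)$ or $(T,M)$ with exactly one robot having executed ${\cal{P}}$ (case~(5)); the last sub-case is precisely the one produced by Lemma~\ref{lem:ST->}, once a robot runs the full circuit $M\to S\to T\to M$.

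Throughout, the crucial discipline is to stop at the first cs-time whose configuration differs from $(M,M)$ and never to chain past it; in particular one must never continue past a $(T,T)$ configuration, since applying Lemma~\ref{lem:TT} there would re-open the outcome $(T,T)\to(M,M)$ that is deliberately excluded from the present list. The clause ``exactly one of the following holds'' then follows because the schedule after $t$ fixes the interleaving, hence both $t'$ and the set of robots that execute ${\cal{P}}$ during $(t,t')$, and the five listed outcomes are pairwise distinguished by their colour pair together with the $\gamma$-label. The step I expect to be the real obstacle is, exactly as in Lemma~\ref{lem:TT}, the nested bookkeeping inside Case~B: counting how many full $\LK$-$\CP$-$\M$ cycles each robot completes within each sub-interval (the analogues of the quantities $\ell,\ell',\ell''$ there) and checking that in every branch precisely the claimed robot(s) perform ${\cal{P}}$. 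This is routine but lengthy; what keeps the number of live branches finite is Lemma~\ref{lem:nochangecolor}, which freezes one robot whenever the other is about to move.
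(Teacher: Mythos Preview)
Your proposal is correct and follows essentially the same approach as the paper: a two-branch timing case analysis relying on Lemma~\ref{lem:nochangecolor} to freeze a robot that sees the ``wrong'' colour and on Lemma~\ref{lem:ST->} to resolve the deeply interleaved sub-case. The paper splits on whether $t^+(b,L)\le t^+(a,C)$ (both robots take $M\to S$; outcomes (1), (2), (5), the last via Lemma~\ref{lem:ST->} when $\ell=0.5$) or $t^+(a,C)<t^+(b,L)$ (only $a$ changes; outcomes (3), (4) according to how many further cycles $a$ completes), which is exactly your Case~B versus Case~A in slightly different clothing.
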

\begin{proof}
   There are two cases (1) $t^+(a,L) \leq t^+(b,L) \leq t^+(a,C)$, and (2) $t^+(a,C) < t^+(b,L)$.
\begin{itemize}
    \item[(1)] If $t^+(b,C) < t^+(a,C)$, let $\ell$ be the number of activation of $b$ completing the cycle between $[t^+(b,C)..t^+(a,C)]$. If $\ell=0$ and $\ell \geq 1$, setting $t'=t^+(a,C)$, we can show that $[(M,M)]_t \stackrel{\mathrm{}}{\mathrm{\rightarrow}} [(S,S)]_{t'}$($(1)$), $[(M,M)]_t \stackrel{\mathrm{}}{\mathrm{\rightarrow}} [(T,S)]_{t'}$($(2)$), respectively. Otherwise ($\ell=0.5$), using Lemma~\ref{lem:ST->} we can obtain that $[(M,M)]_{t} \stackrel{}{\mathrm{\rightarrow}} [(T,T)]_{t'}$($(1)$), $[(T,T)]_{t} \stackrel{\mathrm{a}}{\mathrm{\rightarrow}}[(M,T)]_{t'}$($(5)$), or $[(T,T)]_{t} \stackrel{\mathrm{b}}{\mathrm{\rightarrow}}[(T,M)]_{t'}$($(5)$), where $t'=\hat{t}$ in Lemma~\ref{lem:ST->}. 
    
    If $t^+(b,C) \geq t^+(a,C)$, exchanging $a$ and $b$, it is the same situation as the former case.
    \item[(2)] Let $\ell$ be the number of activation of $a$ completing the cycle between \\$[t^+(a,C)..t^+(b,C)=t_C]$. If $\ell=0$ and $\ell \geq 1$, setting $t'=t^+(b,C)$, we can show that $[(M,M)]_t \stackrel{\mathrm{}}{\mathrm{\rightarrow}} [(S,M)]_{t'}$($(3)$), $[(M,M)]_t \stackrel{\mathrm{}}{\mathrm{\rightarrow}} [(T,M)]_{t'}$($(4)$), respectively. Otherwise ($\ell=0.5$), since $M$-robot $b$ does not change its color when observing $S$-robot $a$, setting $t'=(t_C)^+(a,C)$, $[(M,M)]_t \stackrel{\mathrm{}}{\mathrm{\rightarrow}} [(T,M)]_{t'}$($(4)$).
\end{itemize}
 $\qed$
\end{proof}

\begin{figure}[H]
\begin{center}
\includegraphics[width=70mm]{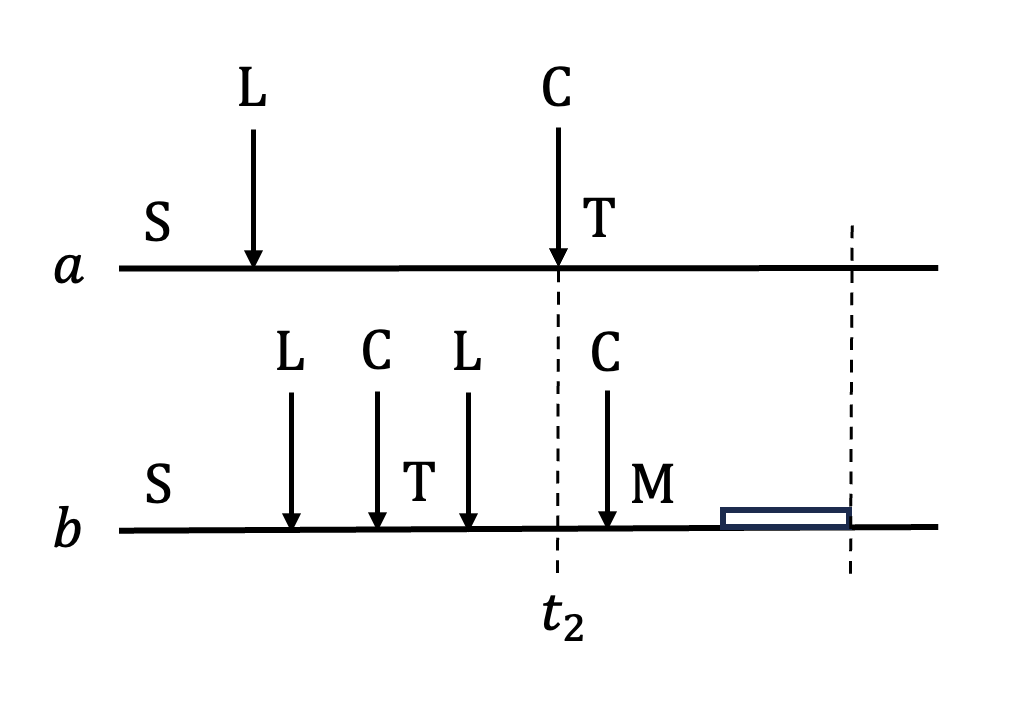}
\caption{The proof of Lemma~\ref{lem:SS}.}\label{fig:ell=1.5}
\end{center}
\end{figure}
\begin{lemma}\label{lem:SS}
    Let configuration at $cs$-time $t$ be $[(S,S)]_t$. Then there exists a $cs$-time $t'>t$ such that exactly one of the followings holds:
    \begin{enumerate}
    \item[(1)] $[(S,S)]_t \stackrel{}{\mathrm{\rightarrow}} [(T,T)]_{t'}$, 
    \item[(2)] $[(S,S)]_t \stackrel{\mathrm{b}}{\mathrm{\rightarrow}} [(T,M)]_{t'}$, or $[(S,S)]_t \stackrel{\mathrm{a}}{\mathrm{\rightarrow}} [(M,T)]_{t'}$,
    \item[(3)] $[(S,S)]_t \stackrel{}{\mathrm{\rightarrow}} [(T,S)]_{t'}$, or $[(S,S)]_t, \stackrel{}{\mathrm{\rightarrow}} [(S,T)]_{t'}$. 
    \end{enumerate}
\end{lemma}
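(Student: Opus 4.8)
The plan is to prove Lemma~\ref{lem:SS} by the same event-ordering case analysis used for Lemmas~\ref{lem:TT} and~\ref{lem:MM}, exploiting that at $[(S,S)]_t$ the only enabled move is $S\to T$: an $S$-robot recolors to $T$ exactly when it observes the other robot colored $M$ or $S$, and it does so \emph{without} executing ${\cal{P}}$. Assume w.l.o.g.\ that $a$ is activated no later than $b$ after $t$ (the mirror case yields the symmetric conclusions). Write $t_C^a=t^+(a,C)$ and $t_L^b=t^+(b,L)$, and split on whether $t_L^b>t_C^a$ (Case B) or $t_L^b\le t_C^a$ (Case A).

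Case B is immediate: $b$ has performed no operation before $t_C^a$, so the configuration stays $[(S,S)]$ on $(t..t_C^a)$ and becomes $[(T,S)]$ at $t_C^a$; since $a$'s recoloring did not execute ${\cal{P}}$ its pending $Move$ is a no-op, and $b$'s next operation is a $Look$, so $t'=t_C^a$ is a cs-time with $[(T,S)]_{t'}$ --- item $(3)$ (the mirror gives $[(S,T)]$). In Case A both robots perform their first $Look$ on $[(S,S)]$, hence both are committed to recoloring $S\to T$. Let $t_1<t_2$ be their $Comp$-times, say $t_1=t_C^a$, and inspect $[t_1..t_2]$ (only the robot that Comped first --- here $a$ --- can act there, since $b$ cannot begin a new $Look$ while mid-cycle). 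If $a$ is not reactivated in $[t_1..t_2]$, then at $t_2$, or at $(t_2)^+(b,M_E)$ if $b$'s no-op $Move$ is still running, the configuration is $[(T,T)]$ and no robot executed ${\cal{P}}$ --- item $(1)$. If $a$ completes a further cycle in $[t_1..t_2]$, then the $T$-robot $a$, seeing the $S$-robot $b$, executes ${\cal{P}}$ and recolors to $M$, and is thereafter frozen by the $(M,S)$ case of Lemma~\ref{lem:nochangecolor}; once $b$ Comps to $T$ it is in turn frozen by the $(T,M)$ case, so at the first $M_E$-time of $a$ after $b$'s $Comp$ the configuration is $[(M,T)]$ with only $a$ having executed ${\cal{P}}$ --- item $(2)$, with the $[(T,M)]$ variant arising when $b$ is the first to Comp. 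The remaining borderline, where $a$'s reactivation in $[t_1..t_2]$ performs only a $Look$, is handled exactly as the ``$\ell=0.5$'' bookkeeping in the proof of Lemma~\ref{lem:MM}: the pending $Comp$ forces $a\to M$ just after $t_2$, again yielding item $(2)$.

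The main obstacle is not the transition logic but the asynchronous bookkeeping: after a robot recolors $T\to M$ the configuration transiently visits non-listed states such as $[(M,S)]$, so one must follow the run until the second robot has left $S$ and then verify that the chosen $t'$ is genuinely a cs-time --- in particular that the $M$-robot's (real) $Move$ has ended, that the recoloring robot's $Move$ is a no-op, and that neither robot has been reactivated past the target configuration, which follows from the facts that a robot cannot start a $Look$ before completing its current $Move$ and that the relevant ordered pair is frozen by Lemma~\ref{lem:nochangecolor}. As in Lemmas~\ref{lem:TT} and~\ref{lem:MM}, counting completed versus $Look$-only reactivations keeps the subcases finite and exhaustive, and one checks that in each execution exactly one listed outcome is realized (Case B gives $(3)$; Case A gives $(1)$ or $(2)$).
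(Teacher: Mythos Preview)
Your overall decomposition is the same as the paper's: split on whether $b$'s first $Look$ comes after or before $a$'s first $Comp$, and in the overlapping case count how many additional cycles the first-to-$Comp$ robot completes before the other one $Comp$s. Cases B, ``$a$ not reactivated'', and ``$a$ completes a further cycle'' are handled correctly.

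The gap is in your borderline ``$\ell=0.5$'' subcase. You claim that the pending $Comp$ forces $a\to M$ just after $t_2$ and that this always yields item~(2). But at $t_2$ the configuration is $(T,T)$ with $a$ holding a pending $Comp$, and nothing prevents $b$ from performing a fresh $Look$ \emph{before} $a$'s pending $Comp$ fires. If that happens, $T$-robot $b$ sees $T$-robot $a$ and will itself execute ${\cal P}$ and recolor to $M$; the run then passes through $(M,M)$, which is not on the list, and your argument for $(M,T)$ breaks down. The paper's proof meets exactly this difficulty in its $\ell=1.5$ case and resolves it by an explicit split: either (a) the freshly-$T$ robot next observes the other as $M$, giving item~(2); or (b) it observes the other still as $T$, in which case $t_2$ \emph{can be treated as} a cs-time with $(T,T)$ (since the pending $Comp$ is based on a snapshot with identical positions and with the other robot colored $S$, which triggers the same action as seeing $T$), yielding item~(1). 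Your appeal to ``the $\ell=0.5$ bookkeeping in Lemma~\ref{lem:MM}'' is in fact evidence against your conclusion: there the $\ell=0.5$ branch invokes Lemma~\ref{lem:ST->} and produces \emph{several} outcomes, including $(T,T)$, not a single one. So the fix is to replace ``again yielding item~(2)'' by the two-way split above; everything else in your proposal stands.
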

\begin{proof}
 If $t^+(b,L) \geq t^+(a,C)$, it holds that  $[(S,S)]_t \stackrel{}{\mathrm{\rightarrow}} [(T,S)]_{t'}$($(3)$). Otherwise ($ t~+(a<L) <t^+(b,L)=t_1 < t~+(a,C)=t_2$),  let $\ell$ be the number of activation of the robot $b$ completing the cycle between $[t_1..t_2]$. If $\ell=1$, it holds that $[(S,S)]_t \stackrel{}{\mathrm{\rightarrow}} [(T,T)]_{t'}$($(1)$). If $\ell \geq 2$, $b$ becomes an $M$-robot and performs $\cal{P}$ by $t_2$. Since $M$-robot $b$ does not change its color when seeing $S$-robot $a$, setting $t'=t_2$, $[(S,S)]_t \stackrel{\mathrm{b}}{\mathrm{\rightarrow}} [(T,M)]_{t'}$($(2)$).
 If $\ell=1.5$ (Fig.~\ref{fig:ell=1.5}), we have two cases, (a) $T$-robot $a$ observes the $M$-robot $b$ after $t_2$ and (b) $T$-robot $a$  observes the $T$-robot $b$ after $t_2$. 
 \begin{itemize}
     \item[(a)] Since $T$-robot $a$ does not change its color when seeing $M$-robot $b$, setting $t'=(t_2)^+(b,M_E)$, $[(S,S)]_t \stackrel{\mathrm{b}}{\mathrm{\rightarrow}} [(T,M)]_{t'}$($(2)$).
     \item[(b)] In this case $t_2$ can be considered as a $cs$-time. Thus setting $t'=t_2$, it holds that $[(S,S)]_t \stackrel{}{\mathrm{\rightarrow}} [(T,T)]_{t'}$($(1)$).
 \end{itemize}   $\qed$
\end{proof}
\begin{lemma}\label{lem:TM-ST-SM}

  \begin{enumerate}
  
  \item[(1)] $[(T,M)]_t \stackrel{}{\mathrm{\rightarrow}} [(T,S)]_{t'}$, and  $[(M,T)]_t \stackrel{}{\mathrm{\rightarrow}} [(S,T)]_{t'}$.
  \item[(2)] $[(S,M)]_t \stackrel{}{\mathrm{\rightarrow}} [(T,M)]_{t'}$, and $[(M,S)]_t \stackrel{}{\mathrm{\rightarrow}} [(M,T)]_{t'}$. 
  \item[(3)] $[(S,T)]_t \stackrel{\mathrm{b}}{\mathrm{\rightarrow}} [(S,M)]_{t'}$, or $[(S,T)]_t \stackrel{\mathrm{b}}{\mathrm{\rightarrow}} [(T,M)]_{t'}$.
  \item[(4)] $[(T,S)]_t \stackrel{\mathrm{a}}{\mathrm{\rightarrow}} [(M,S)]_{t'}$, or $[(T,S)]_t \stackrel{\mathrm{a}}{\mathrm{\rightarrow}} [(M,T)]_{t'}$.
  \end{enumerate}
\end{lemma}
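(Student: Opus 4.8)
The plan is to handle the four transitions by the same template already used for Lemmas~\ref{lem:TT}--\ref{lem:SS}: in each of the six source configurations there is at most one robot that the protocol permits to change its light, the other robot is \emph{frozen} by Lemma~\ref{lem:nochangecolor}, and the asserted $cs$-time $t'$ is the first rest instant after the colour change(s) that fairness forces to occur. First I would tabulate, for each start configuration $[(\alpha,\beta)]_t$, the frozen robot together with the instance of Lemma~\ref{lem:nochangecolor} that freezes it, the forced colour change, and whether ${\cal{P}}$ is executed; then a short enumeration of the order of the two robots' $\LK$/$\CP$/$\ME$ events settles each case. Throughout I would use the fact that the rule ``$T$ seeing $M$'', ``$M$ seeing $S$'', ``$S$ seeing $T$'' does nothing (this is exactly Lemma~\ref{lem:nochangecolor}), so that the frozen robot also does not move.

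For parts~(1) and~(2) no robot executes ${\cal{P}}$. In $[(T,M)]_t$ the $T$-robot sees an $M$-robot, so by Lemma~\ref{lem:nochangecolor} it keeps colour $T$ and stays put, and since the partner is not a $T$-robot nobody runs ${\cal{P}}$; the $M$-robot, seeing a $T$-robot, turns to $S$ at its next $\CP$ (with no move) and is then frozen, since it now sees a $T$-robot while being $S$. Taking $t'$ to be the instant just after that $\CP$ makes it a $cs$-time: the $M\!\to\!S$ robot's next operation is a $\LK$, and the other robot's next operation is either a $\LK$ or a $\CP$ on a stale snapshot in which the partner still shows $M$, so neither robot changes its light or moves there; hence $[(T,M)]_t\rightarrow[(T,S)]_{t'}$, and $[(M,T)]_t\rightarrow[(S,T)]_{t'}$ is symmetric. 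In $[(S,M)]_t$ the $M$-robot is frozen because it sees an $S$-robot, and the $S$-robot, seeing an $M$-robot, turns to $T$; the same stale-snapshot argument yields a $cs$-time $t'$ with $[(S,M)]_t\rightarrow[(T,M)]_{t'}$, and $[(M,S)]_t\rightarrow[(M,T)]_{t'}$ is symmetric.

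For parts~(3) and~(4) exactly one robot executes ${\cal{P}}$. In $[(S,T)]_t$ the $S$-robot sees a $T$-robot, hence is frozen by Lemma~\ref{lem:nochangecolor}: it stays $S$ and immobile until the partner acts, so the $T$-robot, seeing an unchanging $S$-robot, eventually (by fairness) executes ${\cal{P}}$ on that fixed snapshot and turns to $M$. This is the unique execution, so the arrow is labelled by that robot. Once it is an $M$-robot it is frozen again, since it now sees an $S$-robot. What happens to the $S$-robot then depends only on where its next $\LK$ after $t$ falls relative to the partner's $\CP$+$\M$: if that $\LK$ precedes the partner's colour-changing $\CP$, the $S$-robot computes on a snapshot still showing $T$, does not change, and the first $cs$-time after the partner's $\ME$ has configuration $(S,M)$; if instead the $S$-robot observes the new $M$ while the partner is still moving (so that no $cs$-time intervenes, exactly as analysed in Lemma~\ref{lem:ST->}), it turns to $T$ at its next $\CP$ and the instant just after that $\CP$ is a $cs$-time with configuration $(T,M)$. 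Either way the executing robot ran ${\cal{P}}$ once, so the labelling is correct; $[(T,S)]_t$ is the mirror image with $a,b$ exchanged, giving $(M,S)$ or $(M,T)$.

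The part that needs real care --- and where I expect the main obstacle to be --- is the $cs$-time bookkeeping: one must check that the claimed configuration is attained at a \emph{genuine} $cs$-time and that no robot slips in an extra transition before $t'$. The two dangerous slips are, in parts~(3)--(4), the freshly created $M$-robot re-entering the rule ``$M$ seeing $T$ $\to$ $S$'' after its partner becomes $T$, and the $T$-robot executing ${\cal{P}}$ a second time; both are excluded by taking $t'$ to be the \emph{first} rest instant after the mandatory change, together with the stale-snapshot observation that a $\CP$ performed on a pre-change snapshot is light-preserving and motionless and hence compatible with being a $cs$-time --- the same device already used in Lemmas~\ref{lem:TT}--\ref{lem:SS}. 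Granted that observation, what remains is the routine case split on the interleaving of the two robots' $\LK$/$\CP$/$\ME$ events carried out in those lemmas.
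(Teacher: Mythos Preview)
Your proposal is correct and follows essentially the same approach as the paper. The paper's own proof is extremely terse: for (1) and (2) it simply says ``immediately obtained by Lemma~\ref{lem:nochangecolor}'', and for (3) and (4) it says ``obtained by using the similar method to the case~(1) of Lemma~\ref{lem:MM}''. Your write-up is exactly the unpacking of those two sentences --- freezing one robot via Lemma~\ref{lem:nochangecolor}, forcing the other robot's unique transition, and then doing the $cs$-time bookkeeping by a short interleaving analysis of $\LK/\CP/\ME$ events. The only cosmetic difference is that you point to Lemma~\ref{lem:ST->} for the stale-snapshot device, whereas the paper points to case~(1) of Lemma~\ref{lem:MM}; since that case itself invokes Lemma~\ref{lem:ST->}, the two references amount to the same thing.
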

\begin{proof}
\begin{itemize}
    \item For (1) and (2) they are immediately obtained by Lemma~\ref{lem:nochangecolor}.
    \item For (3) and (4), they are obtained by using the similar method to the case (1) of Lemma~\ref{lem:MM}.
\end{itemize}   \qed 
\end{proof}
% ３状態のアルゴリズム
By using Lemmas~\ref{lem:TT}-\ref{lem:TM-ST-SM}, We can show that Figure~\ref{fig:SIM2}-(b) is the correct transition diagram for SIM-2$^{RS}_A$.

\setcounterref{theoremduplicate}{th:Sim-RS-by-AS-2robots}
\addtocounter{theoremduplicate}{-1}

\begin{theoremduplicate}[Reprint of Theorem~\ref{th:Sim-RS-by-AS-2robots} on 
page~\pageref{th:Sim-RS-by-AS-2robots}]\label{duplicate:th:Sim-RS-by-AS-2robots}

$\forall R \in \mathcal R_2,  {\OB^{RS}}(R) \subseteq \LU_3^{A}(R).$
\end{theoremduplicate}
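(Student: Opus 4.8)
The plan is to derive the theorem from Lemmas~\ref{lem:nochangecolor}--\ref{lem:TM-ST-SM}, which together certify that Fig.~\ref{fig:SIM2}-(b) is the complete configuration‑transition diagram of SIM-$2^{RS}_A$ under an arbitrary \ASY\ schedule: starting from $[(T,T)]$, every step from one cs‑time to the next is an edge of that diagram (Lemma~\ref{lem:TT} from $[(T,T)]$, Lemma~\ref{lem:MM} from $[(M,M)]$, Lemma~\ref{lem:SS} from $[(S,S)]$, Lemma~\ref{lem:TM-ST-SM} from the six ``mixed'' pairs, and Lemma~\ref{lem:nochangecolor} to guarantee that the claimed times really are cs‑times). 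Hence every execution of SIM-$2^{RS}_A$ is an infinite walk in this diagram.

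From such a walk I would read off the activation sequence of $\cal P$ it induces: label each edge by the set $e\subseteq\{a,b\}$ of robots that complete one $\LK/\CP/\M$ cycle of $\cal P$ along it (expanding a compound label such as $(a,b),a$ in Lemma~\ref{lem:TT} into the two rounds it denotes), then delete the $\emptyset$‑labeled edges. Two structural facts about Fig.~\ref{fig:SIM2}-(b) do the work. First, the ``synchronous triangle'' $[(T,T)]\!\to\![(M,M)]\!\to\![(S,S)]\!\to\![(T,T)]$ has a single labeled edge, $[(T,T)]\!\to\![(M,M)]$, on which (by the same‑snapshot argument inside the proof of Lemma~\ref{lem:TT}) $a$ and $b$ compute $\cal P$ on one common snapshot; hence while the walk stays in the triangle the induced sequence is $R,R,R,\dots$. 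Second, every edge leaving the triangle is a single‑robot activation, the ``disjoint hexagon'' $[(M,T)]\!\to\![(S,T)]\!\to\![(S,M)]\!\to\![(T,M)]\!\to\![(T,S)]\!\to\![(M,S)]\!\to\![(M,T)]$ is closed (no edge back to the triangle), and it carries exactly two labeled edges, alternately $\{b\}$ and $\{a\}$. Consequently the induced sequence is a (possibly empty, possibly infinite) prefix of $R$'s, followed --- if that prefix is finite --- by an alternation $\{x\},\{\bar x\},\{x\},\dots$ of non‑empty pairwise‑disjoint consecutive sets; this is exactly an \RSY\ activation sequence, and since the \RSY\ definition imposes no constraint between the last $R$ and the first partial round, the handover from triangle to hexagon needs no extra care.

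Fairness and termination are then routine. If the walk never leaves the triangle, both robots execute $\cal P$ in every $R$‑round; otherwise, by fairness of the \ASY\ scheduler and the finiteness of each cycle, the hexagon is traversed infinitely often and each robot executes $\cal P$ on every second labeled edge. Either way each robot executes $\cal P$ infinitely often, so the induced execution is a genuine infinite \RSY\ execution of $\cal P$; and if $\cal P$ is terminating, then after finitely many rounds both robots' computations return ``stay'', so the lights keep cycling but no robot moves thereafter. This yields $\OB^{RS}(R)\subseteq\LU_3^A(R)$, and $\LU_k^{RS}(R)\subseteq\LU_{3k}^A(R)$ follows by running SIM-$2^{RS}_A$ on the $k$‑color light carried by $\cal P$.

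I expect the main obstacle to be establishing --- under the full interleaving power of the \ASY\ adversary --- the closure property of Fig.~\ref{fig:SIM2}-(b): that no interleaving of the two robots' $\LK/\CP/\M$ phases can carry the configuration from a ``mixed'' pair back into $\{(T,T),(M,M),(S,S)\}$, nor produce a pair outside the diagram, nor let the same robot execute $\cal P$ in two consecutive rounds. This is precisely what Lemmas~\ref{lem:ST->}--\ref{lem:TM-ST-SM} discharge; the $[(S,S)]$‑case of Lemma~\ref{lem:ST->}, invoked inside the awkward ``$\ell=0.5$'' branches of Lemmas~\ref{lem:TT} and~\ref{lem:MM}, is the delicate point, so the only remaining work for this theorem is to reorganize those case analyses into the walk/label/round‑extraction argument above. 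It is also worth noting that $n=2$ is essential: it forces every partial \RSY\ stage to be a singleton, which is exactly what lets the two ``active'' colors $M,S$ (besides $T$) suffice; for $n\ge 3$ one needs the extra $S'$ and $W$ bookkeeping of the previous section.
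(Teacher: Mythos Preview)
Your approach is essentially the same as the paper's: establish Fig.~\ref{fig:SIM2}-(b) via Lemmas~\ref{lem:nochangecolor}--\ref{lem:TM-ST-SM}, then read off the induced activation sequence as an \RSY\ schedule by separating the ``triangle'' (\FSY-prefix) from the ``hexagon'' (alternation). The paper organizes the case analysis by which outgoing lemma-case is taken from the triangle, but the content is identical.

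One factual slip to correct: your claim that ``every edge leaving the triangle is a single-robot activation'' is false. Lemma~\ref{lem:TT}(2)--(4) give edges $[(T,T)]\stackrel{(a,b)}{\to}[(S,M)],[(T,M)],[(T,S)]$ (and their mirrors), all labeled $(a,b)$; and Lemma~\ref{lem:TT}(5) gives the compound label $(a,b),x$. So the exit from the triangle may itself carry a full $R$-round (or $R$ followed by a singleton). This does not break your argument---such an $(a,b)$ simply becomes the last element of the $R$-prefix---but you need to state it correctly and then check, for every entry point into the hexagon, that the \emph{first subsequent singleton label} is consistent with the hexagon's alternation at that node. Concretely: entering at $(S,M)$ or $(T,M)$ or $(T,S)$ (via $(a,b)$) the next labeled edge is $a$; entering at $(M,T)$ via $(a,b),a$ or via $a$ alone (Lemmas~\ref{lem:TT}(5),(6), \ref{lem:MM}(5), \ref{lem:SS}(2)) the next labeled edge is $b$; and symmetrically. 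Also note Lemma~\ref{lem:TM-ST-SM}(3),(4) allow shortcuts $[(S,T)]\stackrel{b}{\to}[(T,M)]$ and $[(T,S)]\stackrel{a}{\to}[(M,T)]$, so the hexagon has four labeled edges, not two; the alternation survives because both outgoing options from $(S,T)$ carry label $b$ and both from $(T,S)$ carry label $a$. With those adjustments your walk/label extraction goes through exactly as in the paper.
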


\begin{proof}
The initial configuration at time $0$ is $[(T,T)]_0$. 

{\bf Case 1}:
The transitions of Lemma~\ref{lem:TT}(1), Lemma~\ref{lem:MM}(1), or Lemma~\ref{lem:SS}(1) occur continuously, the loop of $(T,T) \stackrel{}{\mathrm{\rightarrow}}  (M,M) \rightarrow (S,S) \rightarrow (T,T)$ in Fig.~\ref{fig:SIM2} repeats, and it ends at $(T,T)$, $(M,M)$, or $(S,S)$. Since $a$ and $b$ perform $\cal{P}$ simultaneously once in every loop, SIM$2^{RS}_A$ simulates "\FSY" phase correctly.

{\bf Case 2}:
In the loop of {\bf Case 1}, some transition except Lemma~\ref{lem:TT}-\ref{lem:SS} (1) occurs from $(T,T)$.

{\bf Case 2-1}: When the transition of Lemma~\ref{lem:TT}(2) occurs at some $cs$-time $t_i$, the transition $[(T,T)]_{t_i} \stackrel{\mathrm{(a,b)}}{\mathrm{\rightarrow}} [(S,M)]_{t_i'}$ occurs and $a$ and $b$ perform $\cal{P}$ simultaneously once between $[t_i..t_i']$\footnote{Since the other case symmetrical, we can prove similarly.}. Since the configuration at $t_i'$ is $[(S,M)]_{t_i'}$, the transition of configurations after 
$t_i'$ is 

$[(S,M)]_{t_i'} 
\stackrel{}{\mathrm{\rightarrow}} [(T,M)]_{t_1}$(Lemma~\ref{lem:TM-ST-SM}(2))

$[(T,M)]_{t_1} 
\stackrel{}{\mathrm{\rightarrow}} [(T,S)]_{t_2}$(Lemma~\ref{lem:TM-ST-SM}(1))

$[(T,S)]_{t_2} 
\stackrel{\mathrm{a}}{\mathrm{\rightarrow}} [(M,S)]_{t_3}$(Lemma~\ref{lem:TM-ST-SM}(4))

$[(M,S)]_{t_3} 
\stackrel{}{\mathrm{\rightarrow}} [(M,T)]_{t_4}$(Lemma~\ref{lem:TM-ST-SM}(2))

$[(M,T)]_{t_4} 
\stackrel{}{\mathrm{\rightarrow}} [(S,T)]_{t_5}$(Lemma~\ref{lem:TM-ST-SM}(1))

$[(S,T)]_{t_5} 
\stackrel{\mathrm{b}}{\mathrm{\rightarrow}} [(S,M)]_{t_6}$(Lemma~\ref{lem:TM-ST-SM}(3)).

Since this loop contains $a$ performing $\cal{P}$ followed by $b$ performing $\cal{P}$, SIM$2^{RS}_A$ simulates "\RSY" phase ($a$ and $b$ performs alternately) correctly.

{\bf Case 2-2}:
When the transition of Lemma~\ref{lem:TT}(3) occurs at some $cs$-time $t_i$, the transition $[(T,T)]_{t_i} \stackrel{\mathrm{b}}{\mathrm{\rightarrow}} [(T,M)]_{t_i'}$ occurs and $b$ performs $\cal{P}$ once between $[t_i..t_i']$. Since we can consider the loop of {\bf Case 2-1} starting at $[(T,M)]_{t_i'}$, the order of performing $\cal{P}$ is $b$, $a$, $b$, $\cdots$. Thus SIM-2$^{RS}_A$ simulates "\RSY" phase correctly.

The other cases, the transitions of Lemma~\ref{lem:TT} (3)$~$(6) from $(T,T)$, Lemma~\ref{lem:MM} (2)$~$(5), and Lemma~\ref{lem:SS} can be shown similarly. \qed
\end{proof}

Also it is verified that SIM-$2^{RS}_A$ works correctly from any initial configuration shown in~\ref{fig:SIM2}-(b), that is, SIM-$2^{RS}_A$ is self-stabilizing.

%The following lemma holds from SIM-2$^{RS}_A$. 
%When a robot $a$ with $\alpha$ color, $a$ is called an $\alpha$-robot.
%\textcolor{red}{This notation should be used from the beginning.}

\begin{theorem}
Protocol SIM-$2^{RS}_A$ is correct and self-stabilizing.
\end{theorem}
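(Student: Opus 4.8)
\noindent\emph{Proof plan.} Correctness starting from the designated initial configuration $[(T,T)]_0$ is already proved in Theorem~\ref{th:Sim-RS-by-AS-2robots}, so the only thing left is self-stabilization: the same guarantee when the two lights start in an \emph{arbitrary} configuration, both robots inactive. The plan rests on the observation that SIM-$2^{RS}_A$ uses only the three colors $T,M,S$, so a two-robot system has exactly the nine configurations $(\alpha,\beta)\in\{T,M,S\}^2$, and every one of them already occurs in Fig.~\ref{fig:SIM2}-(b): the three \FSY-phase configurations $(T,T),(M,M),(S,S)$ and the six Disjoint-phase configurations $(M,T),(S,T),(S,M),(T,M),(T,S),(M,S)$. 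Lemmas~\ref{lem:nochangecolor}--\ref{lem:TM-ST-SM} already determine, for each of these nine configurations taken as the configuration at an arbitrary $cs$-time $\hat t$, the complete list of configurations reachable at the next $cs$-time, which robot(s) execute ${\cal P}$ in between, and the fact that the passage takes finite time under any fair \ASY\ schedule. Since time $0$ is itself a $cs$-time (both robots are inactive, so each one's next operation is $\mathit{Look}$), these lemmas apply verbatim with $\hat t = 0$ and any initial configuration; this is where I would start.

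\noindent I would then run the short case analysis on the starting configuration, reading off the reachable part of Fig.~\ref{fig:SIM2}-(b). If the system starts in $\{(T,T),(M,M),(S,S)\}$, then either every transition thereafter is one of the ``stay'' cases Lemma~\ref{lem:TT}(1), Lemma~\ref{lem:MM}(1), Lemma~\ref{lem:SS}(1), so the execution cycles through $(T,T)\to(M,M)\to(S,S)\to(T,T)$ forever, or at some $cs$-time one of the branching cases (the remaining items of Lemmas~\ref{lem:TT}--\ref{lem:SS}) fires and the configuration moves onto the six-cycle $(M,T)\to(S,T)\to(S,M)\to(T,M)\to(T,S)\to(M,S)\to(M,T)$, which by Lemma~\ref{lem:TM-ST-SM} it never leaves. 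If the system starts in one of the six Disjoint-phase configurations, Lemma~\ref{lem:TM-ST-SM} immediately places it on that six-cycle, which it follows forever. In every branch, fairness of the \ASY\ scheduler together with the per-transition finiteness of Lemmas~\ref{lem:nochangecolor}--\ref{lem:TM-ST-SM} makes the execution an infinite concatenation of completed loop iterations.

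\noindent Next I would translate the two loops into activation patterns of ${\cal P}$. Every traversal of the three-cycle has $a$ and $b$ execute ${\cal P}$ once, simultaneously and on a common snapshot, hence simulates one fully-synchronous round; every traversal of the six-cycle has first $a$ and then $b$ execute ${\cal P}$ once, each on a snapshot in which the other robot is provably stationary, hence simulates two rounds whose activated sets are singletons, disjoint from one another and different from $R$. Thus the simulated activation sequence of ${\cal P}$ always has the form ``a (possibly empty) run of simultaneous full activations, then singletons alternating between $\{a\}$ and $\{b\}$'', which is precisely the restricted-repetition shape of \RSY\ --- the length of the full-activation prefix being the number of simultaneous executions before the schedule first desynchronizes (zero when the run starts already on the six-cycle). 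In all cases every robot executes ${\cal P}$ infinitely often, so the simulated schedule is fair, and (as in Theorem~\ref{th:Sim-RS-by-AS-2robots}) a terminating ${\cal P}$ remains terminated. Together with Theorem~\ref{th:Sim-RS-by-AS-2robots} for the initial configuration, this gives that SIM-$2^{RS}_A$ is correct and self-stabilizing.

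\noindent The real difficulty is not in this argument: it lives in Lemmas~\ref{lem:TT}--\ref{lem:TM-ST-SM} themselves, whose proofs carry out the delicate asynchronous timing analysis (relative order of the $\mathit{Look}$, $\mathit{Comp}$, $M_B$, $M_E$ events of $a$ and $b$, the auxiliary half-step counts $\ell,\ell',\ell''$, and so on) that certifies Fig.~\ref{fig:SIM2}-(b) as the exact transition graph under full asynchrony. What remains for this theorem is the bookkeeping sketched above --- checking that none of the nine possible starting configurations can leave the portion of the graph already drawn, and that each of the two loops, wherever entered, yields a legal \RSY\ pattern. One point I would state explicitly is that an arbitrary initial coloring may misrepresent the past (for instance starting at $(M,M)$ although neither robot has yet run ${\cal P}$); this is harmless, since the simulated execution of ${\cal P}$ is taken to begin only at the first $cs$-time at which some robot actually executes ${\cal P}$, and \RSY\ constrains nothing on a simulation prefix during which ${\cal P}$ is not executed.
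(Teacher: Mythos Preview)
Your proposal is correct and follows the same approach as the paper: the paper itself offers essentially no explicit proof here, only the one-line remark preceding the theorem that ``it is verified that SIM-$2^{RS}_A$ works correctly from any initial configuration shown in Fig.~\ref{fig:SIM2}-(b)'', and your plan spells out precisely that verification---observing that all nine color pairs in $\{T,M,S\}^2$ already appear in the figure, that time $0$ is a $cs$-time, and that Lemmas~\ref{lem:nochangecolor}--\ref{lem:TM-ST-SM} therefore govern the evolution from any starting configuration. Your write-up is in fact considerably more careful than the paper's (the point about an initial $(M,M)$ misrepresenting a nonexistent past, and deferring the start of the simulated \RSY\ sequence to the first actual execution of ${\cal P}$, is exactly the right way to handle it).
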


Thus, we have Theorem~\ref{th:Sim-RS-by-AS-2robots} and the following corollary.
\begin{corollary}
\label{co:Sim-RS-by-AS-2robots}
$\forall R \in \mathcal R_2,  {\LU_k^{RS}}(R) \subseteq \LU_{3k}^{A}(R).$
\end{corollary}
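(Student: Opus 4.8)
\textbf{Proof proposal.} The plan is to obtain the $3k$-color simulator as the ``product'' of the $3$-color protocol SIM-$2^{RS}_A$ of Theorem~\ref{th:Sim-RS-by-AS-2robots} with the color set of $\mathcal{P}$. Writing $C$ for the set of $k$ colors used by $\mathcal{P}$, the light of a simulating robot takes a value $(s,c)$ with $s\in\{T,M,S\}$ a \emph{control} component and $c\in C$ a \emph{simulated} component, for $3k$ colors in total. The control component is updated by exactly the rules of Algorithm~\ref{fig:sim-2-rs-a}, all of whose guards depend only on control components and ignore $c$. The simulated component is touched only in the unique transition that runs $\mathcal{P}$, namely the step $T\to M$: there the robot runs $\mathcal{P}$'s \emph{Compute} on the pairs $(position,c)$ read in its snapshot, moves to the destination $\mathcal{P}$ returns, and sets its own simulated component to the color $\mathcal{P}$ computes; in every other transition $c$ is left unchanged.

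First I would observe that, because no guard of Algorithm~\ref{fig:sim-2-rs-a} inspects $c$, the control components evolve over time exactly as in the $3$-color protocol. Hence Lemmas~\ref{lem:nochangecolor}--\ref{lem:TM-ST-SM} and the proof of Theorem~\ref{th:Sim-RS-by-AS-2robots} apply verbatim to the control components: starting from $[(T,T)]_0$ (or, for self-stabilization, from any configuration in Fig.~\ref{fig:SIM2}-(b)) the configuration traverses that diagram, the robots run $\mathcal{P}$ either simultaneously (the ``$(a,b)$'' transitions) during an initial \FSY-phase or strictly alternately thereafter, and every robot runs $\mathcal{P}$ infinitely often; in particular the sequence of $\mathcal{P}$-executions is a legal \RSY\ activation sequence satisfying $1$-fairness. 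Thus the control layer already guarantees that the \emph{schedule} induced on $\mathcal{P}$ is a \RSY\ schedule.

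The one new point is that each invocation of $\mathcal{P}$ is a \emph{faithful} step of $\mathcal{P}$, i.e.\ the invoking robot reads the current $\mathcal{P}$-configuration. By Algorithm~\ref{fig:sim-2-rs-a} a robot $x$ runs $\mathcal{P}$ only when, at its \emph{Look}, its own control component is $T$ and the other robot's control component lies in $\{T,S\}$. A robot whose control component is $T$ or $S$ is not executing a $\mathcal{P}$-move, since the only move-producing step sets the control component to $M$ and keeps it $M$ throughout the move; hence the other robot's position is stable at $x$'s \emph{Look}. Moreover that robot's simulated component has not changed since its own last $\mathcal{P}$-execution, because $c$ changes only in the $T\to M$ step and the other robot currently carries control component $T$ or $S$, not $M$. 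Consequently the pair $(position,c)$ that $x$ reads for the other robot equals that robot's current $\mathcal{P}$-configuration, so $x$'s \emph{Compute} of $\mathcal{P}$ is correct; and when both robots run $\mathcal{P}$ simultaneously they share a single snapshot (as established in the proof of Theorem~\ref{th:Sim-RS-by-AS-2robots}), so the same conclusion holds. Combining this with the previous paragraph, every execution of the product protocol in \ASY\ reproduces a \RSY\ execution of $\mathcal{P}$, proving $\LU_k^{RS}(R)\subseteq\LU_{3k}^{A}(R)$ for $|R|=2$; self-stabilization is inherited since the simulated components are read but never constrain a control transition.

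The step I expect to be the main obstacle is precisely the asynchronous consistency check in the third paragraph: ruling out the case where the invoking robot reads a position and a simulated color belonging to two different moments of the other robot's cycle. This is exactly the scenario that the intermediate color $M$ was introduced to exclude --- it flags the entire \emph{Compute}-to-\emph{end-of-Move} interval --- and Lemma~\ref{lem:nochangecolor} together with the case analysis of Lemmas~\ref{lem:TT}--\ref{lem:TM-ST-SM} already does the heavy lifting at the level of control colors; what remains is only the routine remark that carrying the simulated component along in the $T\to M$ step neither enables nor disables any control transition, so none of that case analysis is disturbed.
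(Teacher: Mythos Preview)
Your proposal is correct and follows exactly the intended route: the paper states the corollary without proof, treating it as immediate from Theorem~\ref{th:Sim-RS-by-AS-2robots} via the standard product construction (pairing each of the $3$ control colors with each of the $k$ colors of $\mathcal{P}$), which is precisely what you spell out. Your third paragraph supplies more care about snapshot consistency than the paper itself does, but the argument is sound and rests, as you note, on the fact that only the $T\to M$ step moves or recolors the simulated component, so the existing Lemmas~\ref{lem:nochangecolor}--\ref{lem:TM-ST-SM} carry over unchanged.
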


% \textcolor{red}{self-stable}

\subsection{The impossibility of the simulation with $2$ colors}

In this subsection, we show that any simulation of two $\OB^{RS}$ robots by two $\LU^{S}$ robots with two colors is impossible. Thus, the three-color simulation in the preceding subsection is optimal with respect to the number of colors.  

%\begin{figure}[!t]
\begin{algorithm}[H]
    
    \caption{Protocol SIM-2$^{RS}_S$ with two colors $X$ and $Y$}\label{alg4}
   
    State \textit{Look} 
    \begin{algorithmic}[0] %1で行番号付与
    
    \STATE $ my.light $%\colon the \,color\, of\, robot\, x$
    \STATE $ other.light:$ the other robot's light.
%    \STATE \hspace{2mm} $ x.step \colon my \, light \, use \, for \, Algorithm \, Sim$
%    \STATE \hspace{2mm} $ y.step \colon other \, robot`s \, light \, use \, for \, Algorithm \, Sim$
%    \STATE \hspace{2mm} $ x \, cannot \, see \, my \, own \, color$
    \vspace{3mm}
    \end{algorithmic}
    
    State \textit{Compute} 
    \begin{algorithmic}[1]
    \STATE $my.des \leftarrow my.pos$
    \IF {$my.light = X$ {\bf and} $other.light = X$}
        \STATE action $\cal{A}$$_0$ %$ \in \{$  "Execute ${\cal{P}}"$ or "noting"$\}$
        \STATE $my.light \leftarrow C_0$
    \ELSIF {$my.light = X$ {\bf and} $other.light = Y$}
        \STATE action $\cal{A}$$_1$ %$ \in \{$  "Execute ${\cal{P}}"$ or "noting"$\}$
        \STATE $my.light \leftarrow C_1$
       
    \ELSIF {$my.light = Y$ {\bf and} $other.light = X$}
        \STATE action $\cal{A}$$_2$ %$ \in \{$  "Execute ${\cal{P}}"$ or "noting"$\}$
        \STATE $my.light \leftarrow C_2$

    \ELSIF {$my.light = Y$ {\bf and} $other.light = Y$}
        \STATE action $\cal{A}$$_3$ %$ \in \{$  "Execute ${\cal{P}}"$ or "noting"$\}$
        \STATE $my.light \leftarrow C_3$
    
    \ENDIF
    \vspace{3mm}
    \end{algorithmic}
 \end{algorithm}

Let $a$ and $b$ be two $\LU^S$ robots with two colors $X$ and $Y$. Since any action in any simulation protocol depends on only own color and the other's color, when $\alpha$-robot observes $\beta$-robot,
determining action $\cal{A}  \in \{$  "Execute $\cal{P}"$, "no action"$\}\in \}$ and the next color $\gamma$ defines a simulating protocol, where $\alpha, \beta, \gamma \in \{X,Y\}$ (Algorithm~\ref{alg4}).
For example, in Algorithm~\ref{alg4} setting $\cal{A}$$_0=$ execute $\cal{P}$, $\cal{A}$$_i=$ no action ($i=1,2,3$) and $C_0=Y$, $C_i=X (i=1,2,3)$,
one simulating protocol is defined. However, this protocol cannot simulate \RSY\ by $\LU^{S}$ robots, because considering a schedule that only $a$ robot is activated, $a$ performs $\cal{P}$ consecutively and so it violates \RSY.

There are $2^8$ simulating protocols including meaningless and these are all protocols which simulate two $\LU^{RS}$ robots by two $\LU^S$ robots with two colors, and we can verify that schedules all the simulating protocols produce violate \RSY.
% \textcolor{red}{we can write all cases.}
Thus, we obtain the following theorem.

\begin{theorem}\label{th:optimalSIM-2RSA}
SIM-2$^{RS}_A$ is an optimal simulating protocol with respect to the number of colors.
\end{theorem}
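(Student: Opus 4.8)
The plan is to prove optimality by pairing the positive construction already in hand with a finite impossibility argument, using the elementary monotonicity that a $c$-color protocol is also a valid $(c+1)$-color protocol (never use the extra color); hence it suffices to show that $3$ colors suffice and that $2$ do not. The first half is exactly Theorem~\ref{th:Sim-RS-by-AS-2robots} together with Corollary~\ref{co:Sim-RS-by-AS-2robots}, which give that SIM-$2^{RS}_A$ is a correct $\LU_3^{A}$ protocol simulating $\OB^{RS}$ with two robots. For the second half it is enough --- and in fact stronger --- to rule out $2$-color \emph{\SSY} simulators, since every \SSY\ schedule is a special \ASY\ schedule, so a correct $2$-color \ASY\ simulator would also be a correct $2$-color \SSY\ one.

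For the impossibility I would first set up the exhaustive framework. Because a robot's action depends only on the pair (own color, other robot's color), a $2$-color simulator is completely described by Algorithm~\ref{alg4}: for each of the four pairs in $\{X,Y\}^{2}$ one picks an action in $\{\text{``execute $\mathcal{P}$''},\ \text{``no action''}\}$ and a next color in $\{X,Y\}$, giving $2^{8}$ candidate protocols; renaming colors, the initial configuration is $[(X,X)]_{0}$. With only two colors and two robots, every configuration lies in $\{(X,X),(X,Y),(Y,X),(Y,Y)\}$ and, under \SSY, the evolution is deterministic once the adversary fixes in each round which robot(s) it activates. I would then record the two properties that a correct simulator must possess and that the schedules below will violate: under every fair \SSY\ schedule the induced activation sequence of $\mathcal{P}$ must be a \emph{valid and fair} \RSY\ sequence, so \textbf{(i)} each of $a,b$ executes $\mathcal{P}$ infinitely often (and no simulated round is empty), and \textbf{(ii)} as soon as one simulated round activates a single robot we are in the Disjoint-phase, hence every later simulated round is a singleton and these alternate strictly between $a$ and $b$; in particular one robot may not execute $\mathcal{P}$ in two consecutive simulated rounds.

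The core argument then runs as follows. Drive the simulator fully synchronously from $[(X,X)]_{0}$: the configuration stays of the form $(c,c)$ and the sequence $c_{0}=X$, $c_{j+1}=C(c_{j},c_{j})$ is eventually periodic; by \textbf{(i)} some color $\gamma$ in its eventual cycle has action $\text{``execute $\mathcal{P}$''}$ against itself, for otherwise $\mathcal{P}$ is executed only finitely often. Run fully synchronously until the configuration is $[(\gamma,\gamma)]$ and then activate only $a$, repeatedly. Since the action of $\gamma$ against $\gamma$ is ``execute'', $a$ executes $\mathcal{P}$ in the first such round, and --- as there are only two colors --- its color-orbit with $b$ frozen at $\gamma$ is eventually periodic. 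Either \textbf{(a)} $a$ executes $\mathcal{P}$ a second time while only $a$ is active, so the $\mathcal{P}$-execution contains two consecutive simulated rounds equal to $\{a\}$ (any activations in between merely recoloured $a$ and executed no $\mathcal{P}$), contradicting \textbf{(ii)}; or \textbf{(b)} $a$ reaches a color from which, against $b$'s color, it keeps recolouring without ever re-executing $\mathcal{P}$, and then fairness forces $b$ to be activated, whereupon one reruns the same dichotomy from the resulting joint configuration with the roles of $a$ and $b$ exchanged. Iterating, every branch ends either with two consecutive $\{a\}$ (or $\{b\}$) rounds or with a joint configuration from which no (necessarily fair) continuation can make both $a$ and $b$ execute $\mathcal{P}$ infinitely often --- a contradiction with \textbf{(ii)} or \textbf{(i)} respectively. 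As there are only $2^{8}$ tables and the walk through configurations is bounded, this is a finite, indeed mechanically checkable, case analysis, and the symmetries $a\leftrightarrow b$ and $X\leftrightarrow Y$ cut the work roughly by a factor of four. Hence no $2$-color protocol simulates \RSY\ for two robots, and with the $3$-color construction SIM-$2^{RS}_A$ is color-optimal.

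The main obstacle I anticipate is the book-keeping in case \textbf{(b)}: when $a$ stalls, the subsequent change of $b$'s color may ``unstall'' $a$, so one must track the joint configuration rather than either robot alone and argue carefully that, in the sub-branch where neither robot can ever be unstalled, $\mathcal{P}$ is executed only finitely often regardless of the remaining fair schedule. The other delicate point is to pin down ``simulated round'' precisely enough that the intermediate activations in which a robot only recolours (executing no step of $\mathcal{P}$) cannot be used to separate two otherwise consecutive singleton rounds; once that is fixed, the residual configuration walks over the $2^{8}$ tables are routine.
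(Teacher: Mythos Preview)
Your framework matches the paper's exactly: the positive half is Theorem~\ref{th:Sim-RS-by-AS-2robots}, the reduction to \SSY\ is the right move, and the lower bound is cast as a finite search over the $2^{8}$ tables of Algorithm~\ref{alg4}. The paper itself does no more than set up that table, give one illustrative bad protocol, and assert that the remaining cases can be checked; so your proposal is at least as detailed as the paper's own argument.

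Where you go beyond the paper is the attempt to avoid brute force via the solo-alternation schedule. The idea is sound, and in fact a short case analysis (using that with two colors every solo orbit has period at most~$2$) shows your schedule \emph{always} produces either two consecutive singleton rounds of the same robot or only finitely many executions of~$\mathcal{P}$. The one genuine slip is in how you phrase the terminal case: you claim the iteration reaches ``a joint configuration from which no (necessarily fair) continuation can make both $a$ and $b$ execute~$\mathcal{P}$ infinitely often''. That is too strong and is false in general --- e.g.\ with $(\gamma,\gamma)\!:\text{exec}\to\delta$, $(\delta,\gamma)\!:\text{noop}\to\delta$, $(\gamma,\delta)\!:\text{noop}\to\delta$, $(\delta,\delta)\!:\text{noop}\to\gamma$, your iteration stalls at $(\delta,\delta)$, yet activating \emph{both} robots there returns to $(\gamma,\gamma)$ and resumes execution. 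The fix is simply to drop the universal claim: you are the adversary choosing the schedule, so it suffices that \emph{your} solo-alternation schedule is fair and, under it, only finitely many executions occur. With that rewording your argument is complete and in fact cleaner than the paper's bare appeal to exhaustive verification.
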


%\subsection{Optimal Simulation of $\LU$\ in \RSY\ by $\FC$\ in \RSY}

%\input{Sim-LUMI-by-FCOM.tex}

\section{Concluding Remarks}\label{sec:conclusion}
%\textcolor{red}{===============additional part (conclusion.tex)============}
In this paper, we discuss efficient protocols for simulating \RSY\ under \ASY\ or \SSY\ for $\LU$ robots. 
In particular, for the simulation of \RSY\ under \SSY\, we have significantly reduced the number of colors previously required. Also, in the simulation of \RSY\ under \ASY, we have achieved the simulation with the same number of colors as used in previous simulations of \SSY\ under \ASY. Furthermore, for the case of $n=2$, we have realized the simulation of \RSY\ under \ASY\ with an optimal number of colors. We have also shown that all our proposed protocols are self-stabilizing and their self-stabilization can be done without increasing the number of colors. An outstanding issue is the reduction of the number of colors needed for simulating $\LU$ robots under \RSY\ for $\FC$ robots.

\newpage
%\end{thebibliography}
%\bibliographystyle{plain}
\bibliographystyle{plainurl}
\bibliography{referenceorg}

\clearpage
%\setcounter{lemma}{0}
%\setcounter{theorem}{0}
% \section*{Appendix}
% \begin{appendix}

% \section{4-Color Simulation of \RSY\ by \SSY}\label{app:4Col-RS-SS}

% \subsection{The Correctness of SIM$^{RS}_S$}\label{app:4Col-RS-SS-1}
% % \input{4Col-RS-SS-app}
% \subsection{The correctness of ss-SIM$^{RS}_S$}\label{app:4Col-RS-SS-ss-2}
% % \input{4Col-self-stable-app}

% \setcounter{lemma}{8}
% \setcounter{theorem}{8}
% \section{5-Color Simulation of \RSY\ by \ASY}\label{app:5Col-RS-AS}
% \subsection{The Correctness of SIM$^{RS}_A$}\label{app:5Col-RS-AS-1}
% % \input{5Col-RS-AS.tex}
% \subsection{The Correctness of ss-SIM$^{RS}_A$}\label{app:5Col-RS-AS-2}
% % \input{5Col-self-stable-app}
% %\input{5Col-RS-AS-app}

% \section{The correctness of SIM-$2^{RS}_A$}\label{app:SIM2-RS-A}
% \label{lab:SIM2-RS-A}
% % \input{ProofofSIM2RSA}

% %\section{Proofs of the correctness of SIM-$2^{\LU^{RS}}_{\FC^{RS}}$}
% \label{lab:SIM2-LUMIRS-FOCMRS}

% %\input{ProofofSIM2LUMIRSFCOMRS}
% \end{appendix}

\end{document}